\newtheorem{definition}{Definition}
\newtheorem{proposition}{Proposition}
\setlist[itemize]{leftmargin=2.2em}
\newcommand{\rtcp}[1]{\tcp*[r]{#1}}
\let\oldnl\nl% Store \nl in \oldnl
\newcommand{\nolinum}{\renewcommand{\nl}{\let\nl\oldnl}}% Remove line number for one line
\definecolor{auburn}{rgb}{0.43, 0.21, 0.1}
\definecolor{pastelblue}{rgb}{0.0, 0.35, 0.9}
\newcommand{\flac}{\ensuremath{\textsf{FLAC}}}
\newcommand{\flacFF}{\ensuremath{\textsf{FLAC}_{\mathsf{FF}}}}
\newcommand{\flacCF}{\ensuremath{\textsf{FLAC}_{\mathsf{CF}}}}
\newcommand{\flacNF}{\ensuremath{\textsf{FLAC}_{\mathsf{NF}}}}
\newcommand{\rlsm}{\ensuremath{\mathrm{RLSM}}}
\newcommand{\result}[1]{\ensuremath{\langle #1 \rangle}}
\def\levelFF{\ensuremath{L_{\mathsf{FF}}}}
\def\levelCF{\ensuremath{L_{\mathsf{CF}}}}
\def\levelNF{\ensuremath{L_{\mathsf{NF}}}}
\def\numRunCF{\ensuremath{\alpha_{\mathsf{CF}}}}
\def\numRunNF{\ensuremath{\alpha_{\mathsf{NF}}}}
\def\levelSet{\ensuremath{\mathcal{L}}}
\def\eventSet{\ensuremath{\mathcal{E}}}
\def\resultsT{\ensuremath{\mathcal{R}_T}}
\def\nodesT{\ensuremath{\mathcal{C}_T}}
\def\S{\ensuremath{\mathcal{S}}}
\def\no{\ensuremath{No}}
\def\yes{\ensuremath{Yes}}
\def\abort{\ensuremath{Abort}}
\def\commit{\ensuremath{Commit}}
\def\undecided{\ensuremath{Undecide}}
\begin{document}
\title[\flac: A Robust Failure-Aware Atomic Commit Protocol for Distributed
  Transactions (Extended Version)]{\flac: A Robust Failure-Aware Atomic Commit Protocol\\ for
  Distributed Transactions (Extended Version)}

%%
%% The "author" command and its associated commands are used to define the
%% authors and their affiliations.
% \author[Hexiang Pan, Quang-Trung Ta, Meihui Zhang, Yeow Meng Chee, Gang Chen, Beng Chin Ooi]{
%   Hexiang Pan$^1$ \quad Quang-Trung Ta$^1$ \quad  Meihui Zhang$^3$ \quad Yeow Meng Chee$^1$ \quad Gang Chen$^2$  \quad
%   Beng Chin Ooi$^1$}
% \affiliation{%
%   \institution{$^1$National University of Singapore \quad
%     $^2$Zhejiang University \quad
%     $^3$Beijing Institute of Technology}
%   %\streetaddress{xxxx}
%  %\city{xxxx}
%   %\state{xxxx}
%  % \postcode{xxxxxxx} \\
%   panh@u.nus.edu \quad ymchee@nus.edu.sg \quad cg@zju.edu.cn \\
%   \{ooibc, taqt, ruanpc\}@comp.nus.edu.sg \quad meihui\_zhang@bit.edu.cn
% }
\settopmatter{authorsperrow=3,printfolios=false}

\author{Hexiang Pan}
\affiliation{%
  \institution{National University of Singapore}
}
\email{panh@u.nus.edu}

\author{Quang-Trung Ta}
\affiliation{%
  \institution{National University of Singapore}
}\email{taqt@comp.nus.edu.sg}

\author{Meihui Zhang}
\affiliation{%
  \institution{Beijing Institute of Technology}
}
\email{meihui\_zhang@bit.edu.cn}

\author{Yeow Meng Chee}
\affiliation{%
  \institution{National University of Singapore}
}
\email{ymchee@nus.edu.sg}

\author{Gang Chen}
\affiliation{
 \institution{Zhejiang University}
 }
 \email{cg@zju.edu.cn}

\author{Beng Chin Ooi}
\affiliation{%
  \institution{National University of Singapore}
}\email{ooibc@comp.nus.edu.sg}

\begin{abstract}
In distributed transaction processing, atomic commit protocol (ACP) is used to
ensure database consistency.
With the use of commodity compute nodes and networks, failures such as system
crashes and network partitioning are common.
It is therefore important for ACP to dynamically adapt to the operating
condition for efficiency while ensuring the consistency of the database.
Existing ACPs often assume stable operating conditions,
% , such as two-phase commit or three-phase commit,
hence, they are either non-generalizable to different environments or slow in
practice.

In this paper, we propose a novel and practical ACP, called
Failure-Aware Atomic Commit ({\flac}).
In essence, {\flac} includes three protocols, which are specifically
designed for three different environments: (i) no failure occurs, (ii)
participant nodes might crash but there is no delayed connection, or (iii) both
crashed nodes and delayed connection can occur.
It models these environments as the failure-free, crash-failure, and
network-failure robustness levels.
During its operation, {\flac} can monitor if any failure occurs and dynamically
switch to operate the most suitable protocol, using a robustness level state
machine, whose parameters are fine-tuned by reinforcement learning.
Consequently, it improves both the response time and throughput, and effectively
handles nodes distributed across the Internet where crash and network failures
might occur.
% nodes might crash and
% network connections might be delayed.
%
We implement {\flac} in a distributed transactional key-value storage system
based on Google Percolator and evaluate its performance with both a micro
benchmark and a macro benchmark of real workload.
The results show that {\flac} achieves up to 2.22x throughput improvement and
2.82x latency speedup, compared to existing ACPs for high-contention workloads.
\end{abstract}

%%% Local Variables:
%%% mode: latex
%%% TeX-master: "../main.tex"
%%% End:

\maketitle

\section{Introduction}

%\phxsays{Need to show that FLAC is a protocol can be adjusted for different requirements.}

%\phxsays{We note that FLAC is not suitable for xxx.}

Database systems that support critical operations require all transactions
% logical units bundling several data accesses,
to guarantee the ACID properties (atomicity, consistency, isolation, durability)
\cite{gray1992transaction}.
Among them, the atomicity property, which mandates a transaction
must happen in its entirety, is costly to guarantee, especially in sharded or
distributed databases.
Sharded databases partition states into disjoint computing nodes for horizontal
scalability.
Consequently, when a transaction's accesses span across nodes, a new challenge
arises on how nodes agree on a common status of the transaction.
To address this problem, atomic commit protocols (ACPs) such as
  two-phase commit (2PC) \cite{2PC} and three-phase commit (3PC) \cite{3PC} have been
proposed.
%
% To address this problem, atomic commit protocols (ACPs) \trungedit{such as
%   two-phase commit (2PC) \cite{2PC}, three-phase commit (3PC) \cite{3PC}, Easy
%   Commit (EC) \cite{gupta2018easycommit, gupta2020efficient}} have been
% proposed.
%
However, one fundamental drawback of ACPs is that they work with a fixed assumption
on the node behavior and network connectivity, i.e., the operating conditions.
For example, 2PC only ensures termination in failure-free environments,
rendering services unavailable even in the event of two-node crashes \cite{2PC,
  gupta2018easycommit}.
3PC overcomes this problem with an additional message round trip \cite{3PC,
  gupta2020efficient}.
%
% For higher availability, protocols like Paxos Commit~\cite{PAX} and
% G-PAC~\cite{maiyya2019unifying} replicate nodes to ensure the reached decision
% can always be found on alive nodes and transactions can commit with limited
% crash failures.
%
However, the overhead of this additional message round trip
% \trungedit{in 3PC
%   might}
often overshadows the benefit of the termination property, especially when the
failure-free run constitutes the bulk of system operating time.

To alleviate the above problems, we design {\flac}, a
\underline{\textbf{f}}ai\underline{\textbf{l}}ure-aware
\underline{\textbf{a}}tomic \underline{\textbf{c}}ommit protocol, which can
monitor and adapt its execution to failures happening in the operating
environments to improve its performance.
Similar to existing ACPs~\cite{ML12PC, presumedeither2PC, 2PC, 3PC,
  gupta2018easycommit}, our system model consists of a coordinator node and
different participant nodes.
The coordinator drives the protocol while participants vote on the transaction
status.
To make {\flac} operate efficiently in environments where different types of
failures can occur, we design it to handle transactions from clients in four
steps.
First, the coordinator sends \textit{Propose} messages to participants with its
current assumption on operating conditions, so that participants can select the
corresponding protocol according to the hinted conditions.
Second, the participants execute the selected protocol and send the execution
results back to the coordinator.
Third, the coordinator performs a validation to detect whether any unexpected
failure happens during the execution so that it can adjust {\flac} to adapt to
the new environment.
This step is conducted by a component called the robustness-level state machine
({\rlsm}) resided inside the coordinator.
Finally, the coordinator makes the final decision and sends the results to all
undecided participants so that they can commit or abort the transaction
accordingly.
In our work, we follow state-of-the-art approaches \cite{2PC, 3PC,
  gupta2018easycommit, gupta2020efficient, HFC} to consider two types of
failures that can happen in the operating environment of ACPs.
Consequently, three types of environments can occur during the execution of
{\flac}, where (i) no failure occurs; (ii) participant nodes might crash but
there is no delayed network connection, or (iii) both crashed nodes and delayed
connections occur.
We have implemented in {\flac} a family of three protocols that can operate
effectively in the three aforementioned environments.
These protocols correspond to the three robustness levels: failure-free,
crash failure, and network failure, based on the expected failure set defined
in~\cite{HFC}.
{\flac} can switch between the three protocols based on closely monitoring
its operating environment.
In any setting, {\flac} can always ensure the safety~\cite{gupta2018easycommit}
that no two participants will decide differently, and a {\commit} decision is
made only if all participants have voted {\yes}.
Furthermore, it can guarantee the liveness property when the correct
protocol is selected for a given operating condition.
Finally, when there is a change in the operating condition (e.g., a node crashes or recovers),
the {\rlsm} component in the coordinator will adjust each node's robustness level accordingly.
%
% Regarding the possible fast recurrence of failures in unstable environments, we equip \rlsm~ with RL-tuned parameters to capture the degree of failure recurrence.
% %
% By using learned parameters, we design {\rlsm} to take a cautious approach when adjusting nodes' robustness levels to lenient ones.
% We design \rlsm~ to take a cautious approach with reinforcement-learning-tuned (RL-tuned) parameters when adjusting nodes' robustness levels to lenient ones.
% Finally, when there is a change in the operating condition (e.g., a node crashes
% or recovers), \trungedit{the {\rlsm}
%   component in the coordinator will adjust the robustness level of {\flac}
%   accordingly to adapt to the changes in the environment.}
% by using parameters tuned by reinforcement learning.
% \flac~will take a cautious approach with
% reinforcement-learning-tuned (RL-tuned) parameters to adjust its robustness
% levels accordingly.
%
% Overall, {\flac} improves normal processing efficiency at the cost of potential blocking when assumptions are wrong.
% Overall, {\flac} improves normal processing efficiency at the cost of potential blocking when assumptions are wrong.
%
More specifically, when the operating condition changes, the previously selected protocol may not be appropriate for the new environment.
This incorrect selection of protocol can cause potential blocking of transactions.
To alleviate the blocking problem,
% \trungedit{the parameters used by {\rlsm} to
%   adjust the robustness level are auto-tuned by reinforcement learning to detect
%   unexpected failures more effectively.}
we equip \rlsm~ with RL-tuned parameters to capture failure's degree of recurrence, and let {\rlsm} cautiously adjust nodes' robustness levels to lenient ones.
As a result, {\flac} can consistently provide high performance in all environments.
% The resulting protocol provides high performance even in unstable environments.

In summary, we make the following contributions:

\begin{itemize}[nolistsep, left=0.5em]
  \item We propose {\flac}, a robust failure-aware commit
  protocol for the transaction atomicity under different operating conditions.

  \item We design a robustness-level state machine ({\rlsm}), a provable state
  transition model to infer operating conditions for {\flac}.

  \item We apply reinforcement learning methods to fine-tune parameters of {\rlsm}, which improves the
  performance of {\flac} in unstable environments.

  \item We implement a key-value store system with transaction support based on Google Percolator \cite{GOO} as a test bed to evaluate {\flac} and other atomic commit protocols (ACPs).

  \item We conduct a comprehensive evaluation of {\flac} with state-of-the-art ACPs using the above system with
  YCSB-like micro-benchmark and TPC-C workloads. Our experiments show
  that {\flac}'s ability to adapt to different operating conditions results in a significant performance gain compared to other ACPs.
\end{itemize}

The rest of this paper is organized as follows.
Section \ref{background} introduces the background of ACPs.
%
% \phxedit{
Section \ref{sec:systemModel}, \ref{sec:subProtocols} present the design of our
{\flac} protocol: its system model, protocols, and transaction processing
system.
Section \ref{rlsm} introduces the failure detector {\rlsm}, and shows how
{\flac} combines three protocols with it. %}
Section \ref{exp} exhibits our performance evaluation of {\flac} and existing ACPs.
We then review related works in Section~\ref{related}, discuss the limitations
of {\flac} and future work in Section \ref{sec:short-discussion}, and finally
conclude in Section~\ref{conclusion}.

%%% Local Variables:
%%% mode: latex
%%% TeX-master: "../main"
%%% End:

\section{Background}
\label{background}

% In this section, we present the problem statement
% and characterisation of expected failures for \flac.

% \phxsays{TODO: merge this paragraph into 2.1: ACP --> what is a good ACP}
% \phxedit{Distributed transactions play one of the major bottlenecks of distributed databases for two reasons~\cite{DBLP:journals/corr/abs-2206-00623, DBLP:journals/pvldb/YuXPSRD18}.
% %
% Under low contention workloads, distributed transactions suffer from long latency caused by network delays when accessing remote data, which is reflected by the \textbf{coordinator-side delay} in coordinator-participant models.
% %
% On the other hand, when contention grows high, the long blocking window of distributed transactions severely hurts system concurrency and thus becomes the new bottleneck.
% %
% The blocking windows begin at the time when transactions arrive to participants and stop when they got committed or aborted locally, which we call \textbf{participant-side delays}.
% %
% According to the above observations, we design {\flac} to obtain theoretically optimal participant-side delay~\cite{HFC} and a low coordinator-side delay in transaction normal processing while ensuring protocol safety and eventual liveness~\cite{PAX}.
% }

\subsection{Non-Blocking Atomic Commit}
\label{nbac}
The atomic commit problem is for a set of nodes in a distributed database system
to decide to commit or abort a transaction~\cite{DataI}.
The decision is always based on the ``votes'' from nodes indicating their local
checks for the transaction.
A node votes {\no} if it wants to abort a transaction due to local invalid
executions that violate ACID properties, process crashes, etc. Otherwise, the
node votes {\yes} to inform other nodes it is ready to commit. Once collecting
enough information, each node will decide to commit or abort locally.

% \trungedit{State-of-the-art research evaluates the efficiency and correctness of
%   atomic commit protocols (ACPs) based on three properties: \textit{agreement},
%   \textit{validity}, and \textit{termination} \cite{NBAF, WFD, Skeen, HFC}.
% %
%   A protocol that satisfies all these properties is called a
%   \textit{non-blocking} ACP (Definition \ref{def1}).} \phxsays{only
%   correctness?} \trung{I revised the previous text. Is Def 1 is only about
%   correctness?} \phxsays{Yes}
Existing research defines the correctness and liveness of
  atomic commit protocols (ACPs) as three properties: \textit{agreement},
  \textit{validity}, and \textit{termination} \cite{NBAF, WFD, Skeen, HFC}.
  We formally describe these properties in Definition \ref{def1}.
  An atomic commit protocol (ACP) is always expected to satisfy the
  \textit{agreement} and \textit{validity} properties for protocol correctness,
  while a non-blocking ACP also needs to ensure the \textit{termination} property
  for liveness (Definition \ref{def1}).

\begin{definition}[Non-Blocking Atomic Commit \cite{NBAF, WFD, Skeen, HFC}]
  \label{def1}
  Let $\S$ be a distributed system of $n$ nodes $C_1, ..., C_n$ and $\Pi$
  be an atomic commit protocol (ACP) defined by two events:

  \begin{itemize}[leftmargin=1.5em]
    \item Propose: $C_1, ..., C_n$ proposes values $v_1, ..., v_n$
    representing the vote ($v_i$ is either {\no} or {\yes}).

    \item Decide: $C_1, ..., C_n$ outputs the decided values $d_1, ...,
    d_n$ ($d_i$ is either {\abort} or {\commit}).
  \end{itemize}

  \noindent An execution of $\Pi$ is said to solve the non-blocking atomic
  commit problem iff it satisfies three properties:

  \begin{itemize}[leftmargin=1.5em]
    \item Agreement: no two nodes decide differently ($\forall i, j.~ d_i =
    d_j$).

    \item Validity: a node decides {\abort} only if some nodes propose {\no} or
    a failure occurs; it decides {\commit} only if all nodes propose {\yes}.

    \item Termination:
    every correct node will eventually decide regardless of failures.
    \end{itemize}

    \noindent Finally, $\Pi$ is said to be a non-blocking ACP in $\S$ if every
    execution of $\Pi$ in $\S$ solves the non-blocking atomic problem.
\end{definition}

\subsection{The Failure Model and Safe Assumptions}

\label{failures}

Real systems often face different kinds of failures, such as process crashes,
message losses, and Byzantine faults~\cite{demillo1982cryptographic, IMPOS}.
It is therefore not easy to consider all possible failures when designing systems.
Instead, the designer often proclaims a set of expected failures as basis.
In this work, we adopt the failure model used by existing ACPs \cite{2PC, 3PC,
  12PC, gupta2018easycommit, Skeen, HFC} that crash failure is considered and
point-to-point network communications never fail but could be
delayed.
We also employ their assumption that the receiver node can reliably detect the
crash failure of the sender node by a timeout, which makes it possible
to differentiate between crashed nodes and slow nodes~\cite{IMPOS,
  DBLP:phd/us/Chandra93}.
We call this timeout as \textit{crash timeout}, i.e., the timeout
  which is used to detect crash failures.
%
% \trung{Can the above timeout be called ``system timeout''?}
% \phxsays{I cannot find a paper related to ACP that uses this term.}
%
Moreover, we also follow \cite{yan2020domino} to use \textit{network timeout} to
accommodate non-delayed messages that expose low variability of delays.
Note that \textit{network timeout} is smaller than \textit{crash
    timeout} because the former only accommodates non-delayed messages while the
  latter accommodates all messages.

In this work, we follow \cite{HFC} to classify the operating conditions into
three categories:

\begin{itemize}[nolistsep]
  \item[(i)] Failure-free environments: the environments which do not contain any
  crashed nodes or delayed messages.

  \item[(ii)] Crash-failure environments: the environments in which nodes might crash
  but messages are never delayed.

  \item[(iii)] Network-failure environments: the environments where the crashed nodes and
  delayed messages can both occur.
\end{itemize}

The above classification enables us to implement two optimizations in
ACPs for the failure-free and crash-failure environments.
%
% \phxedit{First, without crash failure, an ACP does not need to
% consider crash failures, thus can be optimized further for higher
%   efficiency.}
First, in a failure-free environment, an ACP does not need to consider crash
failures, thus can be optimized further for higher efficiency.
Second, in both failure-free and crash-failure environments where all
connections are non-delayed, a node can try to synchronize with others by
broadcasting a message and waiting for a short network
  timeout~\cite{yan2020domino} (one network delay), instead of sending
round-trip messages to synchronize (two network delays) like other
ACPs \cite{12PC, 2PC, 3PC, gupta2018easycommit}.
Moreover, since all messages
%can
arrive within a network timeout,
the message receiver can quickly assert
%%% ooibc: assert? == confirm?
%%% or assess
%%% phx: yes, confirm.
the crash failure of the sender.
It gives optimization opportunities for the ACP executions when some nodes
crash.

%%

% \notimportant{
There are also other failure models considered in distributed systems, such as
timing failures \cite{frolund1998quality, ouyang2016reducing}, gray failures
\cite{huang2017gray}, Byzantine failures \cite{castro2002practical,
  ruan2019fine, maiyya2020fides}.
A timing failure occurs when servers return correct but untimely responses
~\cite{cristian1991understanding}.
ACPs can treat timing failures occurring within the crash timeout
  as network failures and other timing failures as crash failures.
In contrast, gray failures happen when a system's unhealthy behavior is
observed by users but not by the system itself, and a Byzantine failure occurs
when nodes behave arbitrarily or send misleading messages.
These failures cannot be tolerated by both {\flac} and existing ACPs.
~\cite{2PC, 3PC, 12PC, gupta2018easycommit, Skeen, HFC}
% }

%%% Local Variables:
%%% mode: latex
%%% TeX-master: "../main"
%%% End:

\section{System Overview}
\label{sec:systemModel}

This section presents an overview of our {\flac} protocol: its
  architecture and transaction processing system.
% \trung{why do we call its ``framework?''. It's a system or a framework?}
% \phxsays{changed to system}

%
% \trung{Should we call Protocol Overview or System Overview?}
% \phxsays{You mean the title of section 3? To {\flac} Overview?}
% \trung{I think again and the title System Overview ok.}

\subsection{Architecture}

We design {\flac} using a typical transactional system architecture,
where one node is designated as the coordinator, while the other nodes serve as
participants.
Its system model is depicted in Figure~\ref{fig:STA}, where $C^*$ is
the coordinator and $\S =\{C_1, ..., C_n\}$ are the participants.
Each participant manages a local data store, while the coordinator serves as a
transaction coordinator which accepts and manages transactions from clients.
A transaction $T$ may read or update data stored in multiple nodes
which directly participate in $T$'s execution.
We denote these participant nodes as $\nodesT$ $(\nodesT \subseteq \S)$.

\begin{figure}[htbp]
  \centering
  \includegraphics[width=0.9\linewidth]{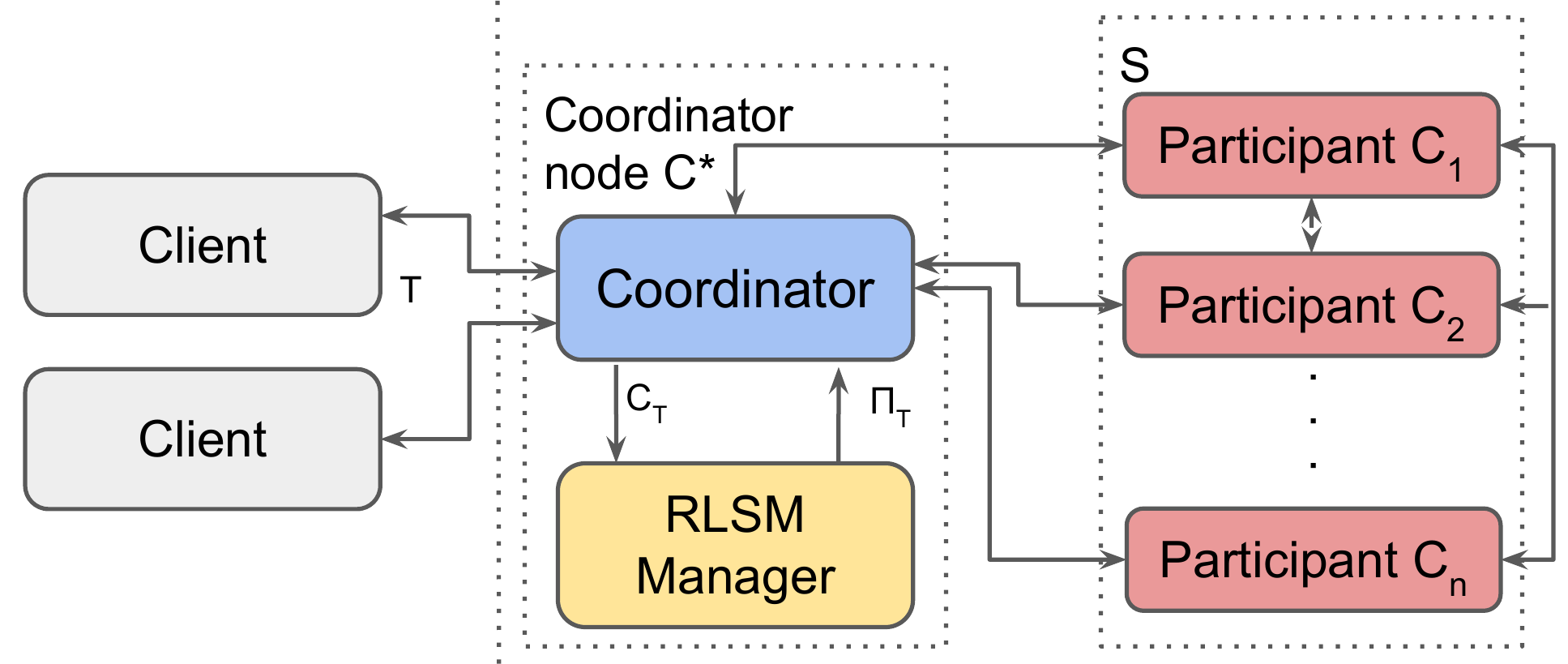}
%  \vspace{-0.5em}
  \caption{The {\flac} system model.}
%  \vspace{-3mm}
  \label{fig:STA}
\end{figure}%

% \begin{figure}[htbp]
%   \centering
%   \includegraphics[width=0.9\linewidth]{}
%   \vspace{-1.8em}
%   \caption{The {\flac} system model.}
%   \vspace{-5mm}
%   \label{fig:STA}
% \end{figure}%

Furthermore, {\flac} consists of three different protocols, namely
  {\flacFF}, {\flacCF}, {\flacNF}, which are dedicated respectively to a
failure-free, a crash-failure, and a network-failure environment.
In each execution, the coordinator $C^*$ first selects the most suitable
protocol, denoted as $\Pi_T$, based upon operating conditions of the
participants $\nodesT$.
Then, all nodes in $\nodesT$ will jointly execute $\Pi_T$ and try to reach their
local decisions for $T$.
Afterwards, they reply with their execution results to $C^*$.
These results, denoted as $\resultsT$, include the nodes' initial votes and
decisions.
If a node fails to make a decision ({\commit}/{\abort}), it is
called the \textit{undecided} participant.
Finally, the coordinator $C^*$ will decide to commit/abort the transaction based
on the collected results and send that decision to all undecided participants.

Since we consider both crash and network failures (Section \ref{background}),
some participants in $\nodesT$ might crash, or their replied results might not
reach the coordinator within the network timeout due to delay.
All other participants whose results can be well received by the coordinator are
called the responsive participants, and their results contribute to the result
set $\resultsT$.
Furthermore, each of the responsive participants can either (i) vote {\yes} and
decide {\commit}, or (ii) vote {\yes} and decide {\abort}, or (iii) vote {\yes}
and do not decide, or (iv) vote {\no} and decide {\abort}.
We denote these cases as $\result{\yes, \commit}$, $\result{\yes, \abort}$,
$\result{\yes, \undecided}$, and $\result{\no, \abort}$
\footnote{The two cases $\result{\no, \commit}$ and $\result{\no, \undecided}$
  will never occur since participants which vote {\no} will always decide
  {\abort} immediately (by the validity property in Definition~\ref{def1}).}.
%%
% \phxedit{
The necessity of collecting both votes and decisions to keep protocol safe and alive is justified in Section~\ref{upward} and Section~\ref{downwd}.
%, both decisions and votes are required to verify the validity property and ensure protocol liveness.
% }

We also equip the coordinator node $C^*$ with a component called robustness-level
state machine (RLSM) manager, which is responsible for determining the suitable
protocol $\Pi_T$ to execute $T$.
More specifically, we first represent three operating environments in
Section~\ref{failures} as three robustness levels: failure-free ($\levelFF$),
crash failure ($\levelCF$), and network failure ($\levelNF$).
% More specifically, we first classify the operating environment into three
% robustness levels: failure-free ($\levelFF$), crash failure ($\levelCF$), and
% network failure ($\levelNF$), \phxedit{as shown in Section~\ref{failures}}.
% %
% \phxsays{three environments has been discussed in Section 2.2}
%
During a transaction execution, {\flac} analyzes the results reported by
participant nodes to detect if any failure occurs.
If there is an unexpected failure occurring in a participant node
(e.g., network failure or crash failure), it will upgrade this
node's robustness level to a more stringent level.
Otherwise, {\flac} will record the total number of consecutive transactions
which have been executed successfully in this node.
If this number reaches a threshold representing the environment's stability,
{\flac} will downgrade the node's robustness level to a more lenient level.
Later in Section \ref{rlsm}, we shall describe how to use reinforcement learning
to fine-tune this threshold so that {\flac} can adapt well to changes of the
operating environment.

%%

% \notimportant{
% \begin{definition}[Robustness Level]
%   \label{def3}
%   The robustness level of a system $\S$ with regard to the execution of a
%   transaction $T$ by an atomic commit protocol is defined as follows:

%   \begin{itemize}
%     \item Failure-free level $(\levelFF)$: neither crash failure nor network
%     failure can occur during the execution of $T$.

%     \item Crash-failure level $(\levelCF)$: there may be crash failure, but no
%     network failure can occur during the execution of $T$.

%     \item Network-failure level $(\levelNF)$: both crash failure and network
%     failure can occur during the execution of $T$.
%   \end{itemize}

%   Among these three robustness levels, $\levelFF$ (failure-free) is
%     the most lenient, while $\levelNF$ (network-failure) is the most stringent
%     level.
% \end{definition}
% }

%%

In our system model, we follow existing works~\cite{gupta2018easycommit,
  skeen1982quorum, 3PC, 2PC} to assume \textit{reliable} but \textit{delayable}
connections between nodes.
%
% \trung{Why did you remove ``asynchronous''? I think it's better to keep them.
% It's used in other parts of the paper too.} \phxsays{It is only used to
% describe the protocol execution in other parts, not for the network model.
%  %
%   The use of the term asynchronous in the network model is very risky since it
%   is often related to unlimited delay, which is not the case in our paper.
%   Many papers have been proposed based on this assumption, so reviewers are
%   likely to question it (like in our first VLDB round).}
%
% \trung{Need to add a citation here to show that this system model is standard.}
% \phxsays{updated}
%Note that in our system model, we assume \textit{asynchronous} and
%\textit{reliable} connections between nodes.
%
Furthermore, although we currently separate the coordinator from the
participants, {\flac} can be easily extended to the setting where the
coordinator also serves as a participant node, like the system model proposed in
\cite{lamport2006fast}.
This setting is simply an optimized version of our current system model, in
which the connection between that participant and the coordinator happens in
the same node.
%
% \phxedit{
% {\flac} can also be used in replicated systems by adopting replicated ACPs as the most robust sub-protocol and letting other sub-protocols only involve the leader node of each replica group.
% %
% We continue this discussion in our technical report~\cite{flacsupplement}.
% }
%

We summarize the notations used in this paper in Table~\ref{tab:notation}.

\begin{table}[ht]
  \centering
  \caption{Symbols and descriptions}
  \vspace{-1em}
  \label{tab:notation}
  \begin{tabular}{llll}
    \toprule
    $\S$ & the system of $n$ participant nodes $C_1, ..., C_n$ \\

    $C^*$ &  the coordinator node \\

    $T$ &  the transaction sent to $C^*$ \\

    $\nodesT$ & the participant nodes contributing to the commit of $T$ \\

    $\resultsT$ & the tentative execution results of $\nodesT$ for $T$ \\

    $\Pi_T$ &  the protocol selected for $T$ \\

    \bottomrule
  \end{tabular}
\end{table}

\subsection{Transaction Processing}
\label{steps}
% \phxsays{Need to say that we do not switch protocol in one execution.}

%
{\flac} requires two phases to process a transaction $T$: the \textit{propose}
and the \textit{validate} phase.
They are described as follows.
%
% Here we detail the two-phase transaction processing system of {\flac} as follows.

\subsubsection{The propose phase.}
\label{propose}

The coordinator $C^*$ first selects a protocol $\Pi_T$ based on the most
stringent robustness level of participant nodes $\nodesT$ of $T$.
Then, it instructs all these nodes to execute $\Pi_T$ to process $T$ tentatively
or persistently.
Next, relevant participant nodes $\nodesT$ execute $\Pi_T$ trying to make local
decisions for $T$.
There are two scenarios for each participant node as follows.
Firstly, if the participant can make an {\abort} or {\commit} decision, it will
complete its local execution for $T$, add the decision in its log, and  roll back or persist the local
changes respectively.
Secondly, if the participant cannot make a decision (i.e., stays {\undecided}),
it will block its local transaction execution and wait for the message from the
coordinator or other participants to continue.
In any cases, all participant nodes in $\nodesT$ will eventually
reply their final states ({\commit}/{\abort}/{\undecided}) and initial votes
({\yes}/{\no}) to $C^*$.
The collection of decisions and votes constitutes the set of results
$\resultsT$, which is processed by {\flac} in the second phase:
the \textit{validate} phase.

\begin{figure}[htbp]
  \centering
  \includegraphics[width=\linewidth]{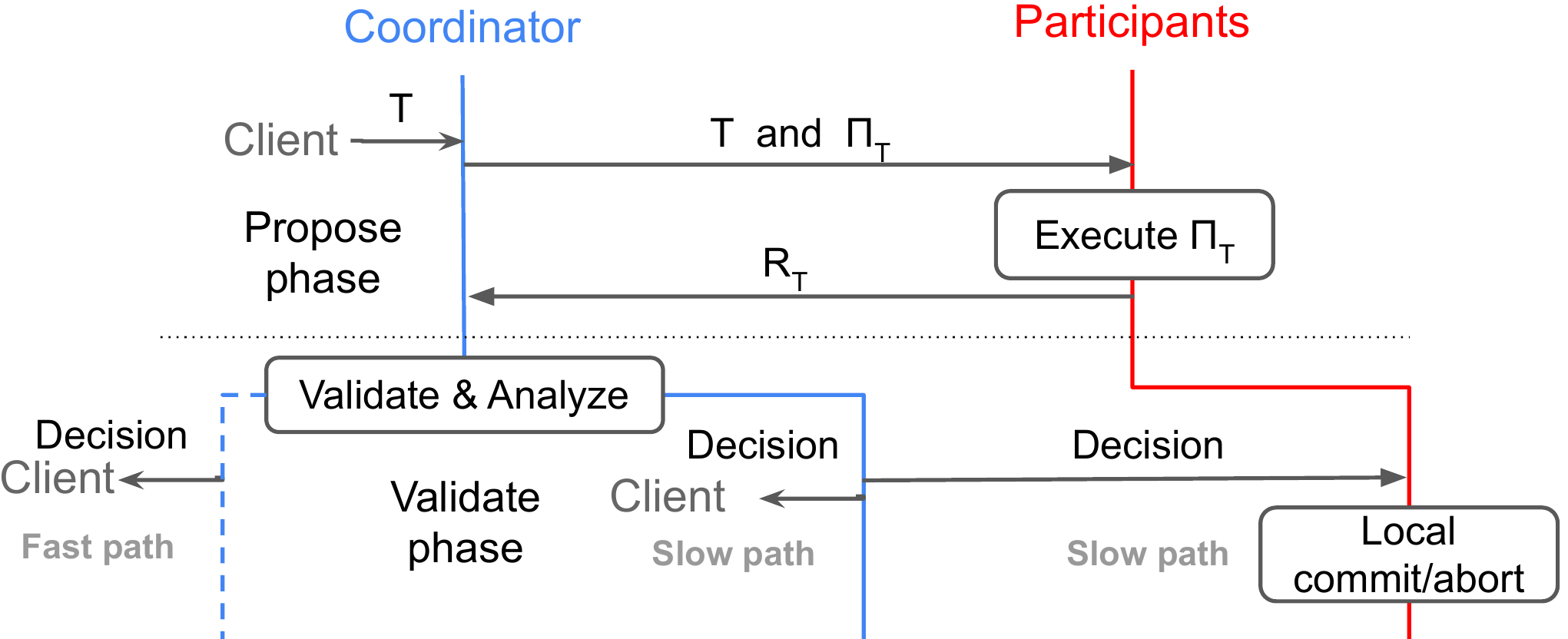}
  \caption{Overall timeline of {\flac}. \textit{Fast} path happens when all
    participants make an {\abort} or {\commit} decision.
    \textit{Slow} path occurs when some participants stay {\undecided}.
  }
%  \vspace{-4mm}
  \label{fig:SAB}
\end{figure}

\subsubsection{The validate phase.}
\label{validate}
In this phase, the coordinator $C^*$ performs two tasks concurrently to process
the result set $\resultsT$.
Firstly, $C^*$ analyzes the results in $\resultsT$ to detect if any failure
occurs and adjusts the robustness level of nodes in $\nodesT$ accordingly.
We will explain in detail this adjustment in Section~\ref{rlsm}.
Secondly, it commits or aborts the transaction $T$ in a \textit{fast}
  or a \textit{slow execution path}.
% \phxsays{Will cause misunderstanding? the final status of
% $T$ should be commit/abort}

%%

In particular, a \textit{fast path} is resulted when all nodes in $\nodesT$ have
decided to abort or commit the transaction $T$.
Since $T$ is already rolled back or made persist in the
  \textit{propose} phase at each participant node (Section~\ref{propose}), the participants have finished their executions of $T$, and thus the
  coordinator $C^*$ only needs to send the decision to clients and terminate
  the transaction $T$.
This path is illustrated by the blue dashed line in Figure~\ref{fig:SAB}.

In contrast, a \textit{slow path} is resulted when not all
participant nodes $\nodesT$ have decided to abort/commit the transaction $T$.
In this case, the coordinator $C^*$ dictates a decision according to $\resultsT$
and $\Pi_T$, and then notifies all participants.
Each participant, upon receiving the decision, will locally commit or abort $T$
on its storage.
After broadcasting all messages to notify the decision, the coordinator then
reports the decision to clients.
This path is illustrated by the red and blue solid lines; both labeled by
\textsf{``Slow path''} in Figure~\ref{fig:SAB}.

{\flac} executes most transactions in \textit{fast path} for efficiency. The reason is two-fold. On the coordinator side, the \textit{slow path} requires it to broadcast
  decisions to all {\undecided} participants, which slightly increases the
  latency.
  On the participants' side, the \textit{slow path} requires each participant to
  block and wait for decisions from other nodes, which introduces significant
  contention between concurrent transactions and reduces the throughput.
Therefore, to achieve high performance, {\flac} tries to execute most
transactions in \textit{fast path}.
%

%%% Local Variables:
%%% mode: latex
%%% TeX-master: "../main"
%%% End:

\section{Dedicated Protocols for Different Operating Environments}

\label{sec:subProtocols}

% \phxsays{TODO: make the paragraph consistent with the subsections after revising.}
% \phxedit{
This section describes the three dedicated protocols {\flacFF}, {\flacCF},
{\flacNF} that our {\flac} protocol uses to operate in three different
environments: failure-free, crash-failure, and network-failure.
% provides insights into how {\flac} adjusts its protocol for each robustness level.
%
We first present each protocol in detail and then prove their correctness.
Finally, we compare these dedicated protocols with existing ACPs and discuss their
strengths and weaknesses.
\subsection{{\flacFF} for Failure-Free Environment}
\label{LP1}

% \phxsays{efficiency evaluation here}

Distributed systems suffer from two major bottlenecks related to ACP~\cite{DBLP:journals/corr/abs-2206-00623, DBLP:journals/pvldb/YuXPSRD18}.
%
% \phxedit{
First, the long coordinator-side delay prolongs the latency of distributed transactions.
% }
%
% \trung{The word ``suffer'' is repeated twice.}
% \phxsays{updated}
%
Second, the long participant-side delay of ACP could block the execution of
concurrent transactions and limit system parallelism.
%
% \phxedit{
In the failure-free environment, {\flacFF} assumes no network failures or
crash failures among all participant and coordinator nodes.
This gives us opportunities to address both aforementioned challenges.
By adopting optimizations discussed in Section~\ref{failures}, we design {\flacFF} to take fewer message delays on
both the coordinator and the participants than existing ACPs like 2PC~\cite{2PC} in failure-free environments.

\begin{figure}[ht]
  \centering
  \includegraphics[width=\linewidth]{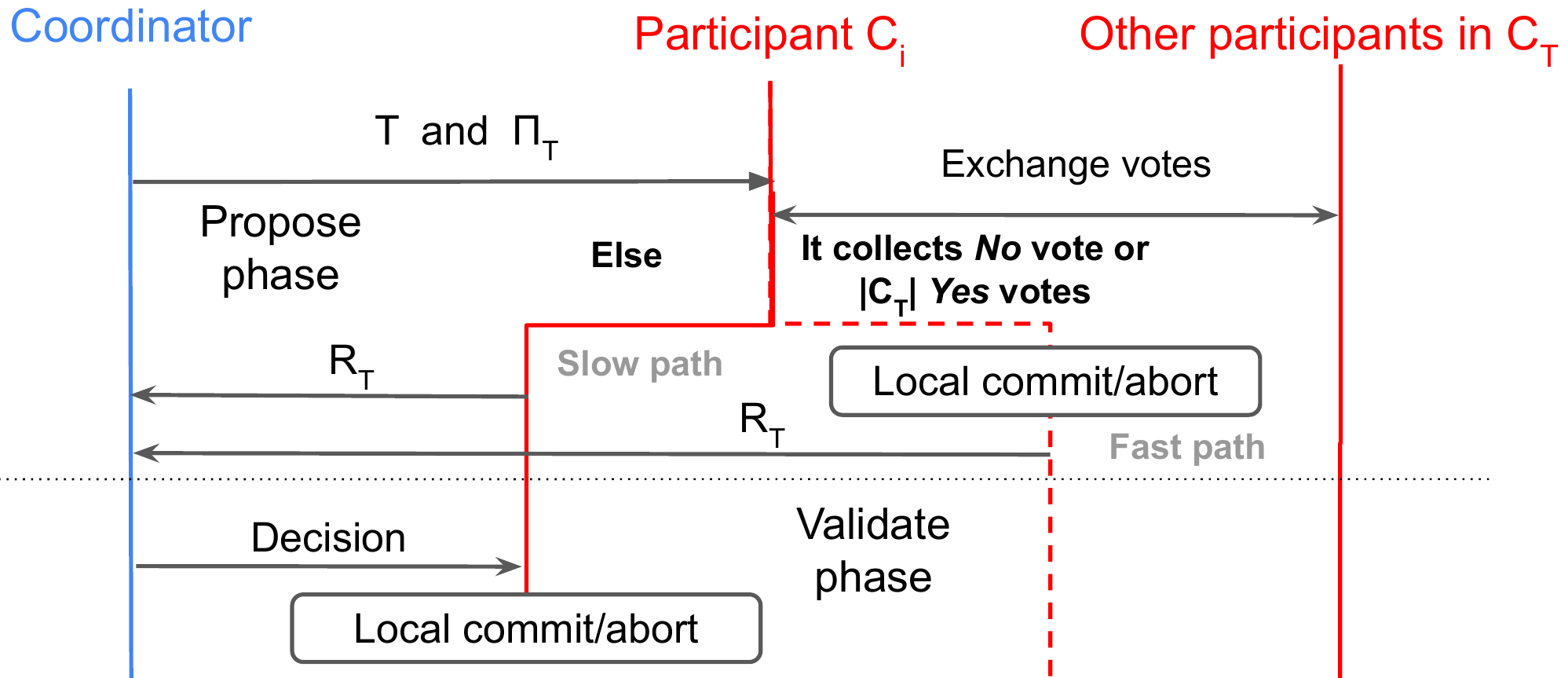}
  \caption{Timeline of {\flacFF}.}
  \label{fig:prac-ff}
\end{figure}

We describe the {\flacFF} protocol for the coordinator and the participants
in Algorithms \ref{alg:CO1} and \ref{alg:LP1}, respectively.
Its timeline is also illustrated in Figure~\ref{fig:prac-ff}.
%
% At this failure-free robustness level, {\flacFF} assumes no network failures or
% crash failures among all participant and coordinator nodes.
%
The coordinator initiates
% a {\flacFF}
an execution of the protocol on receiving a transaction $T$ from
clients.
It sends the $Propose$ message to all involved participants $\nodesT$ and then
waits for their execution results in $W_{C^*}$ time (Algorithm \ref{alg:CO1},
lines \ref{line:CO1Broadcast}--\ref{line:CO1Wait}).

\begin{algorithm}[h]
  \normalem
  \caption{{\flacFF} for the coordinator node $C^*$.}
  % \SetAlgoLined
  \SetAlgoNoLine
  \DontPrintSemicolon
  \label{alg:CO1}
  \KwData{$T$, $\nodesT$ -- the input transaction and its participant nodes.}
  \KwResult{the final global decision for $T$.}
  % \phxedit{Add $propose$ to log \label{line:CO1LogPropose}}
  % \trung{I suggest to remove all the operations on log to simplify the algo}
  % \\
  Broadcast $Propose$ messages with $(T, t_{\textrm{sent}}, C_T)$ to all $\nodesT$
  \label{line:CO1Broadcast}
  \\
  $W_{C^*} \gets max\{ U(C^*, C_i) + U(C_i, C_j) + U(C_j, C^*)| C_i, C_j \in C_T \}$
  \label{line:CO1Wait}
  \\
  $\resultsT \gets $ execution results for $T$ from $C_T$ that arrive in $W_{C^*}$
  time. \\
  \label{line:CO1GetResults}
  \If{$R_T$ \textup{contains} $v = {\abort}$ \textup{or} $v = {\commit}$}{
    \label{line:CO1CommitOrAbort}
    % \phxedit{Add $v$ to log} \\
    % Broadcast $v$ to other participants in $\nodesT$ who stay {\undecided}.
    Broadcast $v$ to nodes in $\nodesT$ who have not replied decisions.
    \label{line:CO1Slow} \\
    \Return{$v$} \label{line:CO1FastPath} }
  \If{$|\resultsT| = |\nodesT|$}
    { \label{line:AlgCO1AllUndecided}
      % \phxedit{Add {$commit$} to log}\\
      Broadcast {\commit} to all participants in $C_T$ \\
      \Return{\commit} }
    \label{line:AlgCO1Commit}
  \lElse{run termination protocol}
  \label{line:AlgCO1Termination}
\end{algorithm}

%%% Local Variables:
%%% mode: latex
%%% TeX-master: "../main"
%%% End:

\begin{algorithm}[h]
  \normalem
  \caption{{\flacFF} for each participant node $C_i$.}
  % \SetAlgoLined
  \SetAlgoNoLine
  \DontPrintSemicolon
  \label{alg:LP1}
  \KwData{$T$, $\nodesT$: the input transaction and its participant nodes.\\
    \hspace{2.3em} $t_{\textrm{sent}}$: the sent time of the $Propose$ message by $C^*$. \\
    \hspace{2.3em} $t_{\textrm{recv}}$: the received time of the $Propose$ message
    by $C_i$.}
  \KwResult{the final decision of $C_i$ for $T$.}
  $v \gets \textrm{DetermineLocalVote}(T)$ \rtcp{the vote of $C_i$}
  \label{line:AlgLP1DecideVote}
  
  % \phxedit{Add $ready-v$ to log}\\

  Broadcast $v$ for $T$ to all other nodes in $\nodesT$
  \label{line:AlgLP1Broadcast}\\

  \If{$v$ \textup{is} $\no$}{
    \label{line:AlgLP1VoteNo}
    % \phxedit{Add $abort$ to log}\\
    \textup{Send back} $\result{\no, \abort}$ \textup{to} $C^*$ \\
    \Return{${\abort}$}
    \label{line:AlgLP1VoteNoAndAbort}
  }
  $W_{C_i} \gets max\{t_{\textrm{sent}} + U(C^*, C_j) + U(C_j, C_i) -
  t_{\textrm{recv}} ~|~ C_j \in C_T \}$ \\
  \label{line:AlgLP1ExchangeVotes}
  \If{\textup{receive a} {\no} \textup{for} $T$
    \label{line:AlgLP1VoteYes}
    \textup{from a node in} $\nodesT$  \textup{within} $W_{C_i}$ \textup{time}
    \label{line:AlgLP1ReceiveNo}}
  {
    % \phxedit{Add $abort$ to log}\\
  \textup{Send back} $\result{\yes, \abort}$ \textup{to} $C^*$ \\
    \Return{${\abort}$}
    \label{line:AlgLP1Abort}
  }
  \If{\textup{receive} {\yes} \textup{for} $T$
      \textup{from all nodes in} $\nodesT$  \textup{within} $W_{C_i}$ \textup{time}
    \label{line:AlgLP1ReceiveAllYes}}
  {
      % \phxedit{Add $commit$ to log}\\
      \textup{Send back} $\result{\yes, \commit}$ \textup{to} $C^*$ \\
    \Return{${\commit}$}
    \label{line:AlgLP1Commit}
  }
  \textup{Send back} $\result{\yes, \undecided}$ \textup{to} $C^*$ \\
  \label{line:AlgLP1Sendback}
  \If{\textup{receive a decision $d$ from $C^*$ within} crash timeout \label{line:AlgLP1GetDecisionFromOthers}}{
    % \phxedit{Add $d$ to log}\\
    \Return{d}
    \label{line:AlgLP1CoordinatorDecision}
  }
  \lElse{
    \textup{run termination protocol}
    \label{line:AlgLP1TerminationProtocol}
  }
  \label{line:AlgLP1Undecided}

  \label{line:AlgLP1Finish}
\end{algorithm}

%%% Local Variables:
%%% mode: latex
%%% TeX-master: "../main"
%%% End:

%\vspace{-1em}

%%

%
In Algorithm \ref{alg:CO1}, $W_{C^*}$ is the network timeout
  on the coordinator: it is used to accommodate all the results sent back from
failure-free participants.
We compute $W^*$ by utilizing a parameter called the message delay upper bound
$U$.
Given two arbitrary nodes $x$ and $y$, $U(x,y)$ indicates the time window for
all messages between these two nodes to arrive without network failures.
It is computed by multiplying the maximum delay $\sigma(x, y)$ of a ping
message\footnote{Each ping message is a 64-byte ICMP package. We measured this
  ping message delay for 100 times (when the network connection between $x$ and
  $y$ is stable) and took the maximum value as $\sigma(x, y)$.} between $x$ and
$y$\, with a network buffer parameter $r$\footnote{The network buffer parameter
  $r$ is used to adjust the message delay upper bound for different network
  conditions. We will show later in the experiments (Section \ref{effect_r}) how
  different values of $r$, ranging from 0.1 to 8, affect the performance of
{\flac}.}, that is: $U(x,y) = \sigma(x,y) \times r$.
Using this maximum message delay upper bound $U$, we can calculate the
network timeout $W_{C^*}$ by the maximum time for a failure-free
participant $C_i$ to get votes from all others participants $C_j \in \nodesT$ and
replies its message to $C^*$,
%bounds of a message trip from $C^*$ to a participant $u$, and then to another
%participant $v$, and return back to $C^*$:
that is:  $W_{C^*} = max\{ U(C^*, C_j) + U(C_j, C_i) +
U(C_i, C^*)\}$, for any $C_i, C_j \in C_T$.

For each participant $C_i$, upon receiving the {$Propose$} message, it first
determines and broadcasts its local vote to other participants (Algorithm
\ref{alg:LP1}, lines \ref{line:AlgLP1DecideVote}, \ref{line:AlgLP1Broadcast}).
If this vote is {\no}, it will decide to abort the transaction (lines
\ref{line:AlgLP1VoteNo}--\ref{line:AlgLP1VoteNoAndAbort}).
Otherwise, it will exchange votes with other participants within the
pre-calculated network timeout $W_{C_i}$ (lines
\ref{line:AlgLP1ExchangeVotes}--\ref{line:AlgLP1Finish}).
Here, $W_{C_i}$ is computed by subtracting the latest arrival time of votes sent
from failure-free participants to $C_i$ by the received time of the
\textit{Propose} message on $C_i$.
More specifically, $W_{C_i} = max\{t_{\textrm{sent}} + U(C^*, C_j) + U(C_j, C_i)
- t_{\textrm{recv}}$, for any $C_j \in C_T \}$, where $t_{\textrm{sent}}$,
$t_{\textrm{recv}}$ are respectively the clock time of $C^*$ and $C_i$ when
sending and receiving the \textit{Propose} message\,\footnote{This computation
  might be affected by clock skew and makes $W_{C_i}$ inaccurate, as pointed out
  in \cite{yan2020domino}.
However, such clock skew will not affect the correctness of {\flacFF}.}.

Within the network timeout $W_{C_i}$, the participant $C_i$ will decide
{\abort} if it receives a {\no} vote (lines
\ref{line:AlgLP1ReceiveNo}--\ref{line:AlgLP1Abort}), or decide {\commit} if it
receives {\yes} votes from all participants (lines
\ref{line:AlgLP1ReceiveAllYes}--\ref{line:AlgLP1Commit}).
Otherwise, $C_i$ will remain {\undecided}, send back this result to the
coordinator, and wait for the coordinator's decision (lines
\ref{line:AlgLP1Sendback}--\ref{line:AlgLP1Undecided}).
If it can receive a decision before a crash timeout, it will return that
decision and finish the execution (lines \ref{line:AlgLP1GetDecisionFromOthers}--\ref{line:AlgLP1CoordinatorDecision}).
Otherwise, the coordinator might crash, and $C_i$ will run a termination
protocol to get the decision from other participant nodes in $\nodesT$, or to
reach an agreement among them on $T$'s decision with a new coordinator (line
\ref{line:AlgLP1TerminationProtocol}).

For the coordinator $C^*$, after collecting the result set $\resultsT$, it will
proceed with one of the following three scenarios.
Firstly, if $\resultsT$ includes a {\commit} or {\abort} decision, $C^*$ will
broadcast this decision to other {\undecided} participants, and then returns it
to clients (Algorithm \ref{alg:CO1}, lines
~\ref{line:CO1CommitOrAbort}--\ref{line:CO1FastPath}).
Note that when there is no {\undecided} participant, this execution path is
called the \textit{fast path}, as illustrated in Figure~\ref{fig:SAB}.
Secondly, if $\resultsT$ contains results from all participants who also stay
{\undecided}, then all these participants vote {\yes}\,\footnote{The result $\result{\no, \undecided}$ will never
  occur.}.
Thus, $C^*$ can simply notify a $\commit$ decision to them and return this
decision (lines \ref{line:AlgCO1AllUndecided}--\ref{line:AlgCO1Commit}).
Finally, if $\resultsT$ does not contain the results of all participants, $C^*$
will not have enough information to make a decision.
Then, it will run the termination protocol to continue the execution of $T$
(line \ref{line:AlgCO1Termination}).
$C^*$ will wait for the results of other participants until (i) it can receive
an {\abort} or a {\commit} decision to decide for $T$ accordingly or (ii) it can
receive the same \result{\yes, \undecided} results from all participants and
decide {\commit} for $T$.
%
% \phxedit{
% In a {\flac} execution, nodes may wait for the messages from other nodes to continue execution.
% %
% They hold back the execution to accommodate all messages from non-crashed nodes before \textit{timeout}.
% %
% If, unfortunately, no message arrives due to crash failures, that execution will be blocked for the system to terminate.
% %
% % The termination protocol defines what acts to take under such cases.
% %
% {\flac} can immediately terminate transaction executions executed by correct sub-protocols using an EC-like termination protocol~\cite{gupta2018easycommit}.
% %
% However, if the selected sub-protocol does not match the actual operating condition, {\flac} may block the transaction execution to wait for the recovery of some crashed nodes like the termination protocol of 2PC~\cite{skeen1982quorum}.
% %
For crash recovery, {\flac} adds transaction logs into non-volatile storage and recovers from crash failures like existing ACPs ~\cite{2PC, gupta2018easycommit}.
Details about {\flac}'s failure handling can be found in Appendix~\ref{sec:failureHandleApp}.
% }

%%

% As we can see from our description above, the sub-protocol {\flacFF} can both
% commit and abort transactions in \textit{fast path} at the cost of blocking on
% crash or network failures.

\vspace{-0.5em}
\subsection{\hspace{-0.5em}\flacCF~for~Crash-Failure~Environment}
\label{LP2}

% \trung{Should we call it ``for crash-failure environment.''?. This subsection
%   title is also a bit long, can we change it to ``{\flacCF} for crash-failure
%   environment''?} \phxsays{changed}

% {\flacCF} can abort transactions in \textit{fast path} at the cost of high abort rate under network failures.
% %

% \phxedit{
% At this robustness level,
In the crash-failure environment, \flac~ assumes no delayed message among the
participant nodes.
However, these nodes are subject to crashes.
Therefore, we consider the following two requirements when designing {\flacCF}
to improve its performance.
Firstly, {\flacCF} will always ensure liveness when crash failures happen, to
reduce the blocking of resources \cite{Skeen, 3PC,
  chrysanthis1998recovery}.
Secondly, it will quickly abort transactions that cannot commit due to crash
failures so that they do not affect the commit of other transactions.
% }
%
We present the {\flacCF} protocol in Algorithms \ref{alg:CO2}, \ref{alg:LP2} and
Figure~\ref{fig:prac-cf}.

On the coordinator side, to meet the above requirements, {\flacCF} will quickly
abort a transaction if it cannot receive results from all participants after
the network timeout $W_{C^*}$ (Algorithm \ref{alg:CO2}, lines
\ref{line:AlgCO2NotReceiveAll}--\ref{line:AlgCO2Abort}).
%
% participants have voted \yes'' to become the fast path. That condition isn't
% shown in Algo 3, hence better to not discuss about it here.}
%
% This processing corresponds to the \textit{fast execution path} of the overall
% protocol {\flac}, as illustrated in Figure \ref{fig:SAB}, \phxedit{if not all
%   participants have voted \yes}.
%
% \phxsays{The fast path happens when (i) a participant votes \no (ii) a
% participant crashes before sending its votes.}
%
It is also the only difference between the {\flacCF} and {\flacFF} protocols for
the coordinator (Algorithm \ref{alg:CO2} vs. Algorithm \ref{alg:CO1}).

\begin{algorithm}[h]
  \normalem
  \caption{{\flacCF} for the coordinator node $C^*$.}
  % \SetAlgoLined
  \setcounter{AlgoLine}{0}
  \SetAlgoNoLine
  \DontPrintSemicolon
  \label{alg:CO2}
  \KwData{$T$, $\nodesT$ -- the input transaction and its participant nodes.}
  \KwResult{the final global decision for $T$.}
    % \phxedit{Add $propose$ to log \label{line:CO2LogPropose}}\\
  Broadcast $Propose$ messages with $(T, t_{\textrm{send}}, C_T)$ to all $\nodesT$\\
  $W_{C^*} \gets max\{ U(C^*, C_j) + U(C_j, C_i) + U(C_i, C^*)| C_i, C_j \in C_T \}$ \\
  $\resultsT \gets $ execution results for $T$ from $C_T$ that arrive in $W_{C^*}$. \\
  % $v \gets$ GetPersistDecision($R_T$) \rtcp{get persist decision from results}
  \If{$R_T$ \textup{contains} $v = {\abort}$}{
      \label{line:CO2CommitOrAbort}
      % \phxedit{Add {$transit-v$} to log}\\
     Broadcast $v$ to nodes in $\nodesT$ who have not replied decisions \\
      % Add {$v$} to log\\
% . \phxsays{Like before.}
     \Return{$v$}  \label{line:AlgCO2FastPath}
  }
  \If{$|\resultsT| = |\nodesT|$}{ \label{line:AlgCO2Commit}
     % \phxedit{Add {$transit-commit$} to log}\\
     Broadcast {\commit} to $C_T$  \\
     % Add {$commit$} to log\\
     \Return{\commit}
  }
  \Else{
    \label{line:AlgCO2NotReceiveAll}
     % \phxedit{Add {$transit-abort$} to log}\\
    Broadcast {\abort} to $C_T$ \\
     % \phxedit{Add {$abort$} to log}\\
    \Return{\abort}
    \label{line:AlgCO2Abort}}
\end{algorithm}

%%% Local Variables:
%%% mode: latex
%%% TeX-master: "../main"
%%% End:

\begin{algorithm}[htbp]
  \normalem
  \caption{{\flacCF} for each participant node $C_i$.}
  % \SetAlgoLined
  \SetAlgoNoLine
  \DontPrintSemicolon
  \label{alg:LP2}
  \KwData{$T$, $\nodesT$: the input transaction and its participant nodes.\\
    \hspace{2.3em} $t_{\textrm{sent}}$: the sent time of the $Propose$ message by $C^*$. \\
    \hspace{2.3em} $t_{\textrm{recv}}$: the received time of the $Propose$ message
    by $C_i$.}
  \KwResult{the final decision of $C_i$ for $T$.}
  $v \gets \textrm{DetermineLocalVote}(T)$ \rtcp{the vote of $C_i$}
  % \phxedit{Add $ready-v$ to log}\\
  Broadcast $v$ vote for $T$ to all nodes in $\nodesT$\\
    \label{line:AlgLP2Broadcast}
  \If{$v$ \textup{is} $\no$
    \label{line:AlgLP2VoteNo}}
  {
    % \phxedit{Add {$transit-abort$} to log}\\
  Broadcast an {\abort} decision for $T$ to all nodes in $\nodesT$\\
    \label{line:AlgLP2BroadcastAbort}
    \textup{Send back} $\result{\no, \abort}$ \textup{to} $C^*$ \\
    % \phxedit{Add {$abort$} to log}\\
    \Return{${\abort}$}
    \label{line:AlgLP2VoteNoAndAbort}
  }
  $W_{C_i} \gets max\{t_{\textrm{sent}} + U(C^*, C_j) + U(C_j, C_i) -
  t_{\textrm{recv}} ~|~ C_j \in C_T \}$\\
    \label{line:AlgLP2VoteYes}
    \If{\textup{receive} {\yes} \textup{from all} $\nodesT$ \textup{in} $W_{C_i}$
        \label{line:AlgLP2AllYes}
      }
    {
      \textup{Send back} $\result{\yes, \undecided}$ \textup{to} $C^*$ \\
       \label{line:AlgLP2TentativeCommit}

       \If{\textup{receive a decision} $d$ \textup{from} $C^* \cup \nodesT$
           \textup{in} crash timeout}
         {
             % \phxedit{Add {$transit-abort$} to log}\\
           Broadcast $d$ for $T$ to all nodes in $\nodesT$
              %\\ \phxedit{Add {$abort$} to log}
         \label{line:AlgLP2SlowPath}\\
           \Return{$d$}
         }
       \lElse{
         {\textup{run termination protocol} \label{line:AlgLP2Termination}}
       }
    }
    \Else{
      \label{line:AlgLP2NotReceiveAllYes}
      % \phxedit{Add {$transit-abort$} to log}\\
      Broadcast an {\abort} decision for $T$ to all nodes in $\nodesT$
      \label{line:AlgLP2BroadcastAbort2}\\
      \textup{Send back} $\result{\yes, \abort}$ \textup{to} $C^*$ \\
      % \phxedit{Add {$abort$} to log}\\
      \Return{${\abort}$}
      \label{line:AlgLP2VoteYesAndAbort}
    }
\end{algorithm}

%%% Local Variables:
%%% mode: latex
%%% TeX-master: "../main"
%%% End:

On the participant side, the {\flacCF} protocol for each participant (Algorithm
\ref{alg:LP2}) is also similar to the {\flacFF} protocol (Algorithm
\ref{alg:LP1}), except for the following two differences.
Firstly, to satisfy the liveness requirement, we apply the
\textit{transmission-before-decide} technique as proposed in
\cite{gupta2018easycommit}.
Each participant will broadcast their decision to other nodes before rolling back
or persisting the transaction (Algorithm \ref{alg:LP2}, lines
\ref{line:AlgLP2BroadcastAbort}, \ref{line:AlgLP2SlowPath}, and
\ref{line:AlgLP2BroadcastAbort2}).
With reliable connections, the broadcasted decision will eventually reach
every non-crashed node.
Therefore, when a participant is blocked for not receiving any decision, no
participant has decided for $T$.
It can then communicate with a new coordinator to commit or abort the
transaction (line \ref{line:AlgLP2TentativeCommit}).
%
% like other non-blocking ACPs~\cite{3PC, gupta2018easycommit}.
%
% \phxedit{, without considering how to agree with crashed participants}.
% \trung{This point can be removed, the main point is to communicate with the
% new coordinator.}
%
Secondly, each participant can quickly abort a transaction if it can detect a
crash failure in other participants.
In particular, if the participant $C_i$ does not receive all {\yes} votes from
other participants within the network timeout $W_{C_i}$, it can directly decide
{\abort} (lines
\ref{line:AlgLP2NotReceiveAllYes}--\ref{line:AlgLP2VoteYesAndAbort}).
This processing corresponds to the \textit{fast execution path} of the
participant, as illustrated in Figure \ref{fig:prac-cf}.

In addition to the above requirements, termination protocol will also be
triggered if a participant receives all {\yes} votes from other participants,
but does not receive any {\abort} or {\commit} decision from other participants
or the coordinator within the crash timeout.
In this case, the transmission-before-decide technique ensures that none of these nodes has made a decision for the transaction.
Thus, all alive participants can run the termination protocol to reach an agreement
for the transaction.

\begin{figure}[ht]
  \centering
  \includegraphics[width=\linewidth]{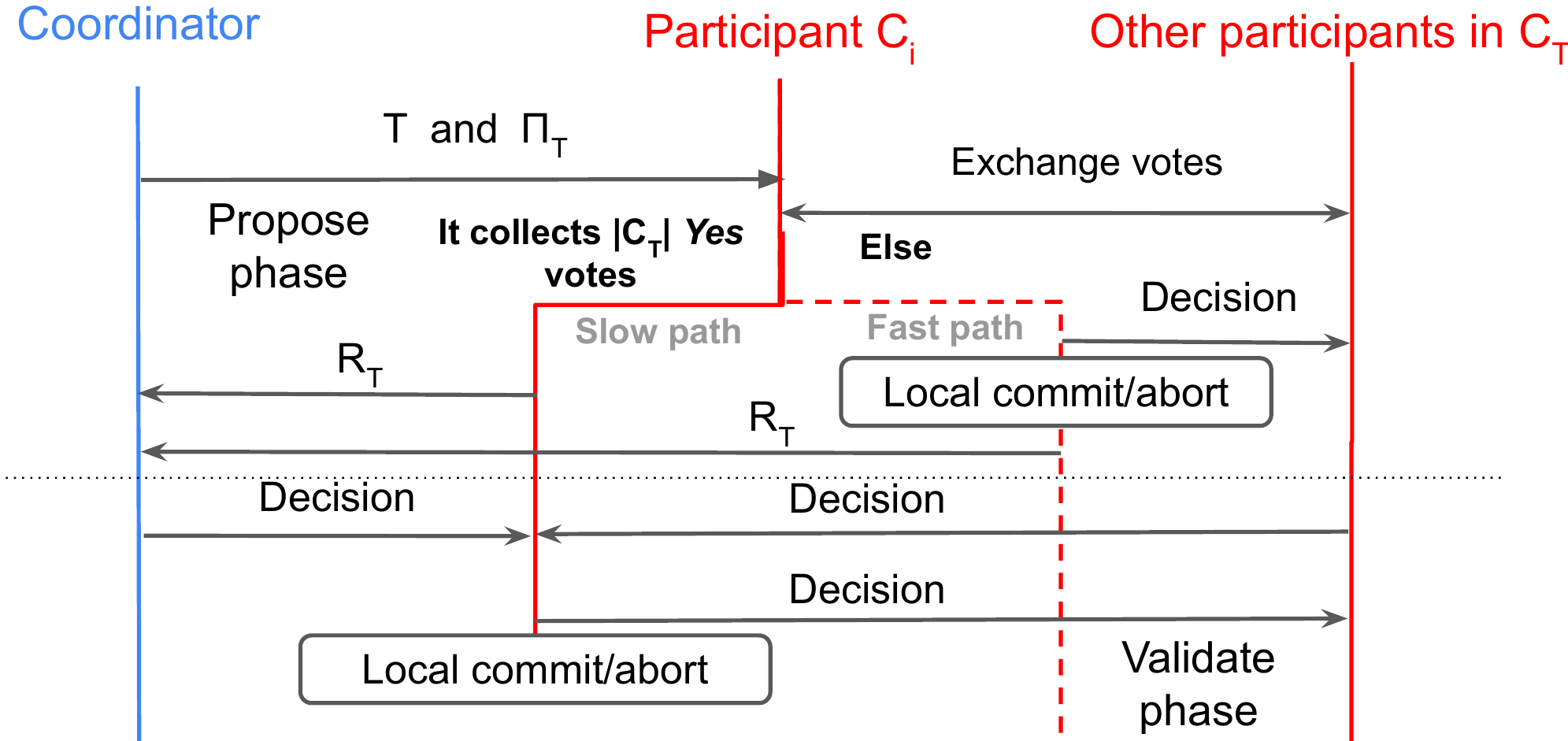}
  \caption{Timeline of {\flacCF}. Each participant forwards its decisions to
    others before committing/aborting the transaction.}
  \label{fig:prac-cf}
%  \vspace{-1em}
\end{figure}

% (i) a participant can only decide {\commit} with the {\commit} decision from other nodes.
%
% (ii) a participant receiving less than $|\nodesT|$ {\yes} votes will directly decide {\abort} (line \ref{line:AlgLP2VoteYesAndAbort}).
% %
% With (i) and (ii), transactions involving a crashed participant can be aborted with \textit{fast path}.
%without breaking the safety of protocol.

\vspace{-1.5em}
\subsection{\hspace{-0.5em}{\fontsize{10.2}{12}\selectfont
    {\flacNF} for Network-Failure Environment}}
\label{LP3}

In this environment, both crash failures and network failures may occur.
There have been several ACPs proposed to handle both these two failures, such as 3PC
\cite{3PC}, PAC \cite{maiyya2019unifying}, and Easy Commit (EC)
\cite{gupta2018easycommit}.
Among them, EC is the state-of-the-art non-blocking protocol, which is well
optimized for the network-failure environment.
%
% In this work, we do not aim to invent a new ACP for this specific environment.
For this environment, it is challenging to design a new ACP
  that is better than EC.
  Moreover, our goal is not to design a new protocol for the specific
  network-failure environment, but rather to design the {\flac} protocol which
  can adapt well to different operating environments.
  Therefore, we directly adopt the EC protocol as the dedicated protocol
  {\flacNF} for the network-failure environment.
  By using EC, {\flacNF} can directly commit transactions without having to enter
  the \textit{validate} or \textit{propose} phases like {\flacFF} and
  {\flacCF}.
% %
% Instead, we focus on designing the {\flac} protocol which can adapt well to
% different operating conditions of the environment.
% %
% Therefore, we adopt the EC protocol \cite{gupta2018easycommit} as the protocol
% {\flacNF} to directly commit transactions without the \textit{Validate} or
% \textit{Propose} phase in the network-failure environment.
%
Finally, since {\flac}'s design is modular, in the future, if there is a new ACP
that can perform better than EC, we can also adopt it as {\flacNF}.

\begin{table*}[httb]
  \renewcommand{\arraystretch}{0.88}
  \centering
  \caption{Comparison of protocols w.r.t. a transaction $T$, where $N$ is the number of nodes
    contributing to the commit of $T$ ($N = |\nodesT|$).}
  \vspace{-1em}
  \label{tab:cp}
  \begin{tabular}{lccccccc}
    \toprule
    & \flacFF & \flacCF & \flacNF & 2PC & 3PC & PAC & EasyCommit \\
    \midrule
    Number of message delays on coordinator & 3 & 3 & 2 & 4 & 6 & 4 & 2\\
    Number of message delays on participant ({Commit}) & 1 & 3 & 2 & 2 & 4 & 4 & 2\\
    Number of message delays on participant ({Abort}) & 1 & 1 & 2 & 2 & 2 & 2 & 2\\
    Number of message delays on participant ({Crash}) & - & 1 & 2 & 2 & 2 & 2 & 2\\
    Message complexity (number of exchanged messages) & $O(N^2)$ & $O(N^2)$ & $O(N^2)$ & $O(N)$ & $O(N)$ & $O(N)$ & $O(N^2)$\\
    Fault tolerance  & {$\emptyset$} & $\{CF\}$ & $\{CF, NF\}$ & $\{NF\}$ & $\{CF, NF\}$ & $\{CF, NF\}$ & $\{CF, NF\}$ \\
    \bottomrule
  \end{tabular}
  \\
\end{table*}

%%% Local Variables:
%%% mode: latex
%%% TeX-master: "../main"
%%% End:

\vspace{-0.5em}
\subsection{Protocol Correctness}

Like existing atomic commit protocols~\cite{PAX, skeen1982quorum, gupta2018easycommit, maiyya2019unifying}, we prove the correctness of {\flacFF}, {\flacCF}, and {\flacNF} in  safety and liveness.
%{\flac} sub-protocols

\begin{proposition}[Safety]
  \label{prop:Safety} All protocols in {\flac} ensure the validity and the agreement
  properties (Definition \ref{def1}).
  %\footnote{\phxedit{A formal proof of Proposition~\ref{prop:Safety} and ~\ref{prop:Liveness} can be found in the Appendix~\ref{sec:formalProof}.}}
\end{proposition}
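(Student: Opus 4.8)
The plan is to prove Proposition~\ref{prop:Safety} by treating the three protocols separately and, for each, establishing the two safety clauses of Definition~\ref{def1} (validity and agreement) by case analysis on how each node reached its decision.

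\textbf{Validity.} For validity I would argue directly from the protocol text. For the \commit\ clause, I inspect every line of Algorithms~\ref{alg:CO1}--\ref{alg:LP2} (and the corresponding EC rules for {\flacNF}) where a node outputs \commit, and check that it is only reachable after that node has observed \yes\ votes from every node in $\nodesT$ --- either because it received them directly within $W_{C_i}$ (participant lines such as \ref{line:AlgLP1Commit} and \ref{line:AlgLP2TentativeCommit}), or because the coordinator collected $|\resultsT| = |\nodesT|$ results all of the form $\result{\yes,\undecided}$ (coordinator lines \ref{line:AlgCO1AllUndecided}--\ref{line:AlgCO1Commit}, \ref{line:AlgCO2Commit}), relying on the footnoted fact that $\result{\no,\undecided}$ and $\result{\no,\commit}$ never arise. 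For the \abort\ clause I observe that every \abort\ output is preceded either by some node proposing \no, or by the detection of a crash/delay failure (e.g. the coordinator's fall-through branch in Algorithm~\ref{alg:CO2}, or a participant not receiving all \yes\ votes within $W_{C_i}$), which is exactly what the validity clause permits.

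\textbf{Agreement.} This is the substantive part. I would show that within a single execution of $\Pi_T$ no \commit\ and \abort\ can coexist. The key lemma is: if \emph{any} node decides \commit\ on $T$, then every node in $\nodesT$ voted \yes\ \emph{and} no node ever sees grounds to abort. Concretely, suppose node $X$ decides \commit. If $X$ is a participant, it saw all \yes\ votes within $W_{C_i}$; if $X$ is the coordinator, it collected $\result{\yes,\undecided}$ from all of $\nodesT$. Either way every participant voted \yes, so no participant takes a \no-triggered abort branch. It remains to rule out a failure-triggered abort concurrent with a \commit. For {\flacFF} this cannot happen because the coordinator only reaches the "all undecided" \commit\ branch when $|\resultsT| = |\nodesT|$, i.e. it heard from everyone, so no participant's result was lost; and a participant that decided \commit\ did so by the network-timeout deadline, which is consistent across nodes up to clock skew that (as the paper notes) does not affect correctness, only timeliness --- here I would make precise that the $W_{C_i}$ windows are set so that if one failure-free participant hears all \yes\ in time, so does every other failure-free participant, hence nobody aborts on a missing vote. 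For {\flacCF}, the transmission-before-decide technique is the crux: a participant broadcasts its decision before persisting it, so if some node decides \commit, that decision (or the \yes\ votes underlying it) propagates, and I show the coordinator's "not all results" abort branch is unreachable in that scenario because a committing participant's $\result{\yes,\undecided}$ would have been sent and the only lost results would belong to crashed nodes --- but a crashed node cannot have decided differently, and the termination protocol (invoked on the slow path) is assumed correct. For {\flacNF} I simply cite the established safety of EC~\cite{gupta2018easycommit}. Finally, agreement across the fast/slow path boundary and across a termination-protocol handoff follows because the coordinator, once it emits a decision $v$, broadcasts $v$ to all undecided participants (coordinator line~\ref{line:CO1Slow}), and any participant that already decided did so consistently with the votes $v$ is based on.

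\textbf{Main obstacle.} The hard part is the {\flacCF} agreement argument around the interplay of crash detection, the network timeout $W_{C_i}$, and transmission-before-decide: I must rule out a "split" in which one participant decides \commit\ (having seen all \yes\ in time) while another participant or the coordinator decides \abort\ (having spuriously concluded a crash because a message was merely in flight). The resolution rests on the modelling assumption that in a crash-failure environment messages are never delayed, so a missing \yes\ vote within $W_{C_i}$ genuinely implies a crash of the sender, and a crashed sender has not decided \commit; combined with transmission-before-decide this forces all live decisions to agree. I would also need to be careful that the termination protocols invoked on the slow paths are themselves safe --- I would either prove this as a separate lemma or, following the standard treatment, inherit it from the 3PC/EC termination protocol whose correctness is established in the cited literature.
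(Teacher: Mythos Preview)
Your overall plan is sound and would go through, but it is considerably more elaborate than the paper's argument, and the extra machinery stems partly from a misreading of the protocols.

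For {\flacFF} agreement, the paper's proof is a two-line vote contradiction: by Algorithm~\ref{alg:LP1}, a participant decides \abort\ only via lines \ref{line:AlgLP1VoteNo}--\ref{line:AlgLP1VoteNoAndAbort} or \ref{line:AlgLP1ReceiveNo}--\ref{line:AlgLP1Abort}, i.e.\ only when some node voted \no; and it decides \commit\ only at lines \ref{line:AlgLP1ReceiveAllYes}--\ref{line:AlgLP1Commit}, i.e.\ only when every node voted \yes. These are mutually exclusive, done. Your worry about ``failure-triggered aborts'' and about synchronising the $W_{C_i}$ windows is unnecessary here, because in {\flacFF} a participant that misses a vote does \emph{not} abort --- it goes \undecided\ (line~\ref{line:AlgLP1Sendback}). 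There is no missing-vote abort branch to rule out.

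For {\flacCF} agreement, your ``main obstacle'' rests on a misreading of Algorithm~\ref{alg:LP2}: a participant that receives all \yes\ votes within $W_{C_i}$ does \emph{not} autonomously decide \commit. It sends $\result{\yes,\undecided}$ (line~\ref{line:AlgLP2TentativeCommit}) and then waits for a decision from $C^* \cup \nodesT$. So the split you fear --- one participant committing on its own while another times out and aborts --- cannot arise structurally. The paper's argument instead runs: a \commit\ decision can originate only from the coordinator's line~\ref{line:AlgCO2Commit}, which requires $|\resultsT|=|\nodesT|$ with no \abort\ in $\resultsT$; hence no participant has yet decided \abort. Then transmission-before-decide (lines \ref{line:AlgLP2SlowPath}, \ref{line:AlgLP2BroadcastAbort}, \ref{line:AlgLP2BroadcastAbort2}) guarantees that whichever decision is first made is broadcast before being returned, so it reaches everyone before any conflicting decision can be taken. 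Your invocation of transmission-before-decide is the right ingredient; you just attached it to the wrong scenario.

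Your treatment of validity and of {\flacNF} (citing EC) matches the paper. In short: correct in spirit, but once you notice that {\flacFF} has no failure-abort branch and that {\flacCF} participants never self-commit, the proof collapses to the short vote-contradiction and coordinator-centred arguments the paper gives.
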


\begin{proof}
%
% \phxedit{
% We prove the safety of {\flac}'s sub-protocols by referring to their  corresponding algorithms.
% }

  %%

  (1). The validity property of {\flacFF} is ensured by Algorithm \ref{alg:LP1}:
  this protocol decides {\abort} only if there exists a node voting {\no}
  (lines \ref{line:AlgLP1VoteNo}--\ref{line:AlgLP1VoteNoAndAbort},
  \ref{line:AlgLP1ReceiveNo}--\ref{line:AlgLP1Abort}), and decides {\commit}
  when all nodes vote {\yes} (lines
  \ref{line:AlgLP1ReceiveAllYes}--\ref{line:AlgLP1Commit}), which implies no
  node votes {\no}.

  (2). {\flacFF}'s agreement is proved by contradiction.
  Suppose that there are two nodes deciding differently, then there is a node
  deciding {\commit}, and another deciding {\abort}.
  Since there is one node deciding {\commit}, it follows from Algorithm
  \ref{alg:LP1} (lines \ref{line:AlgLP1ReceiveAllYes}--\ref{line:AlgLP1Commit})
  that all nodes must vote {\yes}.
  Since there is one node deciding {\abort}, according to Algorithm
  \ref{alg:LP1} (lines \ref{line:AlgLP1VoteNo}--\ref{line:AlgLP1VoteNoAndAbort},
  \ref{line:AlgLP1ReceiveNo}--\ref{line:AlgLP1Abort}), there must be a node
  voting {\no}.
  The two previous results contradict each other.

  (3). {\flacCF}'s validity property. According to Algorithm \ref{alg:CO2},
  {\flacCF} decides {\commit} if all nodes vote $\yes$ and no failures happen
  (lines \ref{line:AlgCO2Commit}--\ref{line:AlgCO2NotReceiveAll}). Moreover,
  Algorithm~\ref{alg:LP2} shows that {\flacCF} will decide {\abort} if there
  exists a node voting $\no$ (lines
  \ref{line:AlgLP2VoteNo}--\ref{line:AlgLP2VoteNoAndAbort},
  \ref{line:AlgLP2BroadcastAbort2}--\ref{line:AlgLP2VoteYesAndAbort}). Hence,
  the validity of {\flacCF} is confirmed.

  (4). For {\flacCF}'s agreement, by contradiction, suppose there are two nodes
  deciding differently: node $C_{i}$ decides {\commit} and node $C_{j}$ decides
  {\abort}.
  Since $C_i$ decides {\commit}, according to Algorithm~\ref{alg:CO2} (lines
  \ref{line:AlgCO2Commit}--\ref{line:AlgCO2NotReceiveAll}), no node decides
  {\abort} before the crash timeout.
  Algorithm \ref{alg:LP2} (lines
  \ref{line:AlgLP2TentativeCommit}--\ref{line:AlgLP2SlowPath}) shows that
  $C_{i}$ needs to broadcast its {\commit} decision before deciding {\commit},
  this decision will reach all nodes before the crash timeout, and make them
  decide {\commit}, too.
  This result contradicts with the assumption that $C_{j}$ decides {\abort}.

  (5). For {\flacNF}, it is identical to EC; therefore, its safety is
    guaranteed since EC's safety is proved in \cite{gupta2018easycommit}.
\end{proof}

\begin{proposition}[Liveness]
  \label{prop:Liveness}
  All protocols in {\flac} ensure the termination property when executed in expected operating conditions.
  %{\flac} ensures the termination property for all sub-protocols executed in expected operating conditions.
\end{proposition}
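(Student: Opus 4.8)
The plan is to prove the termination property of Definition~\ref{def1} --- that every correct node eventually decides --- separately for each of the three dedicated protocols, in each case assuming the operating condition it was designed for. For \flacFF{} in a failure-free environment: since there are no crashes and every message arrives within the network timeout, the waiting windows $W_{C_i}$ and $W_{C^*}$, which are built from the per-link delay upper bounds $U$, are by construction long enough to collect all votes and all results, respectively. I would then argue that (i) each participant $C_i$ either votes \no{} and decides \abort{} at once (Algorithm~\ref{alg:LP1}, lines~\ref{line:AlgLP1VoteNo}--\ref{line:AlgLP1VoteNoAndAbort}), or, having voted \yes{}, receives every other vote within $W_{C_i}$ and so decides \abort{} or \commit{} (lines~\ref{line:AlgLP1ReceiveNo}--\ref{line:AlgLP1Commit}); hence no participant stays \undecided{} and the termination-protocol branch (line~\ref{line:AlgLP1TerminationProtocol}) is never taken; and (ii) the coordinator collects all of $\resultsT$ within $W_{C^*}$, so $|\resultsT| = |\nodesT|$ always holds, line~\ref{line:AlgCO1Termination} is never reached, and it returns on the fast path. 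Thus every node decides within a bounded number of message delays.

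For \flacCF{} in a crash-failure environment, messages are never delayed but nodes may crash, and the tools are reliable connections, the transmission-before-decide technique, and crash detection by timeout. The argument splits on whether a crash disturbs the vote-exchange round. If some node in $\nodesT$ crashes before completing its broadcast, then --- because every message of a non-crashed node arrives within $W_{C_i}$ --- at least one alive participant fails to collect all \yes{} votes, decides \abort{}, and broadcasts it (Algorithm~\ref{alg:LP2}, lines~\ref{line:AlgLP2NotReceiveAllYes}--\ref{line:AlgLP2VoteYesAndAbort}); by reliability this reaches every alive participant, and the coordinator likewise either sees an \abort{} in $\resultsT$ or a missing result and broadcasts \abort{} (Algorithm~\ref{alg:CO2}), so all correct nodes decide. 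Otherwise every alive participant collected all \yes{} votes and sent \result{\yes, \undecided}; if the coordinator is alive it cannot block (it waits $W_{C^*}$ and then returns a decision, possibly \abort{}) and its broadcast reaches all correct participants; if the coordinator has crashed, transmission-before-decide guarantees no node has decided yet, so the alive participants run the termination protocol, which --- since connections are reliable and crashed nodes are detected by timeout --- reaches a common decision among the survivors. In every case each correct node decides.

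For \flacNF{} in a network-failure environment, the argument is immediate: \flacNF{} is exactly the Easy Commit protocol, which is proven non-blocking (guaranteeing termination under both crash and network failures) in \cite{gupta2018easycommit}, so termination transfers directly. Together with Proposition~\ref{prop:Safety}, this shows each protocol solves the non-blocking atomic commit problem in its intended environment.

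The main obstacle is the \flacCF{} case when the coordinator crashes after some participants have already passed vote exchange: there I must show that no alive participant waits forever and that the handoff to the termination protocol is well-founded. This rests on the two facts established in Section~\ref{LP2} --- that transmission-before-decide ensures no participant has committed or aborted when a blocked participant wakes, and that reliable connections plus crash-timeout detection let the termination protocol make progress among the surviving nodes. I would also note that \flacCF{}'s coordinator (Algorithm~\ref{alg:CO2}) has no blocking branch, so a blocked participant is the only possible source of non-termination, which the above handles; \flacFF{}'s coordinator does contain a termination-protocol branch, but it is unreachable in a failure-free run, so it poses no difficulty.
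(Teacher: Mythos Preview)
Your proposal is correct and follows the same decomposition as the paper --- one argument per dedicated protocol, with \flacFF{} handled by observing that the failure-free assumption precludes blocking, \flacCF{} handled via the termination protocol and transmission-before-decide, and \flacNF{} reduced to the non-blocking guarantee of Easy Commit from~\cite{gupta2018easycommit}. The paper's own proof is a three-sentence sketch of exactly these points, so your version is essentially a fleshed-out variant of the same argument.
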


\begin{proof}
%\phxedit{We prove the liveness for each subprotocol respectively.}
  {\flacFF} does not need to consider the transaction blocking due to crash
  failures for its failure-free operating condition assumptions.
  %  \trung{resource blocking?}
  %
  {\flacCF} makes progress with its termination protocol in case of
  timeout due to crash failures (Section~\ref{LP2}).
  Finally, {\flacNF}, which is identical to EC, has been proven to be
  non-blocking in ~\cite{gupta2018easycommit}.
\end{proof}

\subsection{Protocol Comparison}
% \trung{Can this part be a subsubsection of 4.2?}
% \phxsays{Updated, is current structure OK?}

\label{sec:comparison}

% %
We conclude this section by presenting a theoretical comparison among {\flacFF},
{\flacCF}, {\flacNF}, and other ACPs: 2PC \cite{2PC}, 3PC \cite{3PC}, PAC
\cite{maiyya2019unifying}, Easy Commit \cite{gupta2018easycommit} in Table
\ref{tab:cp}.
Here, we calculate the maximum number of message delays that can happen on the
coordinator and participants.
There are three cases related to a {\flac} execution: when it successfully commits a transaction, aborts a
transaction, or crashes before the transaction begins.
% \trungedit{on the
  % participant node} \phxsays{not only participant node, the whole protocol execution}
%
We also measure the message complexity, and describe which failures
each protocol can tolerate while ensuring liveness.
%
% First, we list each protocol's coordinator-side and participant-side delay in three cases: transactions commit without failure, abort without failure, or abort with crash failure.
% %
% The two delays, as discussed in Section~\ref{background},  play two major bottlenecks in distributed databases.
% %
% Higher coordinator-side delay increases transaction latency, while higher participant-side delay amplifies transaction contention.
%
% Our comparison also includes each protocol's message complexity and liveness regarding failures.
% %
% The message complexity affects the protocol's scalability, while the liveness shows the protocol's availability under system failures.
%

% Table ~\ref{tab:cp} includes our comparison results between three
% sub-protocols {\flacFF}, {\flacCF}, and {\flacNF} and other protocols 2PC, 3PC, PAC, and EC.
%
Overall, {\flacFF} and {\flacCF} require fewer message delays on both
%{\flac}'s sub-protocols
the coordinator and the participants,
% at the cost of higher message complexity
compared to 2PC, 3PC, and PAC.
As discussed in Section~\ref{LP1}, the fewer message delay enables {\flacFF} and
{\flacCF} to perform better than other protocols in distributed settings.
%
% Compared with EC, {\flacFF} and {\flacCF} perform better in high contention workloads (lower participant-side delay), at the cost of a slight disadvantage in low contention workloads (higher coordinator-side delay).
%
%since we utilize it as {\flacNF}, the numbers of
%message delays and exchanged messages required by these two protocols are
%identical.
%%
Table~\ref{tab:cp} also emphasizes our goal to design {\flac} to adapt well to the changes
in the operating environment.
{\flac} can perform well in different operating conditions: while
{\flacFF} is designed for failure-free environments, {\flacCF} can tolerate
crash failures, and {\flacNF} can endure both crash and network failures.
This is the major difference between {\flac} and the other protocols like 2PC, which
can only support network failures, or 3PC, EC, which can endure crash
and network failures, but might not perform well in the failure-free
environment.
%
% As observed in Table~\ref{tab:cp}, {\flac} has disadvantages in message complexity and protocol liveness.
% %
%  {\flac}'s higher message complexity affects system performance when each transaction involves more participants.
% %
% It is acceptable because the number of involved participants in most OLTP workloads is less than 10~\cite{council2010tpc, gupta2018easycommit}.
% %
% {\flac} does not ensure liveness for transactions executed with wrong assumptions and thus could exacerbate the blocking effect between transactions.
% %
% Such blocking problem is alleviated by the {\rlsm} we present in Section~\ref{rlsm}.
% %
% Section~\ref{exp} shows that those shortcomings do not block {\flac} from being a practical and efficient ACP.

%%% Local Variables:
%%% mode: latex
%%% TeX-master: "../main"
%%% End:

% \input{chapters/termination-and-failure.tex}

\section{Robustness-Level State Machines}
\label{rlsm}

% a reliable failure detection

% \phxedit{
% 1. asynchronous failure detection is impossible.
% 2. our assumptions give possible to differentiate,
% 3. we aim to design a failure detector without any extra messages.

% }

%
We now describe the robustness-level state machine (RLSM) manager of the
coordinator node (Figure \ref{fig:STA}).
Recall that this component is responsible for capturing the robustness level of
each participant node so that the coordinator can determine the suitable
dedicated protocol ({\flacFF}, {\flacCF}, or {\flacNF}) to execute a
transaction.
Following previous works~\cite{DBLP:phd/us/Chandra93,
  DBLP:conf/srds/GuerraouiLS95}, we design {\rlsm} as a lightweight and
practical failure detector that provides enough properties to ensure {\flac}'s
liveness.
To minimize {\rlsm}'s impact on transaction processing, {\rlsm} does not add any
extra network communications (only based on {\resultsT}), and it only adjusts
protocols when liveness is affected.
Note that we employ an RL-based parameter tuner for {\rlsm}'s downgrade
transitions.
To avoid this parameter tuning becoming a new bottleneck, we take its model
training off the critical path and cache the learned results.
In the following, we formalize the RLSM in Definition~\ref{def-rlsm} and
describe its state transitions.

\begin{definition}[RLSM]
  \label{def-rlsm}
  A robustness-level state machine (RLSM) is a state machine denoted as $\langle
  \mathcal{L}, L_0, \mathcal{E}, \delta \rangle$, which comprises of:

  \begin{itemize}[label=--, left=0.5em, nosep]
    \item A set of states $\levelSet = \{\levelFF, \levelCF, \levelNF\}$, in
    which $\levelFF, \levelCF, \levelNF$ respectively represent the
    failure-free, crash-failure, and network-failure robustness levels.

    \item An initial state $L_0 = \levelFF$.

    \item A set of input events $\eventSet = \{ CF, NF, FF(\numRunCF),
    FF(\numRunNF)\}$, in which $CF, NF$ respectively indicate a crash or a
    network failure occurs in a transaction execution, $FF(\numRunCF)$,
    $FF(\numRunNF)$ respectively represent a sequence of $\numRunCF$ and a
    sequence of $\numRunNF$ consecutive failure-free transaction executions at
    the crash- and the network-failure level.

    \item A state transition function $\delta : \levelSet \times \eventSet
    \rightarrow \levelSet$, which is defined as the illustration in
    Figure~\ref{fig:RLSM}.
  \end{itemize}
\end{definition}

\vspace{-0.5em}
\begin{figure}[htbp]
  \centering
  \includegraphics[width=\linewidth]{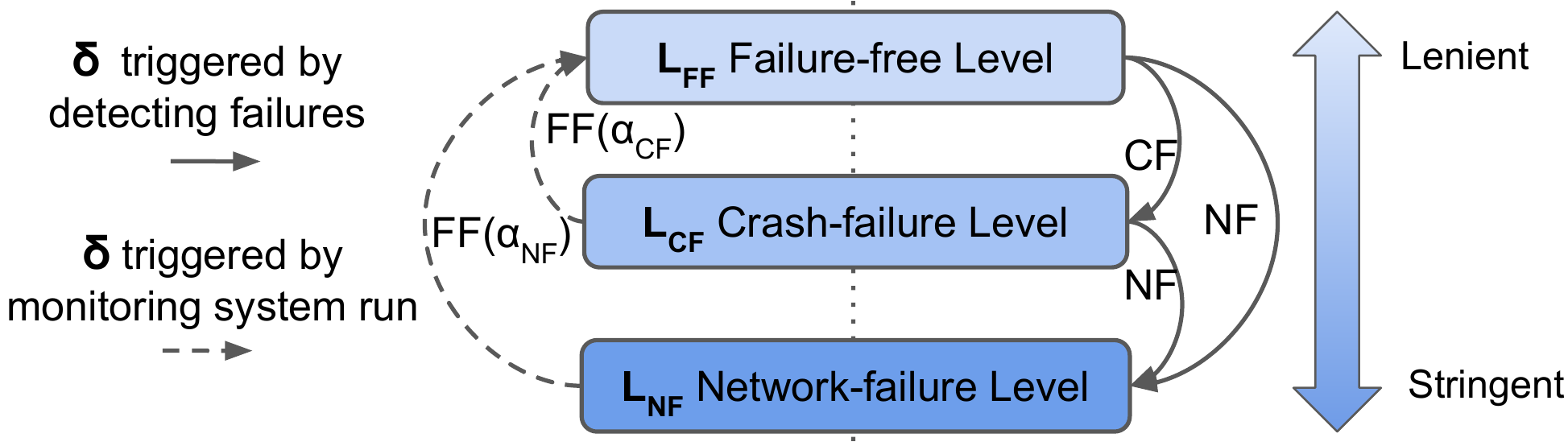}
%  \vspace{-2em}
  \caption{{\rlsm} transition diagram.}
  \label{fig:RLSM}
\end{figure}
\vspace{-0.5em}

Note that in Definition \ref{def-rlsm}, parameter $\numRunCF$ and $\numRunNF$ of
the two events $FF(\numRunCF)$ and $FF(\numRunNF)$ are determined separately for
transaction executions in the crash-failure and the network-failure levels.
Moreover, for each transition state, the RLSM manager can only detect
  events corresponding to the state's intolerable failures.
Specifically, while it can detect the two events $CF, NF$ for {\levelFF}
state, only the event $NF$ can be caught for {\levelCF} state.
Furthermore, none of the two events $CF$ and $NF$ can be tracked for {\levelNF}
state.

%%

% Algorithm \ref{alg:rlsm} formalizes how the robustness level of each participant
% transitions regarding different input events, and how the RLSM manager detects
% such events and sends them to the participants' RLSMs.
Algorithm \ref{alg:rlsm} formalizes how the RLSM manager detects unexpected failures or consecutive failure-free executions and sends the corresponding events (Definition~\ref{def-rlsm}) to the participants' RLSMs.
Firstly, if the RLSM manager can detect a crash or a network failure during a transaction
execution (lines \ref{line:AlgRlsmMissingFF}, \ref{line:AlgRlsmCheckNF4flacFF},
\ref{line:AlgRlsmMissingCF}, \ref{line:AlgRlsmCheckNF4flacCF}), it will trigger
a $CF$ or $NF$ event to update the current state to a more stringent robustness
level {\levelCF} or {\levelNF} (lines \ref{line:AlgRlsmCFDetect},
\ref{line:AlgRlsmAsyNF4FF}, \ref{line:AlgRlsmNF4flacFF},
\ref{line:AlgRlsmAsyNF4CF}, \ref{line:AlgRlsmNF4flacCF}).
This update is called the \textit{robustness upgrade} transition; it
  corresponds to the solid lines in Figure \ref{fig:RLSM}.
Secondly, if the RLSM manager observes a number of failure-free consecutive
executions in a high robustness-level state ($\numRunCF$ times for $\levelCF$
and $\numRunNF$ times for {\levelNF} -- lines \ref{line:AlgRlsmCheck4DownCF},
\ref{line:AlgRlsmCheck4DownNF}), it will respectively trigger the event
$FF(\numRunCF)$ or $FF(\numRunNF)$ to move the system to the lowest
robustness-level state {\levelFF} (lines \ref{line:AlgRlsmDownCF},
\ref{line:AlgRlsmDownNF}).
This update is called the \textit{robustness downgrade} transition;
it corresponds to the dotted lines in Figure \ref{fig:RLSM}.
In the following subsections, we will elaborate on these failure detections and
robustness transitions.

\vspace{-0.5em}
\begin{algorithm}[htbp]
  \normalem
  \caption{State transition algorithm by the RLSM manager.}
  % \SetAlgoLined
  \setcounter{AlgoLine}{0}
  \SetAlgoNoLine
  \DontPrintSemicolon
  \label{alg:rlsm}
  \KwData{$L$, $\nodesT$, $\resultsT$, $\numRunCF$ ,$\numRunNF$: the
    current level, the set of participants, the execution results of the
    transaction $T$ and two parameters of downgrade transitions.}
  \KwResult{Events sent to the participants' RLSMs}
  %
%  \KwResult{event to update the protocol robustness level for $T$.}

    \If{ $L = {\levelFF}$ \label{line:AlgRlsmFF}}{
        \If{$|\resultsT| < |\nodesT|$
            \label{line:AlgRlsmMissingFF}} {
            Send $CF$ to the RLSMs of non-responsive participants.
            \label{line:AlgRlsmCFDetect}\\
            Send $NF$ to the RLSMs of participants that reply late.
            \label{line:AlgRlsmAsyNF4FF}
        }
        \ElseIf{ $\result{\yes,\undecided} \in \resultsT$
          \label{line:AlgRlsmCheckNF4flacFF}} {
            Send $NF$ to the RLSMs of all participants in $\nodesT$.
            \label{line:AlgRlsmNF4flacFF}
        }
    }
    \ElseIf{$L = {\levelCF}$ \label{line:AlgRlsmCF}}{
        \If{$|\resultsT| < |\nodesT|$
            \label{line:AlgRlsmMissingCF}}
          {Send $NF$ to the RLSMs of participants that reply late.
            \label{line:AlgRlsmAsyNF4CF}
        }
        \ElseIf{$\result{\yes, \abort} \,{\in}\, \resultsT$, \textup{and}
      $\result{\no, \_} \,{\not\in}\, \resultsT$
      \label{line:AlgRlsmCheckNF4flacCF}}{
            Send $NF$ to the RLSMs of all participants in $\nodesT$.
            \label{line:AlgRlsmNF4flacCF}
        }
        \For{$C_i \in $ {\nodesT}}{
            \If{$\textup{NumOfConsecutiveFailureFreeRuns}(C_i) = \numRunCF$}
              {
                \label{line:AlgRlsmCheck4DownCF}
                Send $FF(\numRunCF)$ to the RLSM of $C_i$.
                \label{line:AlgRlsmDownCF}
              }
            }
    }
    \Else(\hfill\tcp*[h]{$L = {\levelNF}$}) {
      \For{$C_i \in $ {\nodesT}}{
          \If{$\textup{NumOfConsecutiveFailureFreeRuns}(C_i) = \numRunNF$}
            {
              \label{line:AlgRlsmCheck4DownNF}
              Send $FF(\numRunNF)$ to the RLSM of $C_i$.
              \label{line:AlgRlsmDownNF}
            }
          }
    }
\end{algorithm}

%%% Local Variables:
%%% mode: latex
%%% TeX-master: "../main"
%%% End:

\vspace{-0.5em}

\subsection{Robustness-Upgrade Transitions}
\label{upward}

The robustness-upgrade transition shifts {\rlsm} of a participant
toward a more stringent robustness level so that subsequent transactions
involving it can run with a more resilient protocol.
We design {\flac} to take a lazy approach that only upgrades the robustness
level when it can detect transaction-blocking failures by analyzing past
execution results.
It is different from traditional approaches \cite{yu2011practical,
  DBLP:conf/ipps/ZhangS08} that continuously monitor the systems to predict the
periods when failure can happen proactively.
%to adjust
%assumptions.
%
Such monitoring can be a new bottleneck of the system.
In contrast, {\flac} can detect failures efficiently by analyzing past execution
results as follows:

First, the {\rlsm} manager can detect
  failures for both {\flacFF} and {\flacCF} by checking if the coordinator does not receive all participants'
  results, i.e., some results are missing in {\resultsT}.
For {\flacFF}, the failure type is agnostic before a crash timeout.
In this case, the {\rlsm} manager first inputs CF events to the
RLSMs of non-responsive participants, who are likely to crash (line
\ref{line:AlgRlsmCFDetect}).
Then, it will send NF events to the RLSMs of participants who reply
late (line \ref{line:AlgRlsmAsyNF4FF}), since these late replies are likely
caused by network delays.
For {\flacCF}, it tolerates crash failures, and thus only needs to detect
network failures.
Like the previous case, the RLSM manager will input NF events to the RLSMs of
participants who reply late (line \ref{line:AlgRlsmAsyNF4CF}).

Second, the {\rlsm} manager can detect network failures for both {\flacFF} and {\flacCF} by analyzing the received results.
%
% % During failure detections, the {\rlsm} manager cannot pinpoint network
% failures to specific participants, since the network failures can happen among a set of nodes.
% %
For {\flacFF}, it will report that network failures occur between
participants if the coordinator receives $\result{\yes, \undecided}$ (lines \ref{line:AlgRlsmCheckNF4flacFF}--\ref{line:AlgRlsmNF4flacFF}).
This {\undecided} result indicates that some votes sent to a participant did not arrive on time due to network delay.
For {\flacCF}, it will conclude that a network failure occurs if there
exist aborted transactions that get all {\yes} votes (lines
\ref{line:AlgRlsmCheckNF4flacCF}--\ref{line:AlgRlsmNF4flacCF}).
This is because if there were no network failures and all participants voted {\yes}, the transactions should have been committed.
For both {\flacFF} and {\flacCF}, if a network failure occurred, the {\rlsm} manager cannot identify precisely the involved participants, since the failure can happen among a set of nodes.
Therefore, the {\rlsm} manager will shift all the participants' states to the
network-failure level (lines \ref{line:AlgRlsmNF4flacFF}, \ref{line:AlgRlsmNF4flacCF}).

We prove the correctness of our failure detection rules above in Appendix~\ref{sec:failureDetectionProof}. We also prove the soundness and
completeness of the {\rlsm} state transition algorithm in Appendix~\ref{sec:rlsmCorrectness} and the
correctness of {\flac} in Appendix ~\ref{sec:flacCorrectness}.

\vspace{-0.5em}
\subsection{Robustness-Downgrade Transitions}
\label{downwd}

The robustness-downgrade transitions shift the RLSM state from a stringent level to
a lenient one.
{\flac} triggers these transitions by monitoring the protocol execution.
Specifically, if there is a participant in the crash-failure level who can consecutively
execute $\numRunCF$ transactions without any failure, then the
RLSM manager can transition its RLSM to the failure-free level (lines
\ref{line:AlgRlsmCheck4DownCF}--\ref{line:AlgRlsmDownCF}).
Likewise, $\numRunNF$ is the number of consecutive failure-free transaction
executions that a participant must perform at the network-failure level before
its RLSM can be shifted to the failure-free level (lines
\ref{line:AlgRlsmCheck4DownNF}--\ref{line:AlgRlsmDownNF}).

The two parameters $\numRunCF$, $\numRunNF$ above control the trade-off between
the protocol adaptivity and the state transitioning frequency.
Intuitively, the smaller values of $\numRunCF$ and $\numRunNF$ enable {\flac} to
react faster upon the change of operating conditions, thus getting more
opportunities to run a more lenient protocol.
However, it also lets {\flac} enter the costly \textit{slow path}
when failures recur quickly.
% regarding the fast-recurrence of failures.
%

\textit{Fine-tuning $\numRunCF$ and $\numRunNF$ by reinforcement learning (RL).}
Generally, it is challenging to fine-tune $\numRunCF$ and $\numRunNF$ to
optimize the performance of {\flac}.
A naive approach is to exhaustively test their possible values, but this trial
and error method is costly.
Also, this test needs to be performed every time the system is reconfigured,
such as, when adding a new node or replacing a machine.
To mitigate this problem, we propose an RL-based optimizer to automatically
select the best $\numRunCF$ and $\numRunNF$.
We employ q-learning ~\cite{watkins1992q} as the RL learner and build our model
as depicted in Figure~\ref{fig:rl}.

\begin{figure}[htbp]
%  \vspace{-0.5em}
  \centering
  \includegraphics[width=0.9\linewidth]{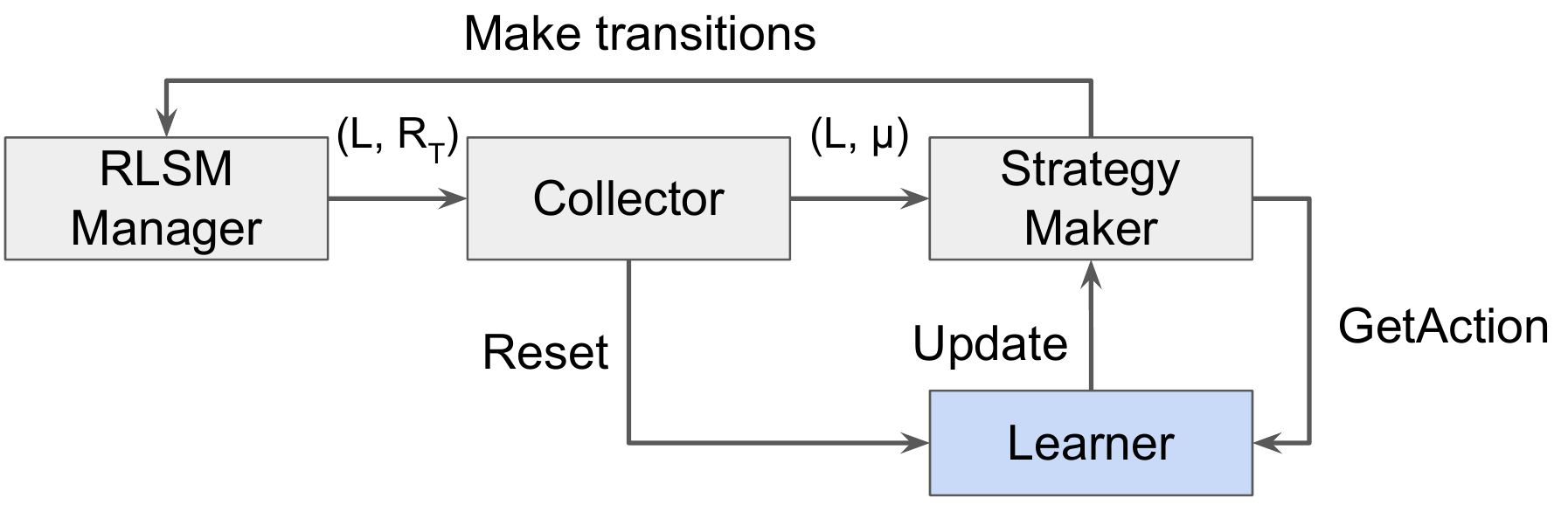}
  \vspace{-0.5em}
  \caption{Overview of the RL-based optimizer,
    where $L, \mu$ denote the current robustness
    level and the reward (average throughput).}
  \label{fig:rl}
%  \vspace{-3mm}
\end{figure}

In essence, the RLSM manager asynchronously sends the execution results $\resultsT$ with
their robustness levels $L$ to the collector.
The collector resets the learner's state once it encounters a level change.
Otherwise, it buffers $(L, \resultsT)$ until enough pairs are collected.
It then calculates the average throughput (transaction per second -- tps) as
the reward $\mu$\,\footnote{Users can decide how to choose a suitable reward
  function to obtain good performance, depending on their applications.
  In this work, we use the average throughput.
  Other systems such as ~\cite{decandia2007dynamo, corbett2013spanner,
    yan2020domino} value tail latency, while ~\cite{debnath2010flashstore} value
  throughput, and \cite{li2019qtune, yoon2018mutant} value both and thus
  express their requirements using more complex functions.} and sends $(L, \mu)$
to the strategy maker, who can use this information to train the learner.
The learner learns a Markov chain on whether to continue waiting or
make the downgrade transitions regarding the current state.
{The strategy maker interacts with the learner during training and
  adjusts the participants' {\rlsm} according to the learner's replies.
% The strategy maker interacts with the learner while training and decides whether
% to make the downgrade transitions.
%
After enough training, the strategy maker can cache the learned strategies as
$\numRunCF$ or $\numRunNF$.
%
% and decides without communicating with the
% learner to reduce computation cost.
%
% \notimportant{
% Specifically, users can decide what good performance means according to their industry or application requirements.
% %
% Some systems such as ~\cite{decandia2007dynamo, corbett2013spanner, yan2020domino} value tail latency, others such as ~\cite{debnath2010flashstore} value throughput, and others like~\cite{li2019qtune, yoon2018mutant} value both and thus may want to express their requirements using more complex functions.
% %%% ooibc: reference for the last bit
% %
% In our protocol, users can change the reward function to meet their needs.
% }
%
%%

% \trungedit{We will show later in our experiment (Section
%   \ref{sec:TurningRLParams})} that
According to our experiment that we will show later in Section
\ref{sec:TurningRLParams}, the throughput gain for parameters greater than 256
is insignificant, and thus we limit the maximum value of $\numRunCF$
and $\numRunNF$ to 256 in our RL model.
With a limited search space, the parameter tuner only takes up to 5 seconds to
tune the parameters.
The users can call the RL-based parameter tuner after a time interval
  (e.g., hourly) or when the system is reconfigured (e.g. replacing a node).

The upward and downward transitions allow a participant's {\rlsm} to
keep its data in memory without durability.
%
% The upward transition can correct too-optimistic assumptions within
% two transaction retries, while the downward transition prevents the system from
% staying at too-pessimistic assumptions.
%
They ensure the fast correction of {\rlsm} whose robustness level may not match actual operating environments.
Thus, the {\rlsm} can be recovered to an arbitrary robustness level without
concerns about protocol liveness.
Our system sets the default {\rlsm} level to the failure-free level to execute
more transactions with lightweight protocols.
%
% If the {\rlsm} on a node constantly lost its data due to crash failure, that node will set {\rlsm}'s default level to the network failure level for less transaction retry and upward transitions.
% %
% In this case, {\flac} could stay at too optimistic assumptions and keep aborting incoming transactions, hurting overall performance.
% %
% For that, the user can configure the robustness level of each node to the network failure level on recovery and let downgrade transitions correct it to an appropriate value.

%%% Local Variables:
%%% mode: latex
%%% TeX-master: "../main"
%%% End:

\section{Evaluation}
\label{exp}

% \subsection{Implementation}

We have implemented {\flac} in a prototype system using mostly Golang, except
the reinforcement learning component is written in Python.
Our system processes transactions similarly to Google's Percolator~\cite{GOO}.
Each transaction retries when it gets aborted due to conflicts (at most 10
times).
Our system does not retry transactions that cannot commit due to consistency
requirements or system failures.
The source code is published at \cite{flacCode}. %\trung{This source code link is
%  expired.}
% \url{https://anonymous.4open.science/r/flac-CC37}.

%%

We have also conducted extensive experiments to evaluate and compare the
performance of {\flac} with
state-of-the-art ACPs: 2PC \cite{2PC}, 3PC
\cite{3PC}, Easy Commit (EC) \cite{gupta2018easycommit}, and PAC
\cite{maiyya2019unifying}.
For the baselines 2PC, 3PC, and EC, we adopted the implementation design in
\cite{gupta2018easycommit}.
For PAC \cite{maiyya2019unifying}, we implemented its centralized variant,
denoted as C-PAC, where the coordinator is fixed as the leader.
%
% Similar to PAC and 3PC~\cite{maiyya2019unifying, 3PC}, C-PAC entails three
% phases to execute a transaction: (i) it collects initial votes from all the
% nodes, (ii) reaches an agreement on a majority of nodes, and (iii)
% asynchronously sends the decision to all nodes.
%
% Finally, all of our source codes are publicly available at:
% \url{\pracWebsite}.
%
Note that we do not compare {\flac} with protocols designed for systems which
replicate data across different nodes like TAPIR~\cite{zhang2018building},
G-PAC~\cite{maiyya2019unifying}
%\phxedit{or protocols that make strong assumptions about system like 1PC~\cite{1PCY}}
since their settings are different from {\flac}'s
 system model.

In the following, we will present our detailed experimental results regarding to
three main aspects: (1) performance comparison of {\flac} versus other protocols: 2PC, 3PC, EC, C-PAC, (2) impact of {\flac}'s
parameters (network buffer parameter $r$, numbers of consecutive failure-free
runs $\numRunCF$, $\numRunNF$) on its performance, and (3) effectiveness of
reinforcement learning in enhancing the performance of {\flac} for unstable
operating conditions.
Appendix~\ref{sec:experimentApp} also includes experiments about the effect of network delays,
a comparison between {\flac} and replicated commit protocols, {\flac}'s message delay side effects, and a sensitivity analysis showing {\flac}'s performance under varying cross-shared transaction percentages.

\subsection{Experiment Setup}
\label{setup}

We conducted experiments on 11 server nodes deployed in a data cluster with
Ubuntu 20.04 system.
Each server consists of 10x2 3.7GHz Intel Xeon CPU W-1290P processor, 128GB of
DRAM, and connects to the cluster by 1Gbps network with 0.2ms network delay.
For our experiments, we configure one node as the coordinator while the others
are participants.
Each node maintains a single server process that handles requests from others
using lightweight execution threads.
% with goroutines.
%%% ooibc: what are these coroutines?
% phx: it is goroutines, lightweighted threads.
We simulate a cross-datacenter environment by adding 10ms latency
%\footnote{Our
  % experiments later show that 10ms enables all protocols to achieve good
  % performance while being able to clearly differentiate their performance.
  % \trung{the part ``clearly differentiate'' make people feel that you design the
  %   experiment to benefit you}}
    to packages sent among these nodes.

\textbf{Benchmarks.} For micro benchmarking, we run the experiment on a
YCSB-like micro-benchmark proposed in ~\cite{cooper2010benchmarking,
  maiyya2019unifying, qadah2021highly}.
Clients continuously generate read and write multi-record transactions with
closed loops, whose data access distribution fits Zipfian
distribution~\cite{cooper2010benchmarking} with a skew factor that controls
contention.
We make all transactions in this benchmark cross-shard by default to evaluate
the performance of all protocols in handling cross-shard transactions.
%
% , we avoid the impact of
% single-shard transactions by making all transactions cross-shard by default.
%
%Moreover, to explain how protocols affect single-region transactions, we conduct a sensitive analysis of them by varying the percentage of cross-shard transactions.
%
% Transaction latency is calculated by subtracting the start time of a transaction
% from its commit time on the coordinator node.
%
%
Tail latency (99\% latency or P99 latency) is used to measure transaction
latency since it can reflect precisely user experiences.
This latency is also widely used by existing systems \cite{chen2021achieving,
  decandia2007dynamo, suresh2015c3, prekas2017zygos}.
%
%\phxedit{In default, we use 512 clients, 0.5 skew factor.}
% \phxedit{ We only use ACPs for cross-shard transactions and commit all
%   single-shard transactions in a single message round trip as
%   in~\cite{gupta2018easycommit, lu2021epoch}. }
% %
%   \trung{This sentence contradicts with the previous sentence: making all
%   transactions cross-shard by default?}
%
%   \phxsays{In the micro-benchmark, we let all transactions to be cross-shard
%   transactions by default, according to the requirement of reviewer.
%   %
%   But in TPC-C, and the experiment for Figure~\ref{fig:cross}, we have
%   single-shard transactions. }
%
%   \trung{I moved that sentence to the TPC-C paragraph.}
% %

For macro benchmarking, we use TPC-C, a standard benchmark for OLTP
systems that include three types of read-write and two types of read-only
transactions~\cite{council2010tpc}.
We utilize three warehouses located separately in three server nodes.
Each warehouse contains 10 different districts, and each district maintains the
information of 3000 customers.
We use closed loops to simulate clients that keep sending transactions to the
coordinator and control the contention by adjusting the client thread number.
For this TPC-C benchmark, we only use ACPs for cross-shard transactions and
commit all single-shard transactions in a single message round trip as
in~\cite{gupta2018easycommit, lu2021epoch}.

\textbf{Parameters.}
%
% Two parameters $\numRunCF$ and $\numRunNF$ are used to monitor the number of
% consecutive failure-free executions of {\flac} to trigger robustness-downgrade
% transitions, as described earlier in Definition \ref{def-rlsm} and Section
% \ref{sec:RlsmAlgo}.
%
We conduct the experiments in two operating environments: (i) the failure-free
environment, and (ii) the failure environment where crash and network failures
frequently occurs.
A parameter $\tau$ is used to control how failures are generated: in every cycle
of $2 \tau$ seconds, the system exhibits $\tau$ seconds of crash or network
failures, and then resumes to normal for the remaining $\tau$ seconds.
%
% Finally, we use the network buffer parameter $r$ to adjust the
% \phxedit{timeout windows.} delay of long ^message transmission.
%
%The upper bound message delay is computed by multiplying $r$ with the longest
%message delay $\sigma$ between every two participants (measured in 100 sampling
%times).
%
By default, we set the number of clients to 512 (YCSB-like) and 3000 (TPC-C), skew factor to 0.5, and network
buffer parameter $r$ to 1.
We also follow the existing works \cite{maiyya2019unifying, 1PCD} to set the
number of participant nodes to 3 when assessing the protocols' detailed
performance.
An experiment with varying number of participants, ranging from 3 to 10, is also
conducted to evaluate their scalability.

\subsection{Performance in Failure-Free Environment}
\label{failure-free}

In this environment, no failure will occur.
We start to record the experimental data 5 seconds after all the machines have
warmed up.
All tests are run for 10 times and we record their average results.

\begin{figure}[ht]
  \centering
    \subfigure[Throughput (skew factor = 0.5).]{%
    \includegraphics[width=0.48\linewidth]{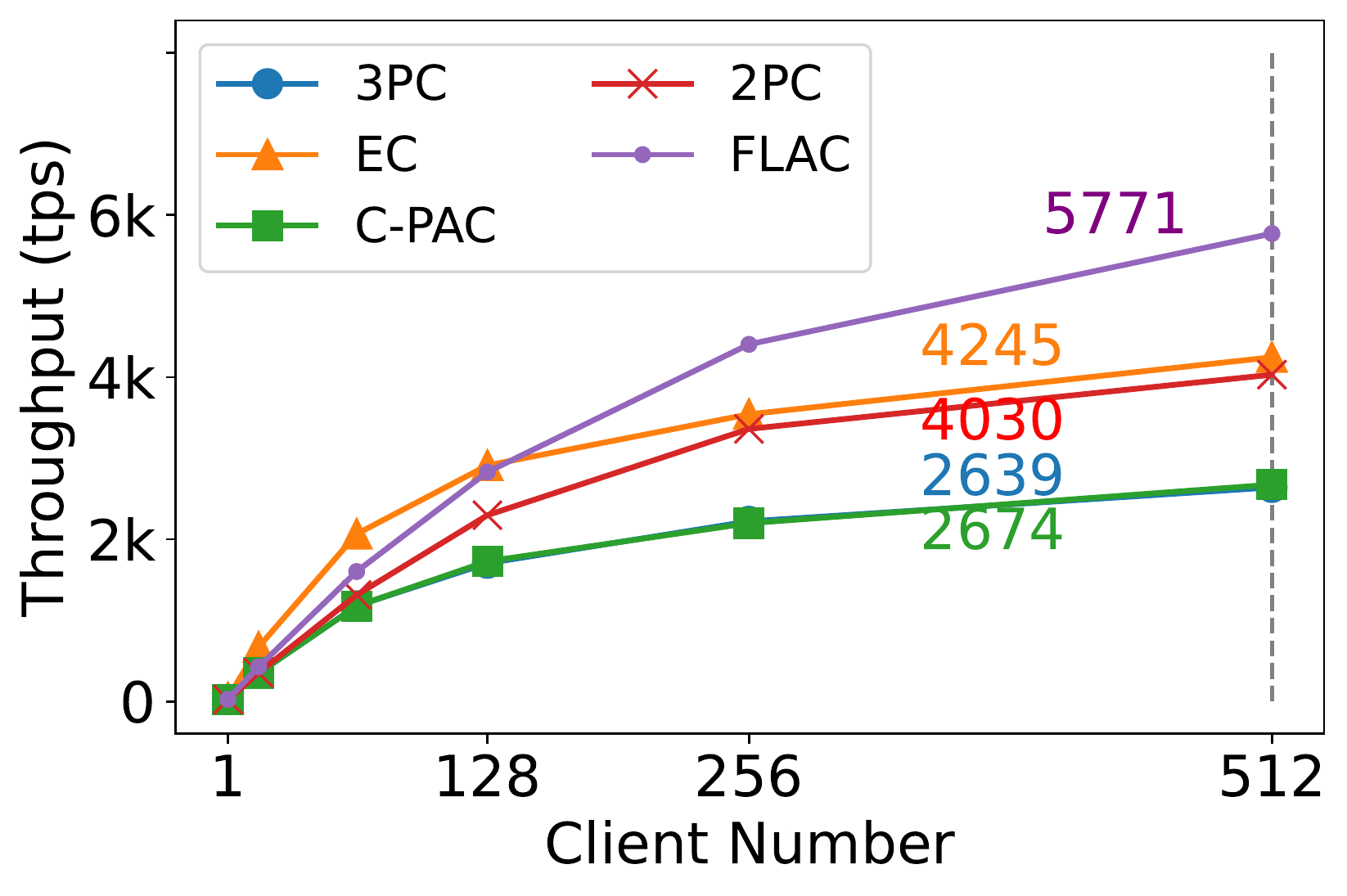}
    \label{fig:ycsb-th}}
  \subfigure[Tail latency (skew factor = 0.5).]{%
    \includegraphics[width=0.48\linewidth]{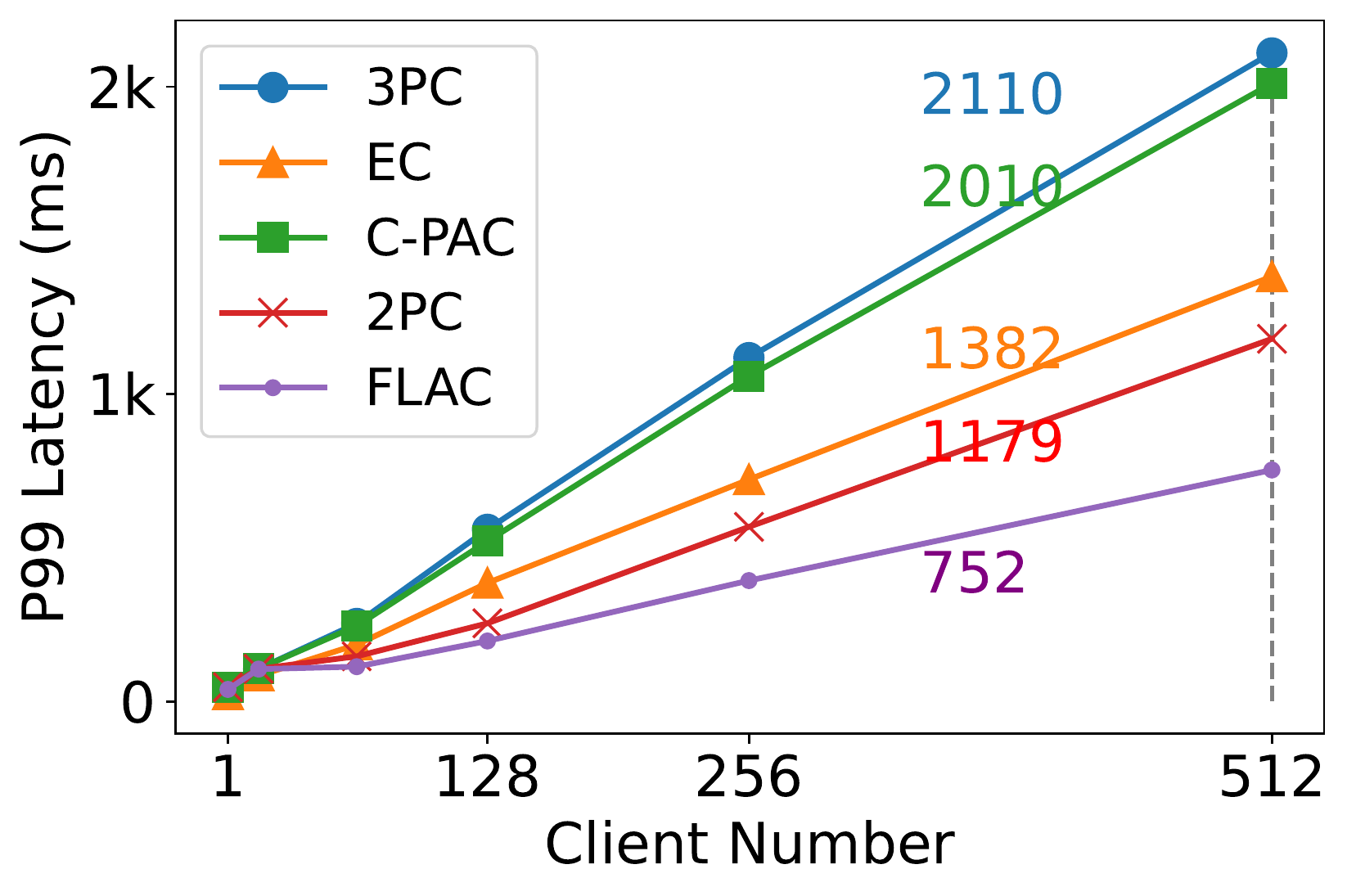}
    \label{fig:ycsb-la}}
   \subfigure[Throughput (512 clients).]{%
    \includegraphics[width=0.48\linewidth]{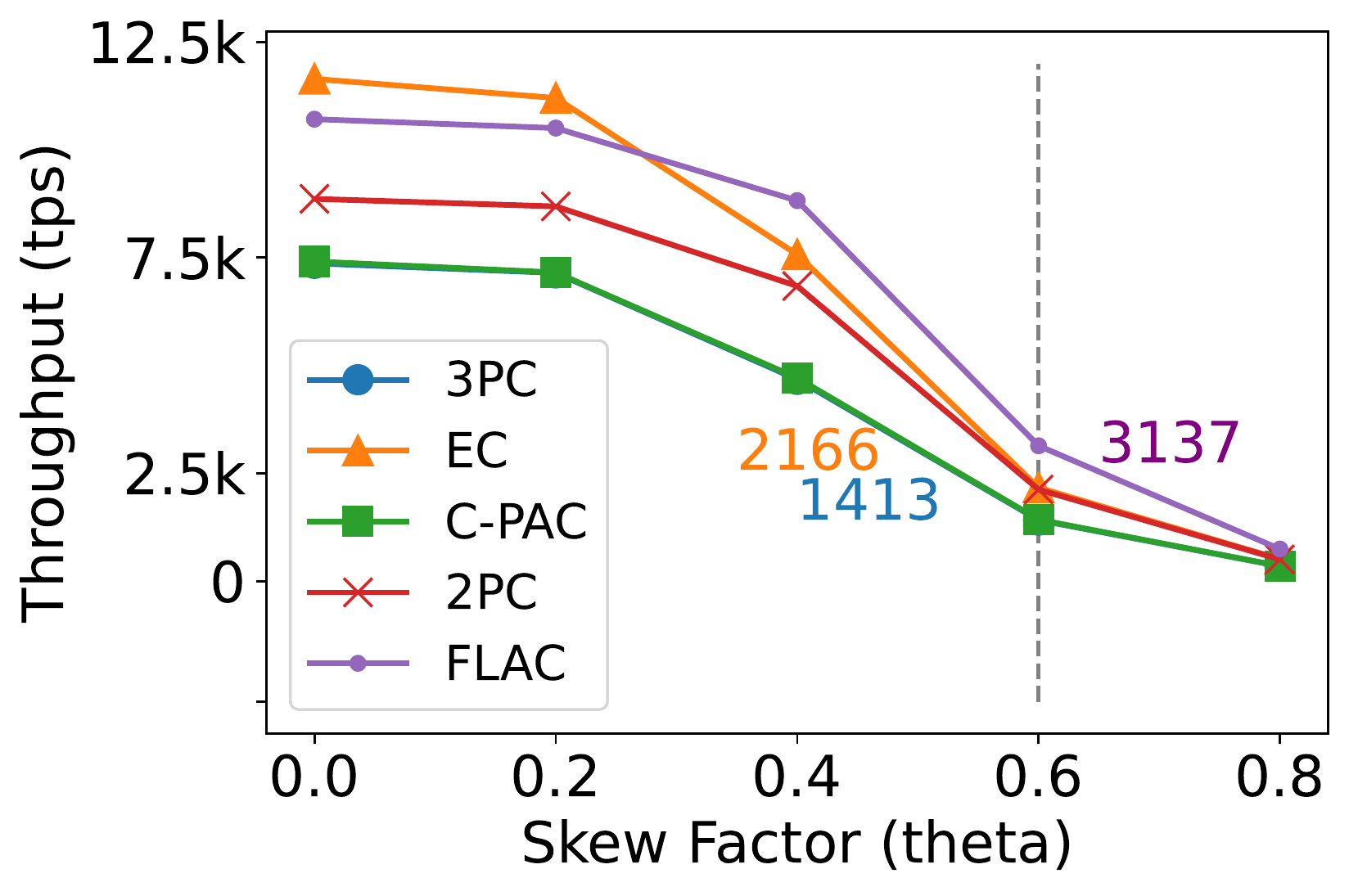}
  \label{fig:skew-th}}
  \subfigure[Tail latency (512 clients).]{%
    \includegraphics[width=0.48\linewidth]{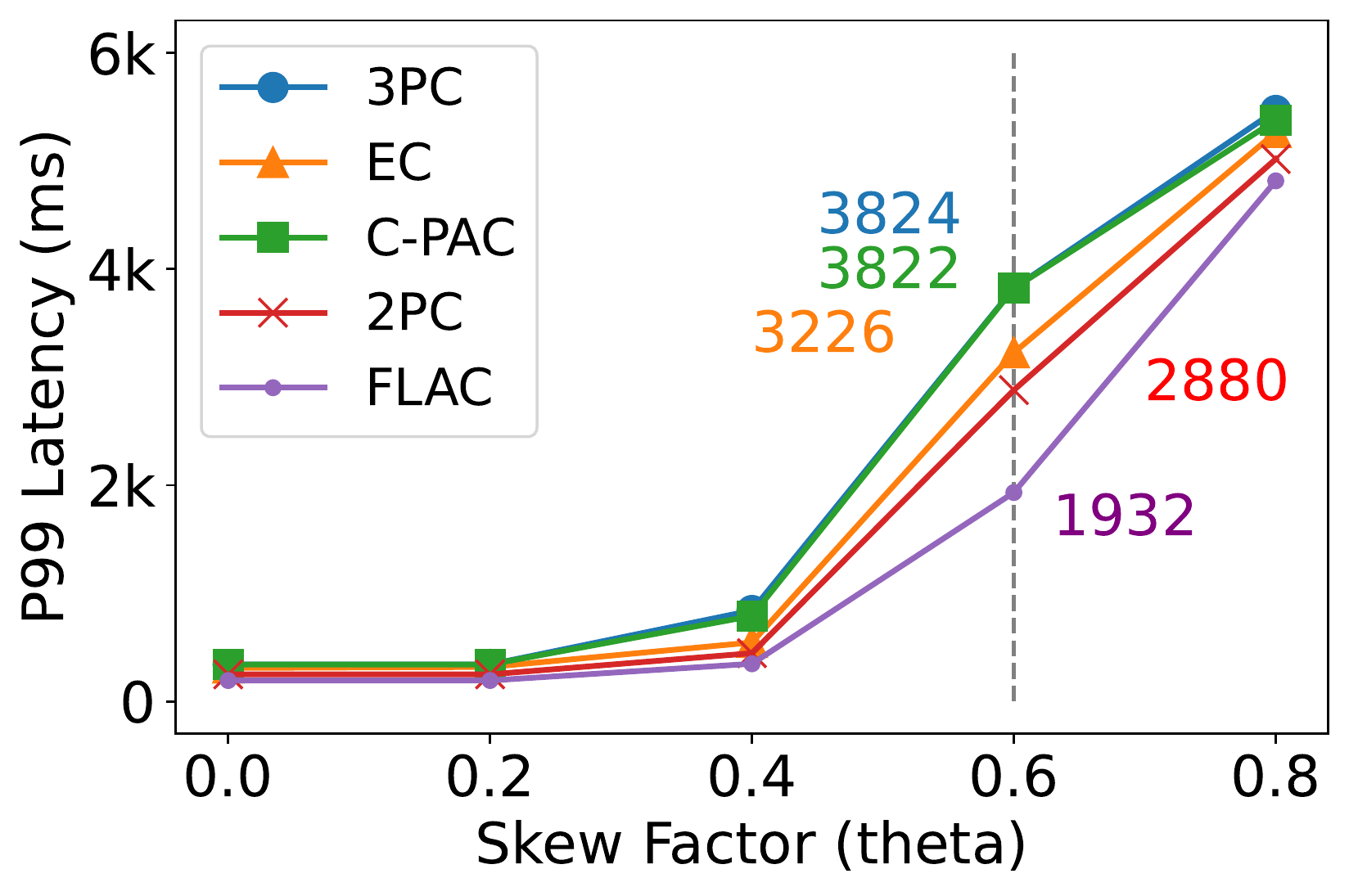}
  \label{fig:skew-la}}
  \vspace{-1em}
  \caption{Performance of all protocols under the YCSB-like micro-benchmark
    with varying client numbers, skew factors.}
%  \trung{Need to specify other settings: client number, skew factor, network delay?}
  \label{fig:ycsb}
\end{figure}

\subsubsection{Experiment with the YCSB-like micro-benchmark, under varying
  numbers of clients and different skew factors.}
\label{sec:YCSBVaryContention}
Figure \ref{fig:ycsb} presents our experimental results, in which, {\flac}
demonstrates a significant gain in throughput, with a substantial reduction in
latency in all these settings.
For example, when there are 512 clients (Figures \ref{fig:ycsb}a and
\ref{fig:ycsb}b), {\flac} achieves 1.36x, 1.43x, 2.19x, and 2.16x throughput
speedup compared to EC, 2PC, 3PC, and C-PAC, while {\flac}'s latency is about
54.4\%, 63.8\%, 35.6\%, and 37.4\% of the others respectively.
When the skew factor is 0.6 (Figures \ref{fig:ycsb}c and \ref{fig:ycsb}d), it
achieves from 1.45x to 2.22x throughput speedup and from 32.9\% to 49.5\%
decrease in latency compared to other protocols EC, 2PC, 3PC, and C-PAC.
Again, these better performances of {\flac} are contributed by the protocol
{\flacFF} in this failure-free environment.
More specifically, {\flacFF} requires only 1 message delay on each participant
to commit or abort transactions, while all the protocols EC, 2PC, 3PC, and C-PAC
require at least 2 message delays (Table~\ref{tab:cp}).
%
% outperforms the other protocols because it only takes half the time on
% participants in its failure-free level to commit or abort transactions and thus,
% it reduces the conflicts between transactions and improves throughput.
% %
% \phxedit{Furthermore, fewer conflicts also enables {\flac} to commit
%   transactions with lesser retry time, and hence achieves lower latency as
%   presented in Figure~\ref{fig:breakdown-high}.}
%
% In this experiment, when the skew factor increases to 0.8, all the protocols
% obtain the same large tail latency.
% %
% %The reason is that,
% This is caused by the fact that as the skew factor increases to 0.8, the
% workload contention grows so high that more than 1\% of transactions are
% executed with the maximum retry time, thereby increasing the tail latency of all
% the protocols to around 5s.

We also observe in Figure \ref{fig:ycsb} that {\flac} performs slightly worse
than EC in low contention settings with small number of clients
(1{\textasciitilde}128) and low value of skew factor (0.0{\textasciitilde}0.3).
In the low contention setting, the protocols' performance is dependent mainly on
the latency of the coordinator side.
EC can perform better in such setting since it requires only 2 message delays on
the coordinator side, which is less than 3 of {\flacFF} (Table~\ref{tab:cp}).

\begin{figure}[ht]
  \centering
  \subfigure[16 clients.]{%
    \includegraphics[width=0.48\linewidth]{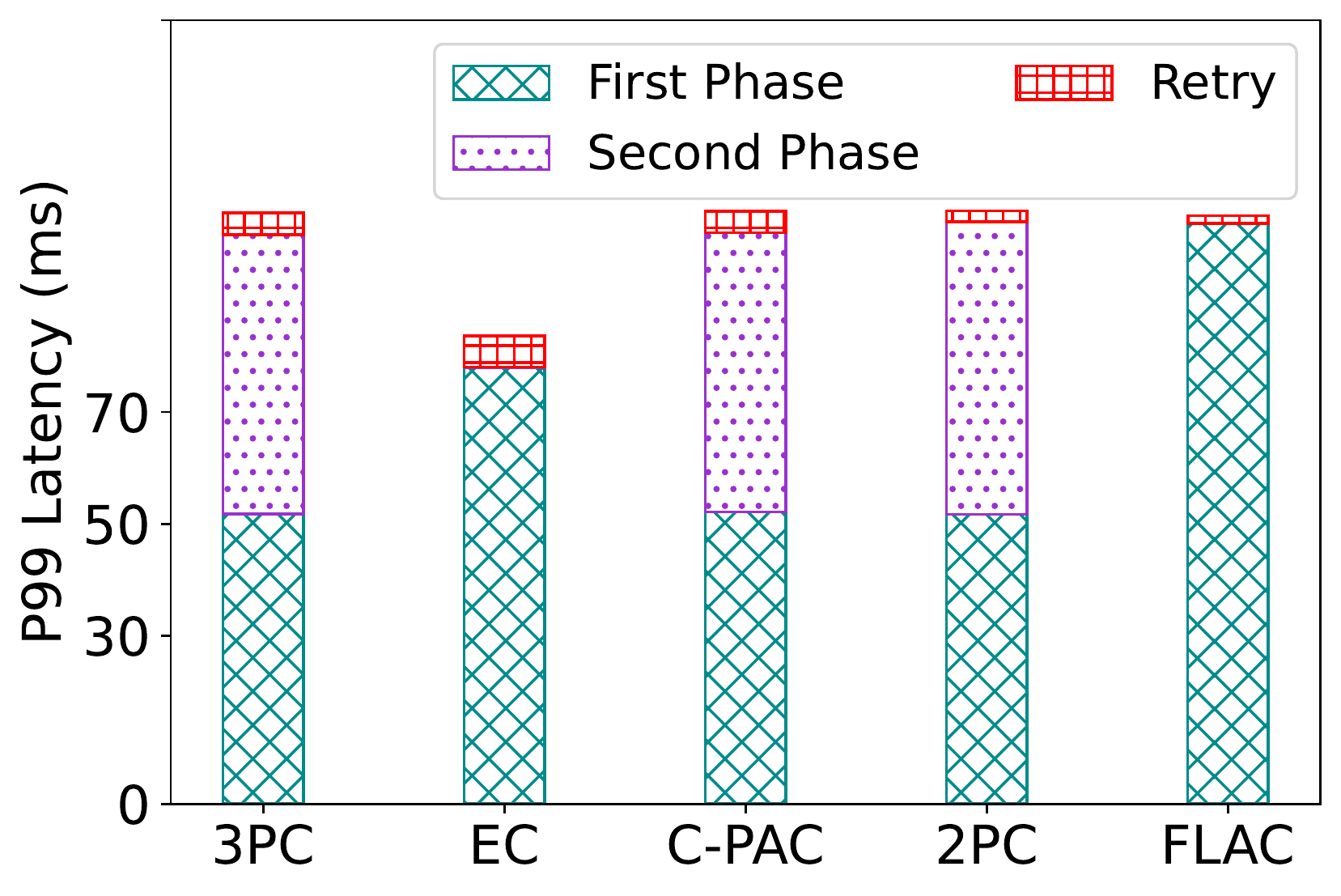}
    \label{fig:breakdown-low}}
  \subfigure[512 clients.]{%
    \includegraphics[width=0.48\linewidth]{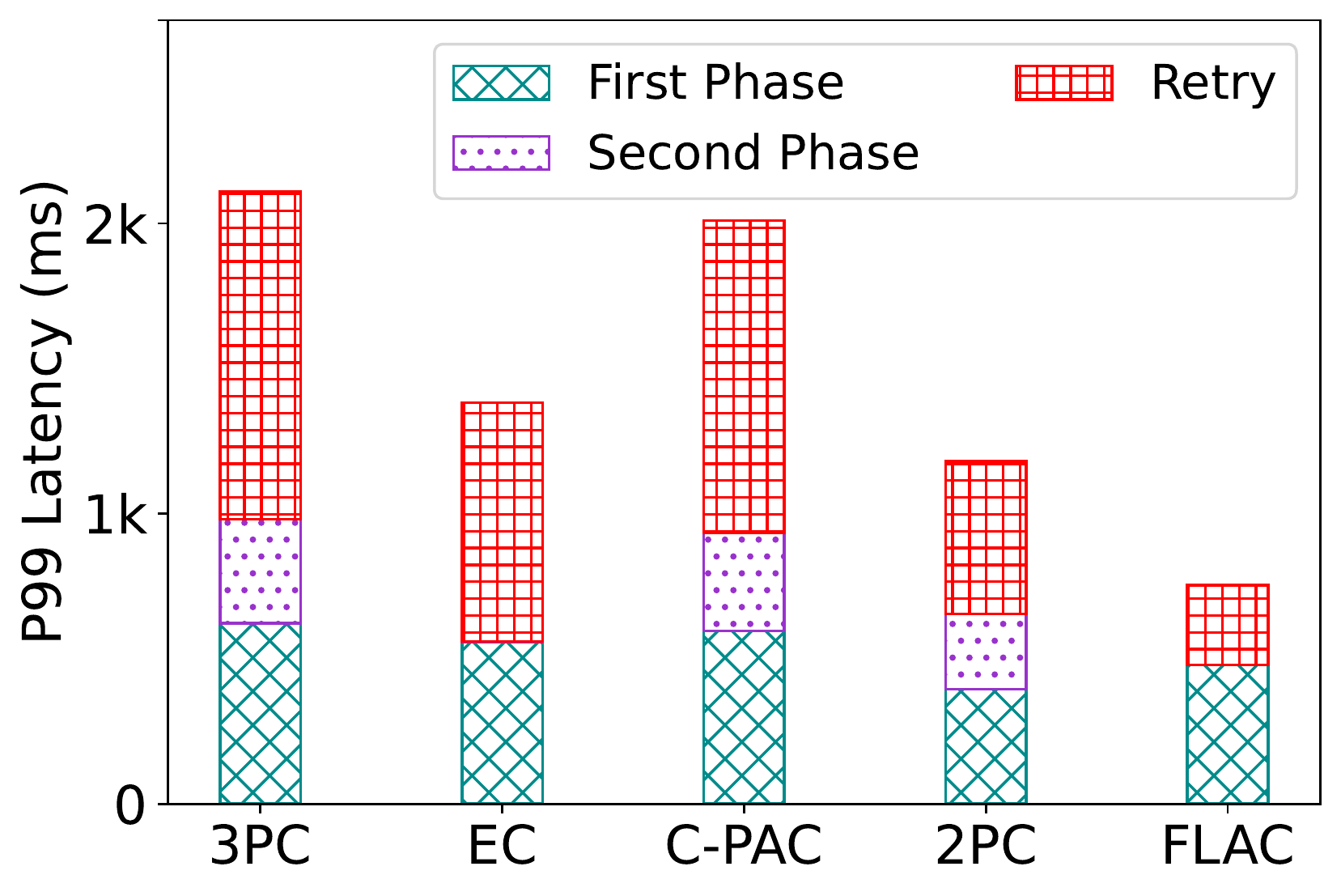}
    \label{fig:breakdown-high}}
  \vspace{-1em}
  \caption{Latency breakdown under the YCSB-like micro-benchmark when
    there are 16 and 512 clients.
    % EC and FLAC take nearly 0 delay in their second phase for "transmission
    % before decide"~\cite{gupta2018easycommit}
  }
  % ^    \trung{Need to specify other settings: skew factor, network delay?}
  \label{fig:breakdown}
\end{figure}

The above observation is also consistent with our experimental results on
latency breakdown of all protocols in Figure~\ref{fig:breakdown}.
In low contention, EC performs the best for its fewer coordinator side delays
(Figure~\ref{fig:breakdown-low}), while {\flac} outperforms others for the
reduce in retry cost thanks to its less blocking between transactions
(Figure~\ref{fig:breakdown-high}).

\subsubsection{Scalability evaluation of all protocols with the YCSB-like
  micro-benchmark, under a varying number of participants.}
As mentioned previously, we follow existing works \cite{maiyya2019unifying,
  1PCD} to set the number of participants to 3.
However, in practice, users can choose to use a different number of participants
depending on applications.
Therefore, we conduct this experiment to study how {\flac} and other protocols
can scale with a larger number of participants, ranging from 3 to 10.
According to Figure~\ref{fig:scale}, when the number of participants increases,
the throughput of all the protocols reduces, while their latency increases.
Such decay in performance is due to the long occupation of transaction
resources across multiple nodes, when the number of nodes gets larger.
Nevertheless, {\flac} still performs better than others in all settings.
For example, in the settings of 3 and 10 participant nodes, it respectively
obtains from 1.36x to 2.19x and from 1.20x to 1.62x throughput speedup
  more than other protocols.
This performance gain is also due to the similar reason we explained before: the
protocol {\flacFF}, which operates in this failure-free environment,
requires less number of participant-side message delays than other protocols
(Table~\ref{tab:cp}).

\begin{figure}[ht]
  \centering
    \subfigure[Throughput.]{\includegraphics[width=0.48\linewidth]{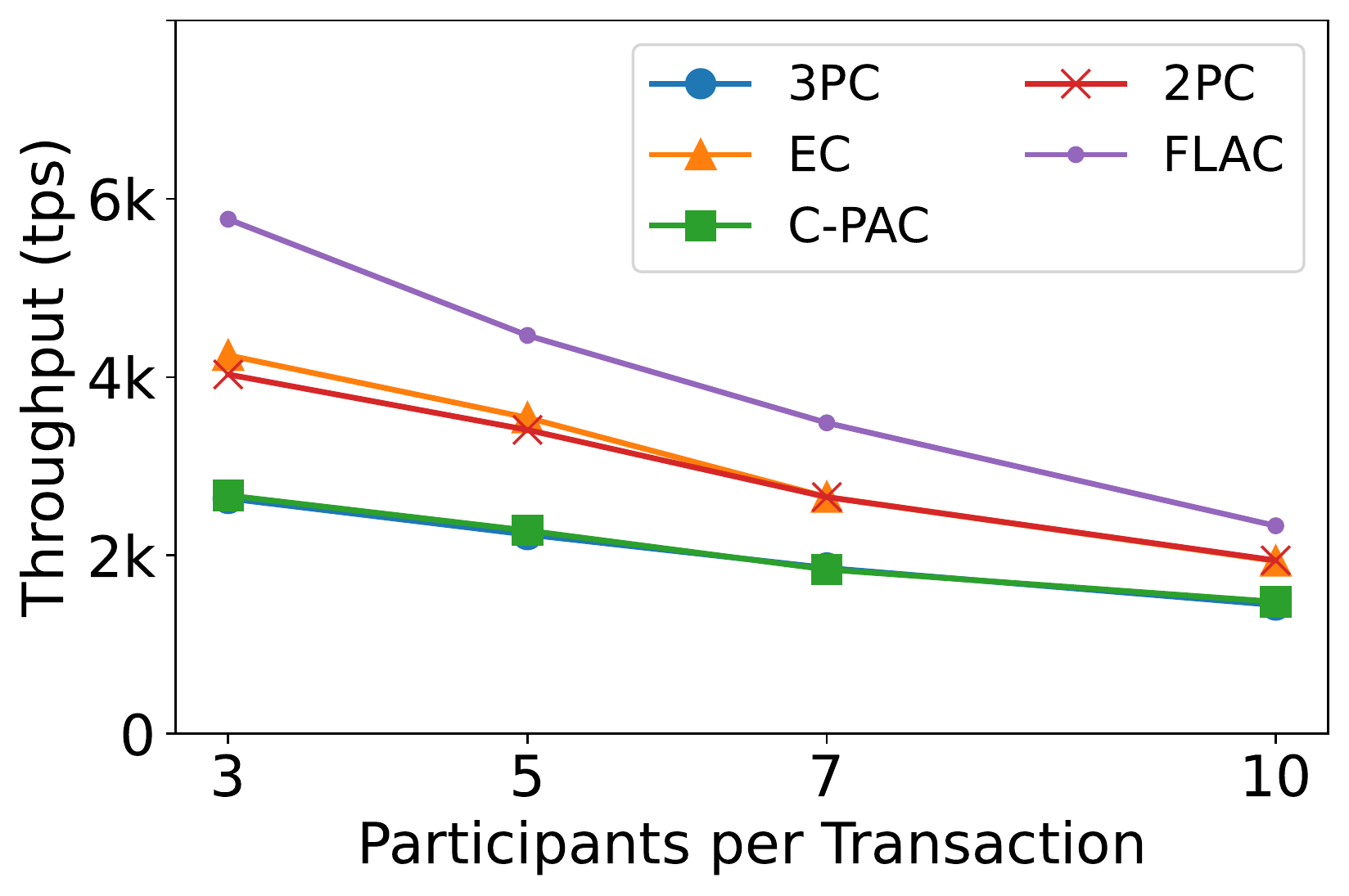}
    \label{fig:scale-th}}
  \subfigure[Tail latency.]{%
    \includegraphics[width=0.48\linewidth]{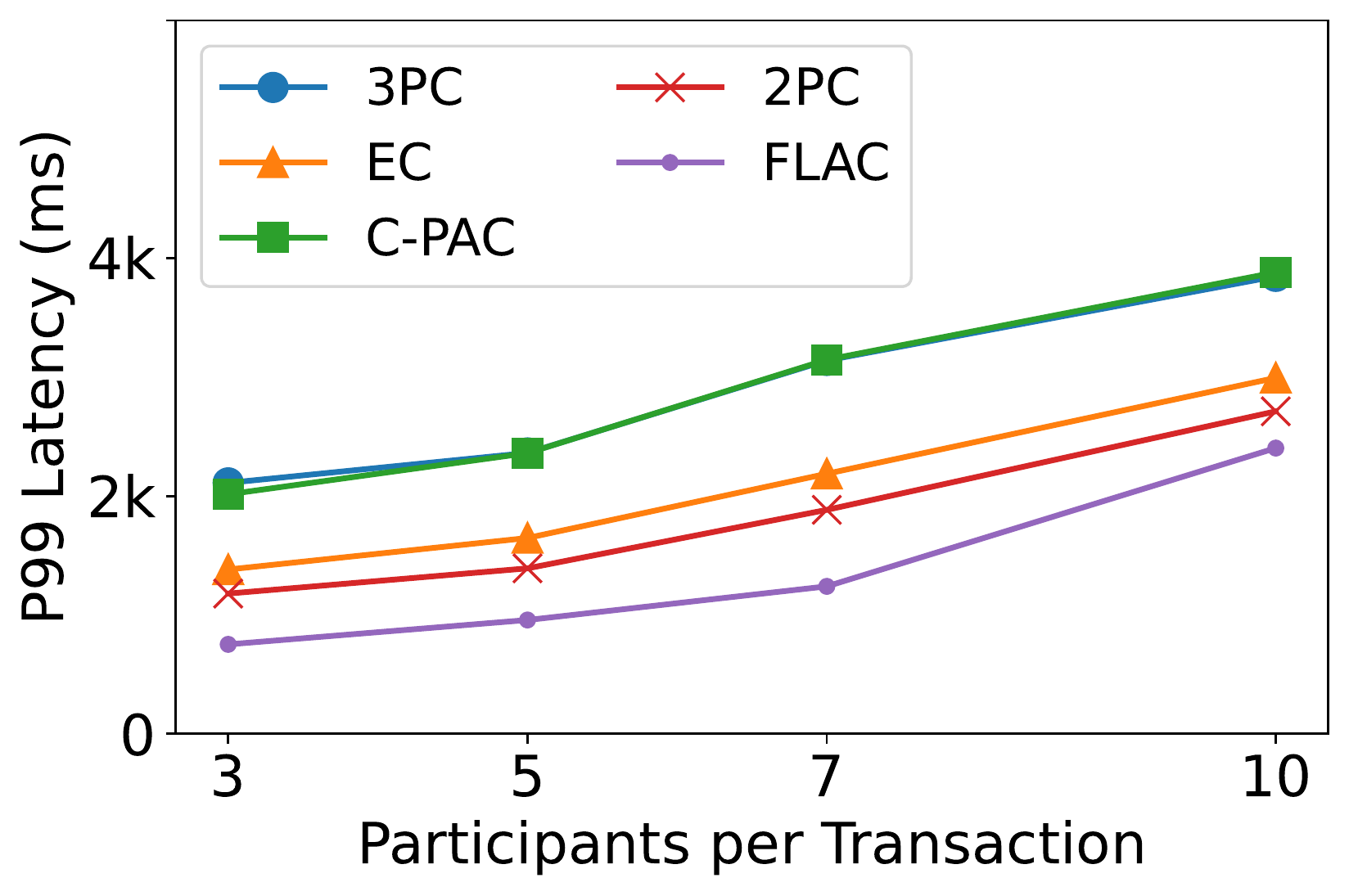}
    \label{fig:scale-la}}
  \vspace{-1em}
  \caption{Performance under the YCSB-like micro-benchmark with
    varying numbers of participants.}
%  \trung{We can remove the dotted line at 10 in Figure (b)?}
  \label{fig:scale}
\end{figure}

\subsubsection{Impact of the Network Buffer Parameter $r$}
\label{effect_r}

As described in Section ~\ref{LP1}, this parameter is used to adjust the message
delay upper bound for different network conditions.
To study its impact on protocol performance, we evaluate {\flac} with the TPC-C
benchmark under different contentions when the number of clients varies.
The results are presented in Figure~\ref{fig:r}.
Firstly, when $r$ changes from 0.5 to 1, the throughput increases from 13.0K tps
to 21.7K tps and the latency decreases from 684ms to 321ms.
With smaller $r$, the system assumes a shorter network timeout to collect
messages.
Consequently, more messages fail to arrive within the network timeout, and
{\flac} regards them as network failures, causing it to incline to the more
stringent network-failure level.
Secondly, when $r$ changes from 0.1 to 0.5 and from 1 to 8, the throughput
decreases from 14.1K to 13.0K and from 21.7K to 12.2K, while the latency
increases from 556ms to 684ms and from 321ms to 683ms, respectively.
This degrading of {\flac}'s performance while $r$ increases is because the
coordinator waits for a longer time to get all participants' results before
sending its decision back to them.

\begin{figure}[ht]
  \centering
  \subfigure[Throughput.]{%
    \includegraphics[width=0.476\linewidth]{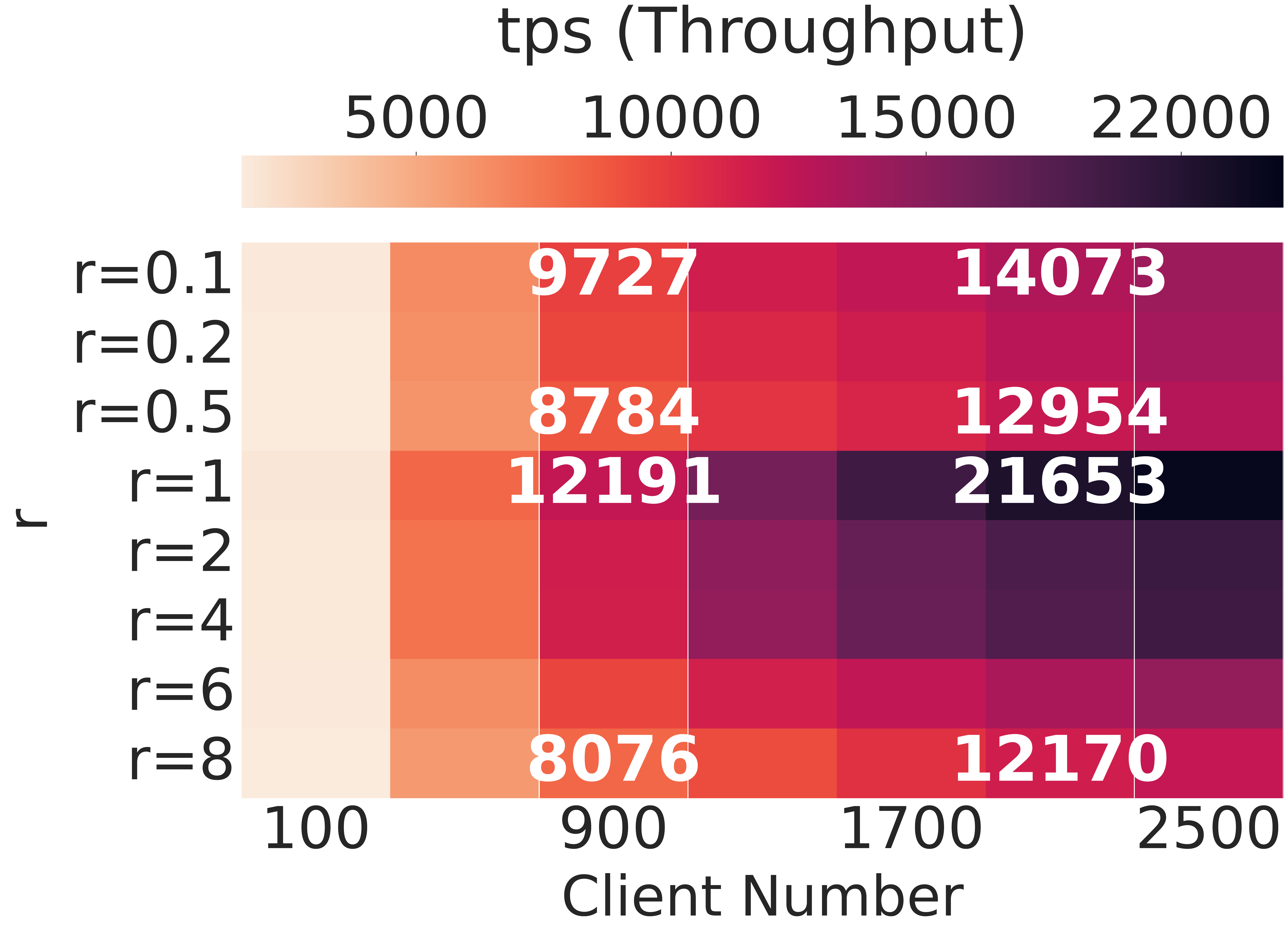}
    \label{fig:rs-th}}
  \subfigure[Tail latency.]{%
    \includegraphics[width=0.476\linewidth]{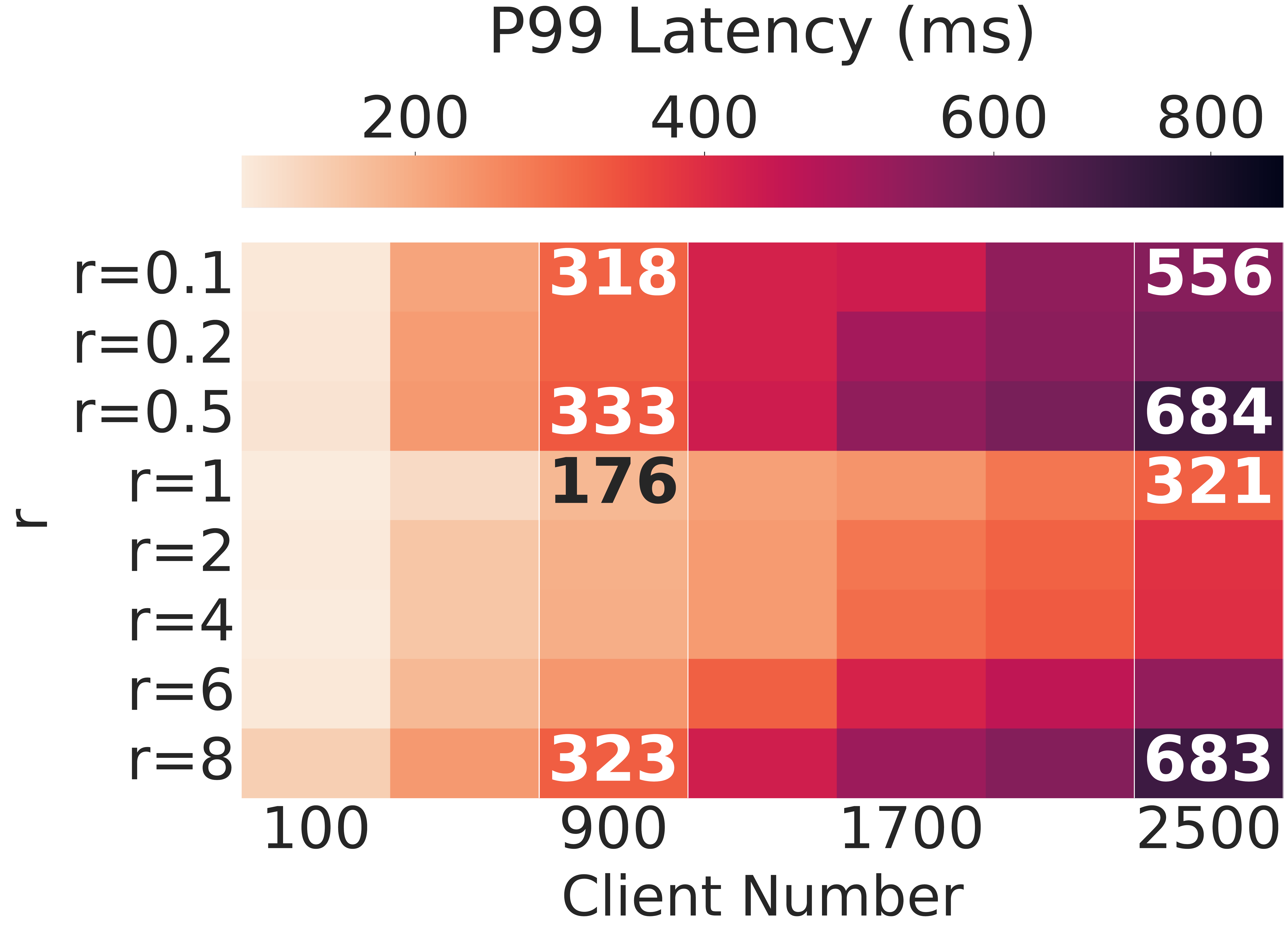}
    \label{fig:rs-la}}
  \vspace{-1em}
  \caption{Effects of network buffer
    parameter $r$ on {\flac}. Darker color indicates higher latency or
    throughput.}
    \label{fig:r}
\end{figure}

\subsection{Performance on Failure Environment}

Failure environment is created by frequently injecting crashes or delays into
nodes and network connections~\footnote{We pick one participant and let it stop or delay message handling.}.
This injection is controlled by the parameter $\tau$ (Section \ref{setup}): a
failure is triggered and last in $\tau$ seconds for a periodic cycle of $2\tau$
seconds.
We record experimental data after the machine warm up time in $5 + 2\tau$
seconds\footnote{$5$ seconds is the time needed to warm up all the machines, and
  $2 \tau$ seconds is one buffering time cycle needed for the performance of all
  the protocols to become stable.}.

\subsubsection{Experiment with the TPC-C benchmark and environments containing
  only crash failures or only network failures.}
  % \phxsays{Trung: from here.}

\begin{figure}[ht]
  \centering
  \subfigure[Throughput.]{%
    \includegraphics[width=0.48\linewidth]{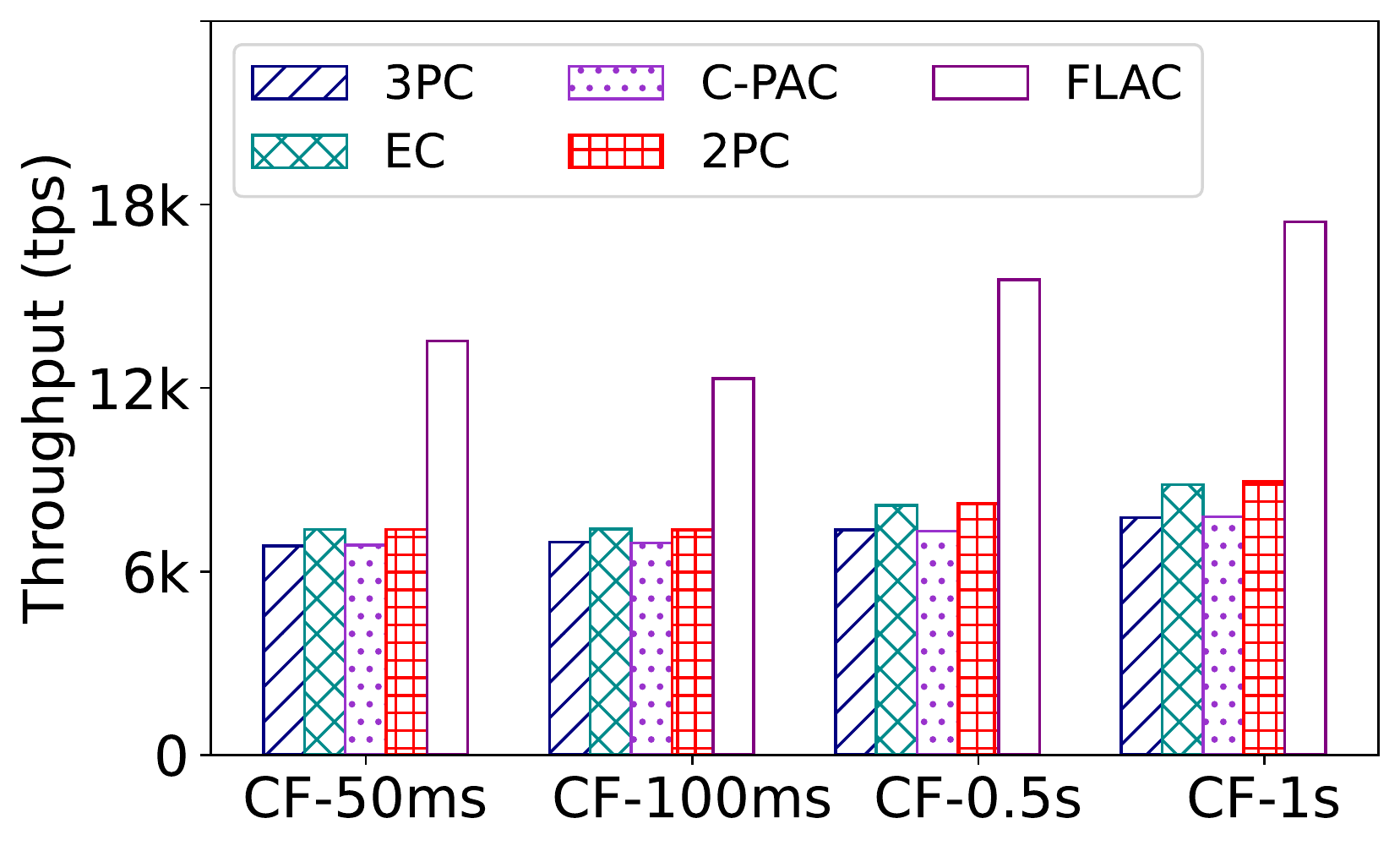}
    \label{fig:unstable-cf-th}}
  \subfigure[Tail Latency.]{%
    \includegraphics[width=0.48\linewidth]{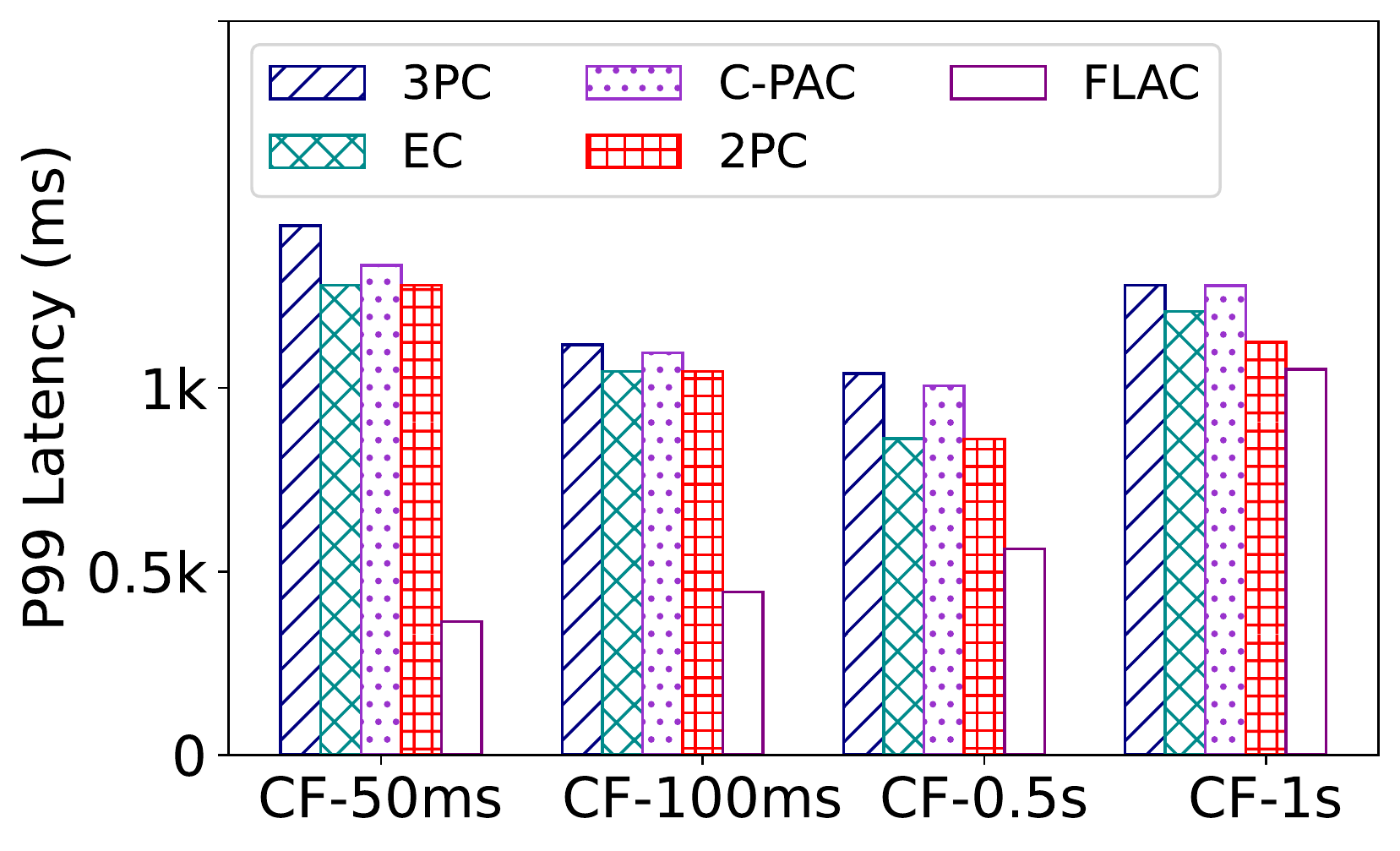}
    \label{fig:unstable-cf-la}}
      \subfigure[Throughput.]{%
    \includegraphics[width=0.48\linewidth]{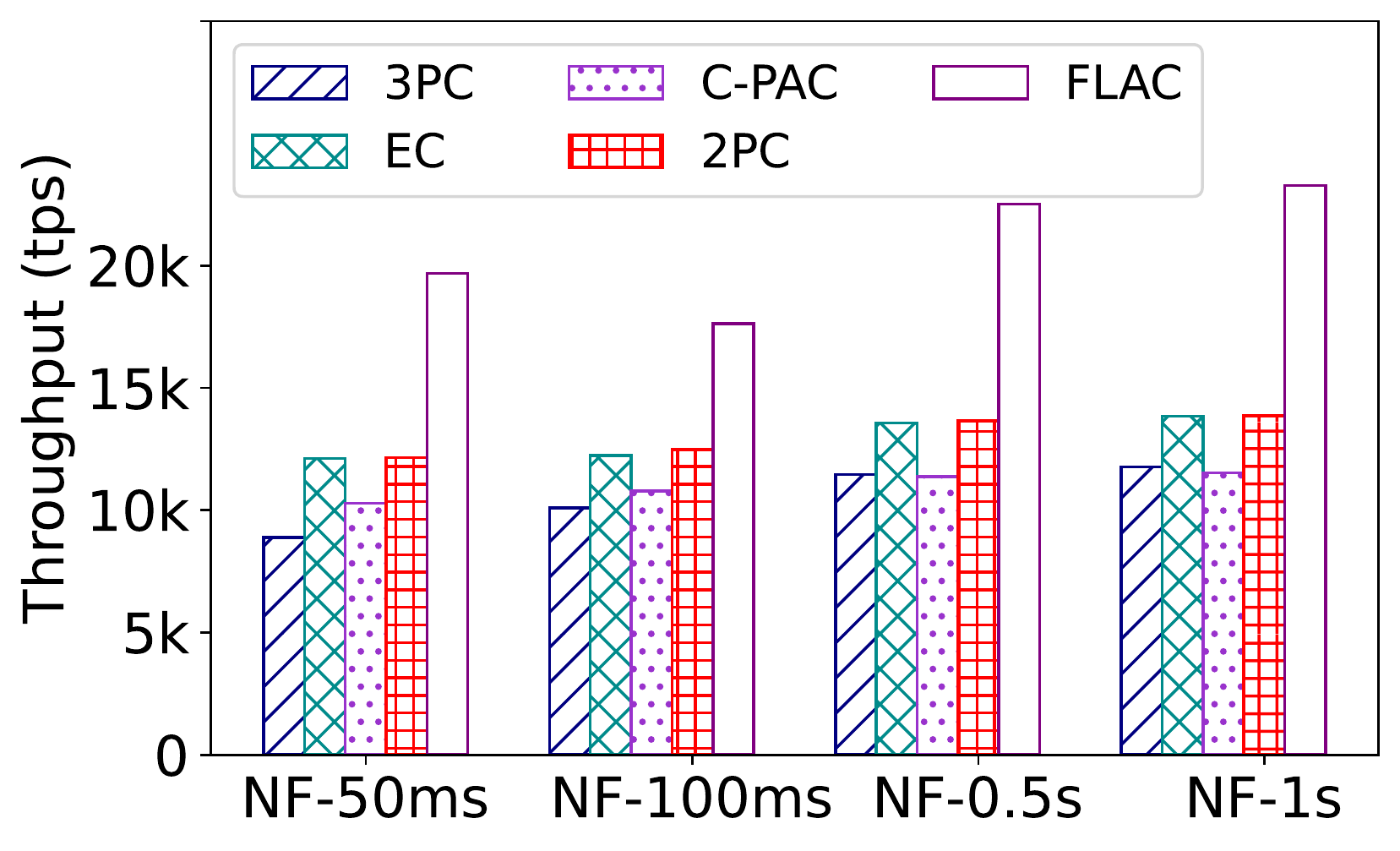}
    \label{fig:unstable-nf-th}}
  \subfigure[Tail Latency.]{%
    \includegraphics[width=0.48\linewidth]{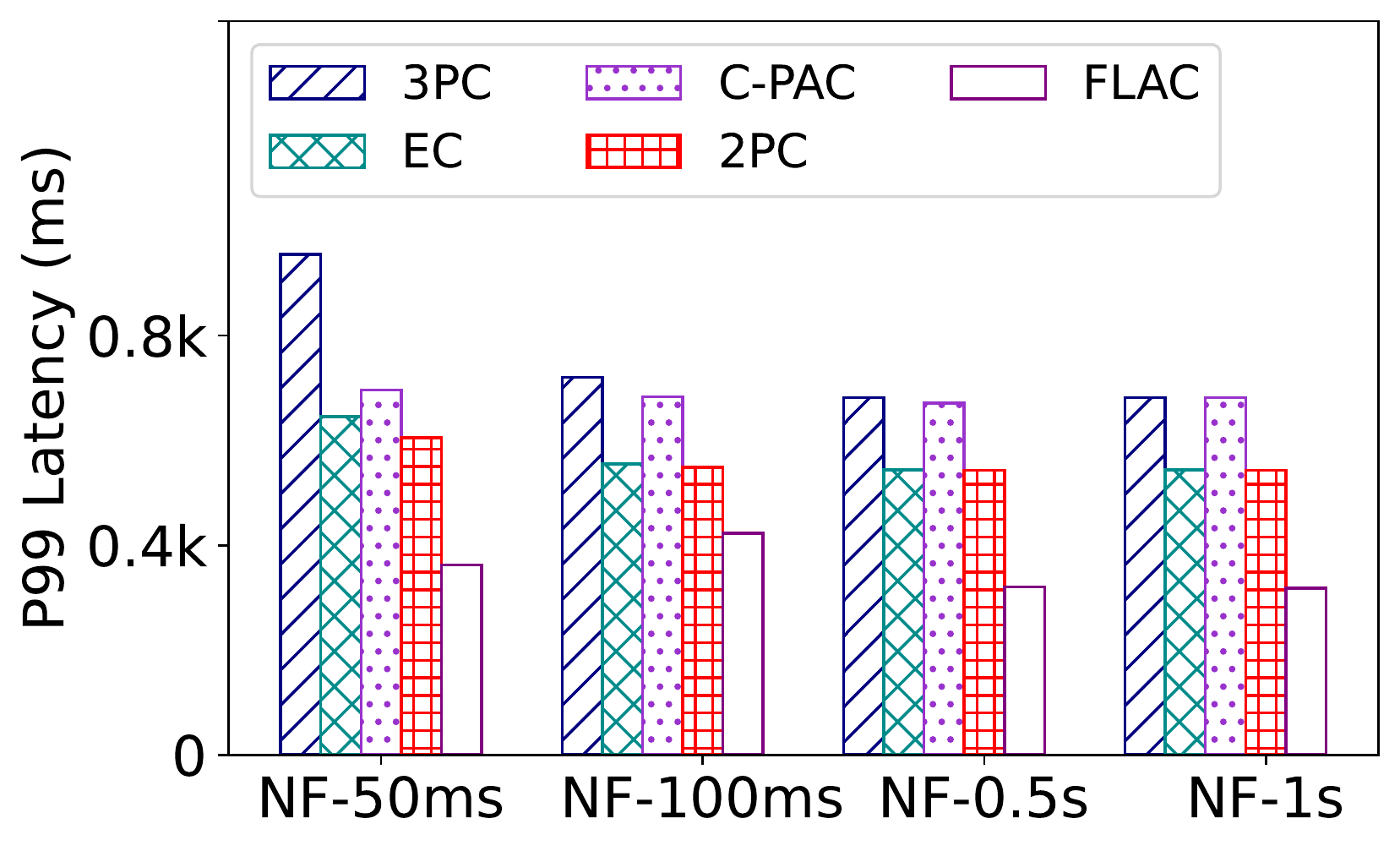}
    \label{fig:unstable-nf-la}}
  \vspace{-1em}
  \caption{Protocol performance under TPC-C benchmark with environments having
    only crash (a, b) or only network failures (c, d).}
  \label{fig:unstable}
\end{figure}

% We now evaluate the performance of different protocols in unstable environments.
%
Figure~\ref{fig:unstable} presents the experimental results in different
environments where crash failures or network failures happen on a participant
periodically.
Here, the notation {CF{-}50ms} (or {NF{-}50ms}) means that the environment will
repeatedly exhibit a crash failure (or a network failure) for 50ms and then
resume to normal for another 50ms ($\tau = 50ms$).
In all failure settings ($\tau = 50ms, 100ms, 0.5s, 1s$), {\flac} always achieves
better performance than other protocols.
Figures \ref{fig:unstable-cf-th} and \ref{fig:unstable-cf-la} show that with
periodic crash failures, {\flac} gains from 1.67x to 2.25x throughput speedup,
while its latency drops from 25.2\% to 93.3\%, compared to the next two best
protocols EC and 2PC.
Such improvement is derived from {\flac}'s capability to select and execute the
most suitable protocol between {\flacFF} and {\flacCF}.
%
%The reason is that, as operating conditions change, {\flac} switches between {\flacFF} and {\flacCF}, which can perform well in failure-free and crash failure environments respectively.
%
When network failures happen (Figures \ref{fig:unstable-nf-th} and
\ref{fig:unstable-nf-la}) {\flac} can still achieve better throughput speedup
(from 1.41x to 2.22x), and lower latency (from 38.0\% to 77.1\%) compared to EC
and 2PC.
This performance improvement, again, is due to {\flac}'s adaptivity to switch
between its protocols ({\flacFF} and {\flacNF}).
%
% Overall, {\flac} achieves a better performance than all the others in unstable
% environments.

\subsubsection{Experiment with the TPC-C benchmark and environments containing
  both crash and network failures.}
\label{sec:UnstableEnv1}

\begin{figure}[ht]
  \centering
  \subfigure[Throughput.]{%
    \includegraphics[width=0.48\linewidth]{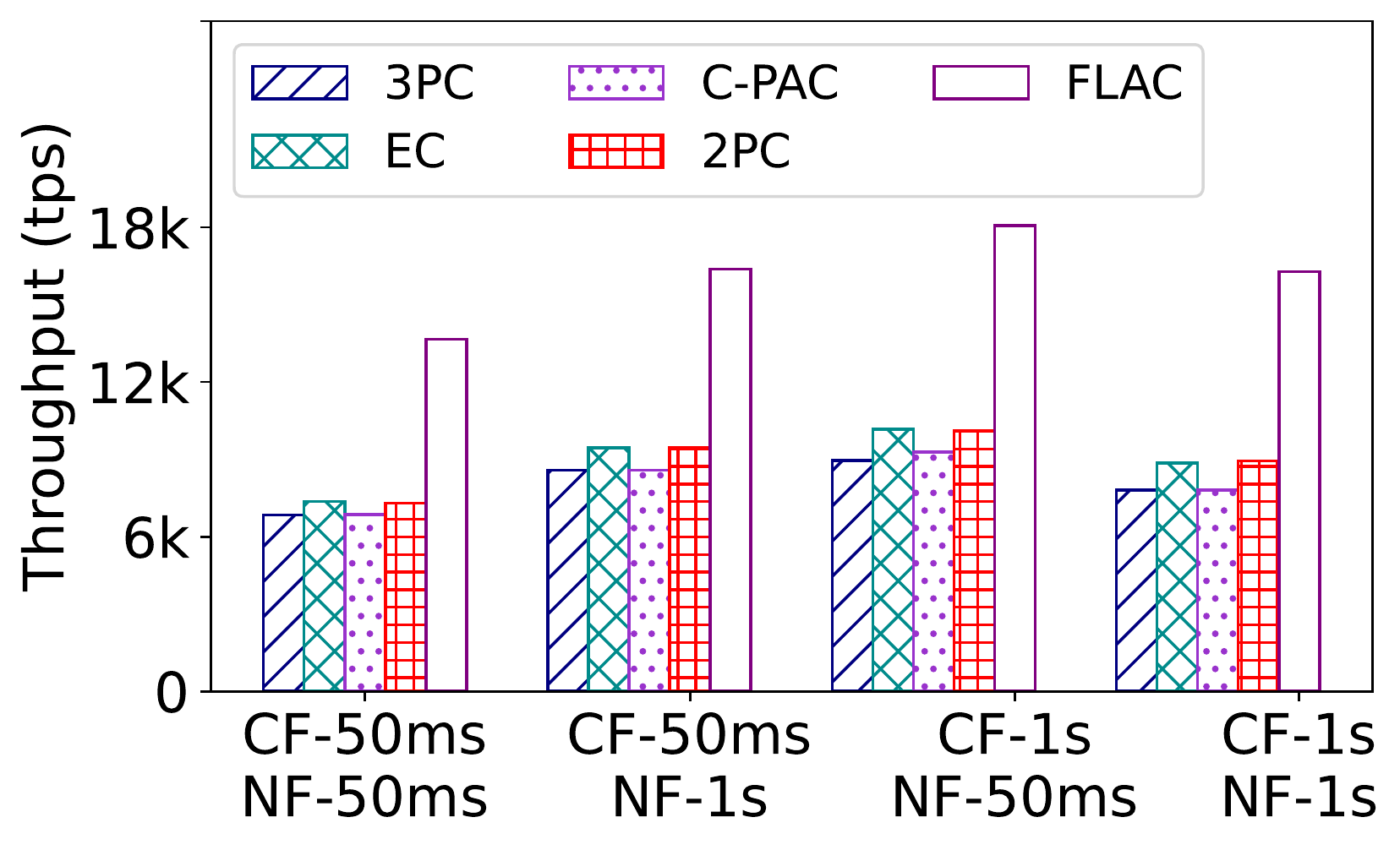}
    \label{fig:addition-th}}
  \subfigure[Tail Latency.]{%
    \includegraphics[width=0.48\linewidth]{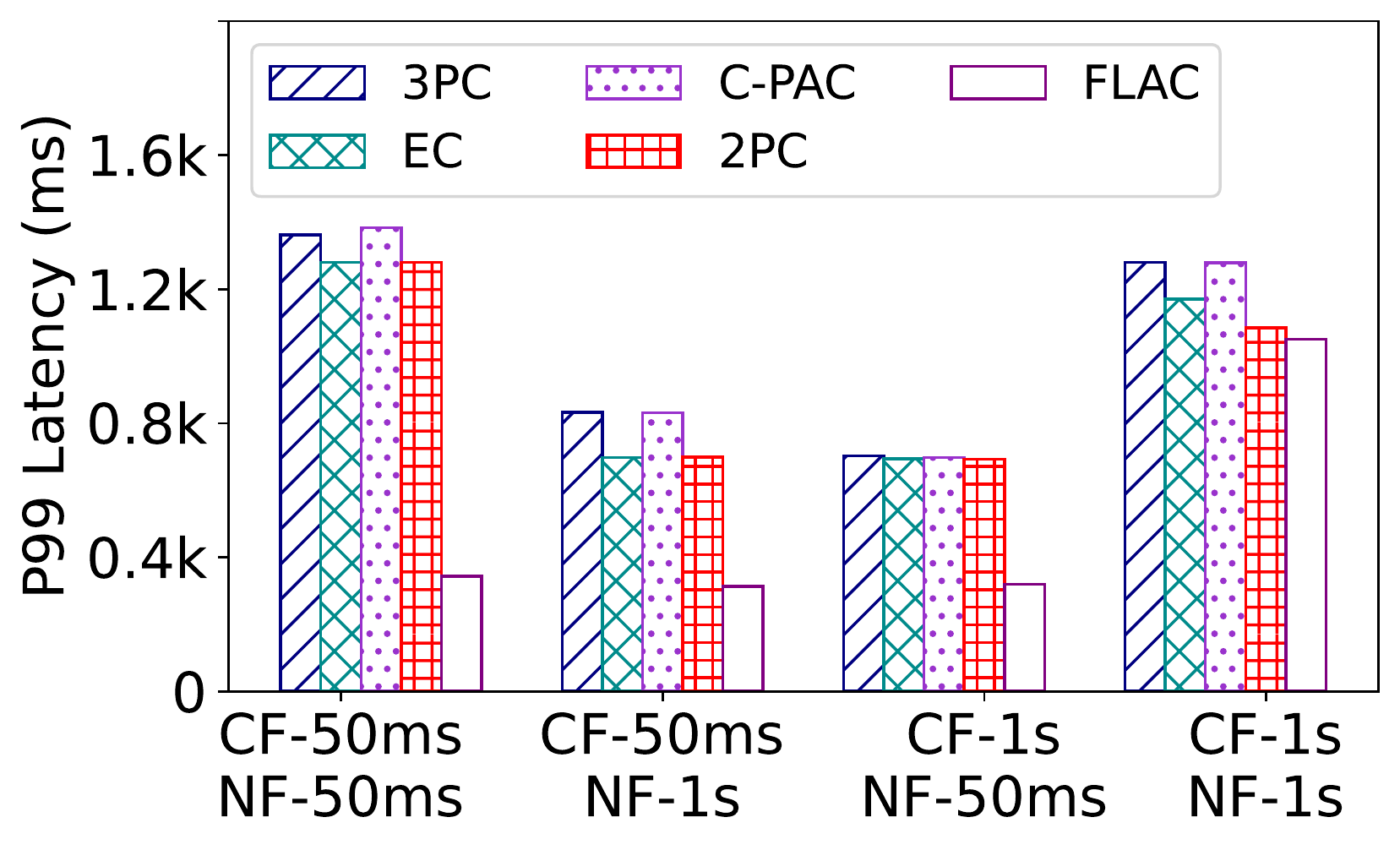}
    \label{fig:addition-la}}
  \vspace{-1em}
  \caption{Protocol performance under the TPC-C benchmark
    with environments containing both crash and network failures.}
  \label{fig:addition}
\end{figure}

% We now evaluate {\flac} and its sub-protocols in an unstable environment, where crash failures or network failures happen on a participant periodically.
%
Figure~\ref{fig:addition} presents the experimental results in different
environments where crash failures and network failures both happen.
For example, the combined setting CF{-}50ms NF{-}1s indicate that the
environment will repeatedly exhibit a crash failure for 50ms, resume to normal
for the next 50ms, and then continue to exhibit a network failure for 1s, and
return to normal for the next 1s.
In all experimented settings (CF{-}$\tau_1$ NF{-}$\tau_2$, where $\tau_1,
\tau_2$ can be $50ms$ or $1s$), {\flac} still performs the best among all
protocols.
In particular, it achieves from 1.73x to 2.09x more throughput speedup than EC
and 2PC (the next two best protocols), while its latency reduced from 24.8\% to
96.9\% of them.
This result is also due to {\flac}'s ability to switch to a suitable protocol
({\flacFF}, {\flacCF}, or {\flacNF}) when the operating environment changes.

\subsubsection{Impact of reinforcement learning (RL) on fine-tuning
  robustness-downgrade parameters ($\numRunCF$, $\numRunNF$) to {\flac}'s
  performance}
\label{sec:TurningRLParams}
Figure~\ref{fig:unstable-rl} reports the performance comparison of {\flac} in
failure environments when the parameters $\numRunCF, \numRunNF$ are fine-tuned
by RL compared to when they are fixed to 4 settings:
(1) $\numRunCF{=}1$, $\numRunNF{=}1$, which adapts {\flac} to environments where
failures do not recur quickly (stable environments),
(2) $\numRunCF{=}1, \numRunNF{=}256$, where network failures recur quickly,
(3) $\numRunCF{=}256$, $ \numRunNF{=}1$, where crash failures recur quickly,
(4) $\numRunCF{=}256$, $\numRunNF{=}256$, where all failures recur quickly.

The result shows that {\flac} obtains better or equivalent performances when
$\numRunCF, \numRunNF$ are fined-tuned by RL, compared to when they are fixed to
certain values in all environments.
In particular, {\flac} with RL-tuned parameters outperforms other fix-valued
settings in the environments where network failures recur fast (Figure
\ref{fig:unstable-rl-th}, NF-50ms) or where crashes recur fast but network
failures recur slowly (Figure \ref{fig:unstable-rl-mixed}, CF-50ms NF-1s).

We also notice that a fixed-value setting can enable {\flac} to perform well in
some environments, such as the settings (i) $\numRunCF{=}256$, $\numRunNF{=}1$ for
environments where crash failures recur fast (Figure \ref{fig:unstable-rl-th},
CF-50ms), (ii) $\numRunCF{=}1$, $\numRunNF{=}1$ for stable environments (Figure
\ref{fig:unstable-rl-th}, CF-1s, Figure \ref{fig:unstable-rl-mixed}, CF-1s NF-1s),
and (iii) $\numRunCF{=}256$, $\numRunNF{=}256$ for environments
where network failures recur fast (Figure \ref{fig:unstable-rl-th},
NF-50s).
%, Figure \ref{fig:unstable-rl-mixed}, CF-50ms NF-1s, CF-1s NF-1s
%
However, there is no fixed-value setting that can make {\flac} perform well in
all experimented failure environments.
In contrast, the setting that fine-tunes $\numRunCF$, $\numRunNF$ by RL
can boost {\flac} to gain better performance in all these environments,
as described earlier.
Furthermore, {\flac} only spent up to 5 seconds for such parameter tuning.
This experiment confirms our proposal that RL can help to fine-tune the
robustness-downgrade transition parameters so that {\flac} can perform well in different failure environments.

 \begin{figure}[ht]
   \centering
   \subfigure[Only crash or only network failures.]{%
     \includegraphics[width=0.48\linewidth]{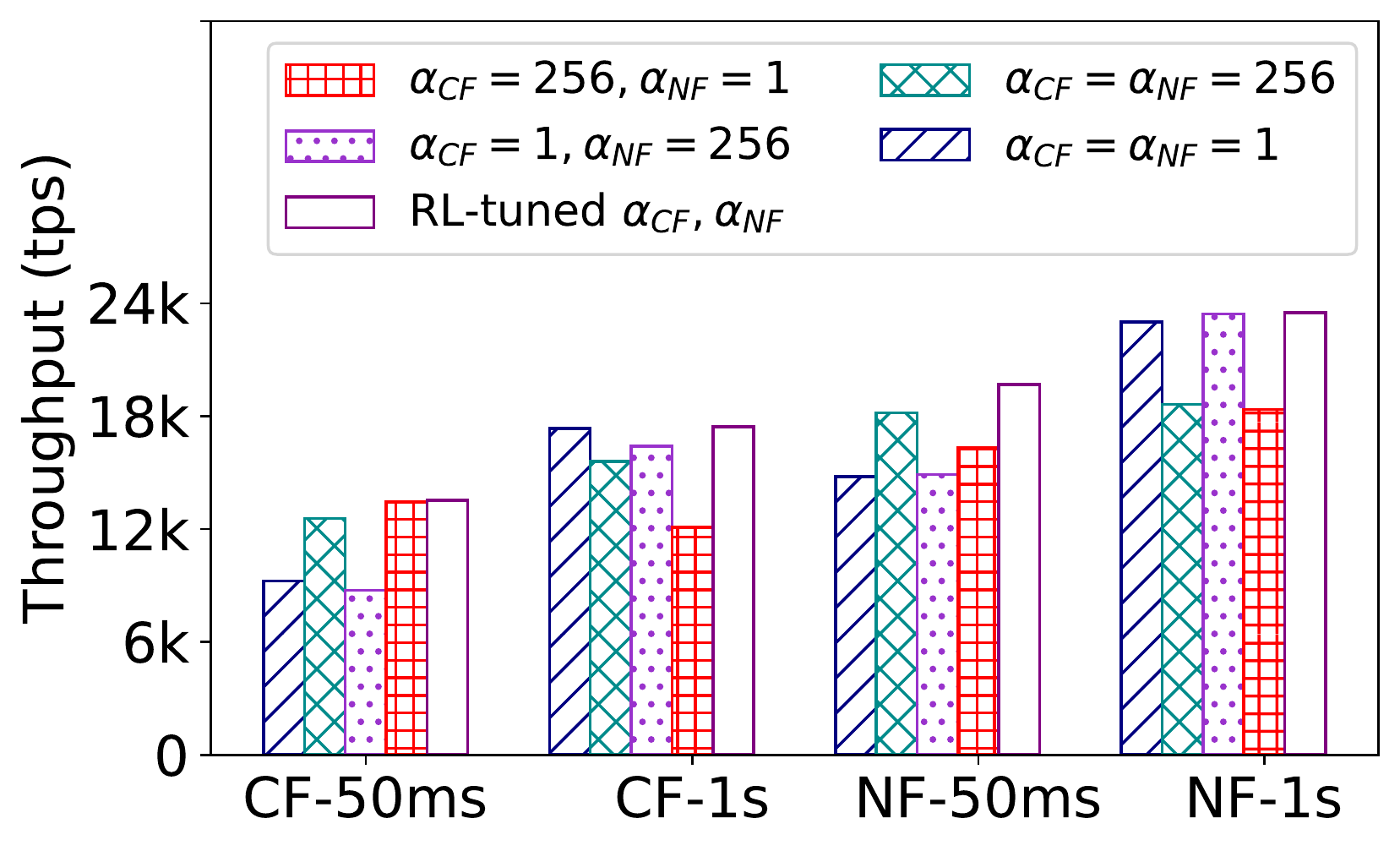}
     \label{fig:unstable-rl-th}}
   \subfigure[Both crash and network failures.]{%
     \includegraphics[width=0.48\linewidth]{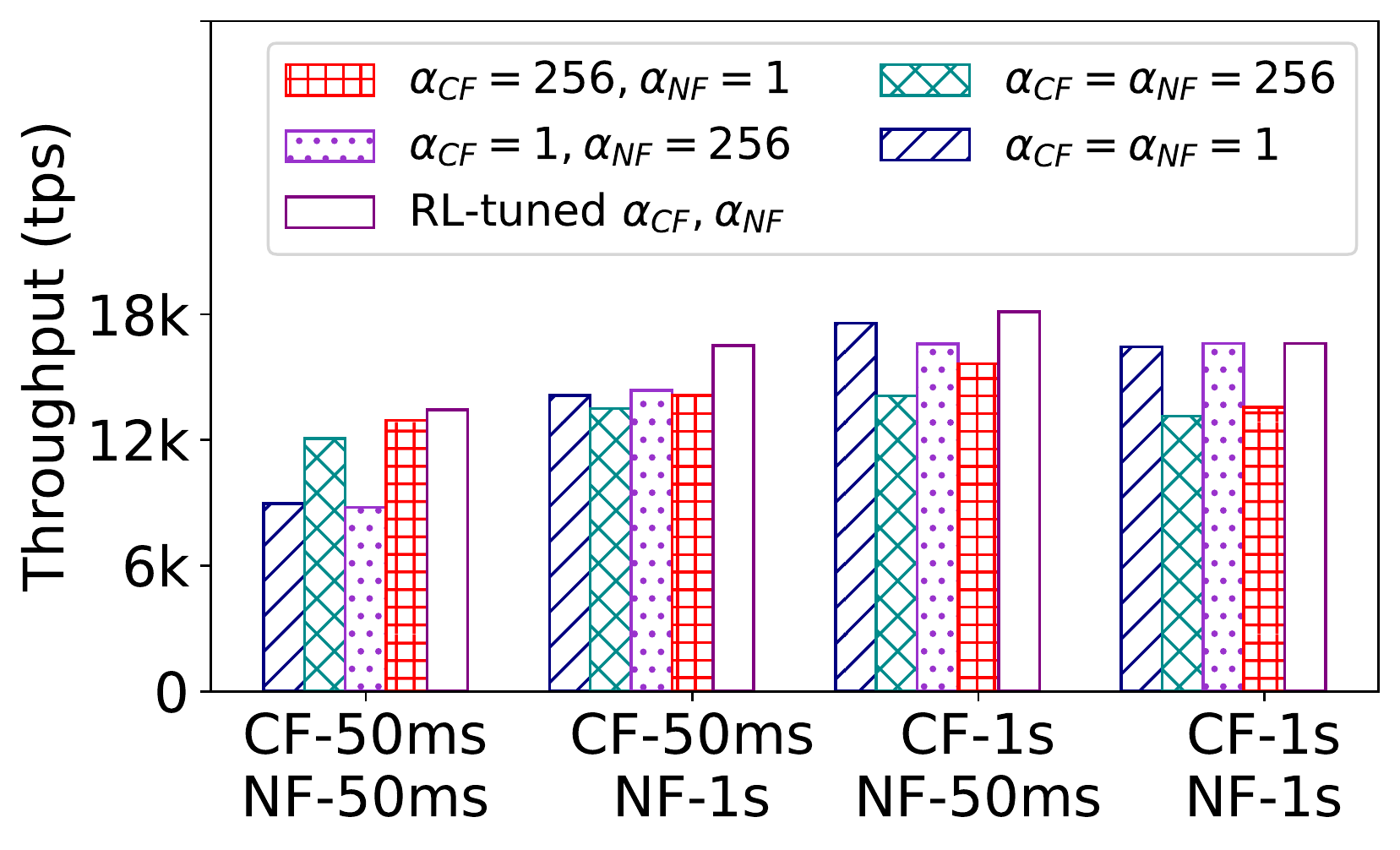}
     \label{fig:unstable-rl-mixed}}
   \vspace{-1em}
   \caption{Comparison of {\flac}'s performance under TPC-C benchmark
     with fixed-value and RL-tuned parameters {\numRunCF, \numRunNF}.}
   \label{fig:unstable-rl}
 \end{figure}

%%% Local Variables:
%%% mode: latex
%%% TeX-master: "../main"
%%% End:

\vspace{-0.5em}
\section{Related works}
\label{related}
\vspace{-0.1em}

Atomic Commit Protocol (ACP) for distributed transactions is an important topic in
database systems.
Initially proposed by Skeen in~\cite{Skeen}, ACP sees a series of improvements
in ~\cite{HFC, NBAF, WFD}.
Among them, 2PC~\cite{2PC} is widely recognized as the golden standard in the
industry but is also well-known to be the bottleneck of distributed systems for
its low efficiency.
Existing ACPs optimize 2PC by (i) enhancing the performance during normal
processing, or (ii) improving availability by ensuring liveness property, or
(iii) making the protocol adaptive for better overall performance.

To enhance the performance during normal processing, PrA (Presumed Abort)
and PrC (Presumed Commit) protocols were introduced in
\cite{mohan1986transactionprcpra}, which use additional logs to reduce the cost
of aborting or committing messages.
PrC was further optimized in \cite{lampson1993new} by giving up knowledge about
the transactions prior to coordinator failures to enhance its performance.
Nonetheless, these protocols, and their variants in \cite{al1997enhancing,
chrysanthis1998recovery}, cannot reduce the participant-side message delay.
To improve such participant-side message delay, 1PC \cite{1PCY} and other
variant protocols \cite{stamos1990low, stonebraker1979UV, stamos1993EP}
sacrifice the generalizability by adding additional strong assumptions, while
O2PC \cite{levy1991optimistic} compromises the atomicity by optimistically
assuming all global transactions will not fail to ensure serializability.
Compared with them, {\flac} can execute transactions more
efficiently without compromising the generalizability or the ACID properties.

Several works have also been proposed to improve the availability of ACPs by
ensuring the liveness property.
3PC and variant ACPs ~\cite{skeen1982quorum, 3PC, gupta2018easycommit} address
2PC's lack of termination when nodes fail by adding additional messages or
communication steps.
G-PAC and its variants \cite{hammer1980reliability, PAX, maiyya2019unifying,
  zhang2018building} ensure the liveness of 2PC under node failures by
replication.
Easy Commit (EC) reduces the communication steps of 3PC by the
transmission-before-decide technique \cite{gupta2018easycommit}.
Paxos Commit is introduced in \cite{PAX} to ensure liveness by replicating
coordinator logs.
Unlike them, {\flac} does not require replicas; it is equipped with an
adaptivity ability to handle failures and achieve high performance.
Moreover, while other ACPs assume a fixed running condition, {\flac} monitors
the condition during execution and dynamically switches to a suitable dedicated
protocol.

There are also other ACPs that can switch to operate a suitable protocol
when system behavior or requirements change.
For example, ACOM \cite{ACOM} can shift between 2PC-PrA and 2PC-PrC according to
node behavior.
ADTP \cite{ADTP} can let service providers choose between PrC and PrA.
ML1-2PC \cite{12PC, ML12PC} can dynamically select either 1PC or 2PC based on
transaction behavior and system requirements.
Besides, a consensus protocol Domino \cite{yan2020domino} can adapt executions
automatically between fast or slow paths with the network timeout.
%
% \trung{Hexiang: there is still a timeout window here.}
%
Compared to these works, {\flac} focuses on a different setting: it optimizes
ACP for operating environments where crash or network failures can occur and
change over time.
Such setting is common in cloud systems~\cite{decandia2007dynamo}, and {\flac}
can achieve good performance without compromising generalizability.

%\phxsays{Extend the related work. add gossip-based protocol.}

%%% Local Variables:
%%% mode: latex
%%% TeX-master: "../main"
%%% End:

\section{Limitations and Future Work}
\label{sec:short-discussion}

% \trungedit{
We now discuss the limitations of {\flac} and our future work.
Firstly, {\flac}'s higher message complexity can affect system performance when
each transaction involves many participants.
However, we have conducted additional experiments to study this limitation and
observed that {\flac} always performs better or at least comparably to other
protocols in all settings.
Details of these experiments are described in Appendix~\ref{sec:experimentApp}.
% }

% However, \phxedit{our experiments in technical report ~\cite{flacsupplement} Appendix B} \trung{Need citing to the
%   specific section in TR} has shown that {\flac} still performs comparably to
% other protocols, even when involving a large number of participants. \trung{This
%   sentence is like forcing the reviewers to read the TR. You should write in a
%   way that you invite them to read the TR. For example: ``However, we have
%   conducted additional experiments and show that FLAC still performs ... We
%   report details of these experiments in TR, section ...'' }
%
% Moreover, previous works have shown that distributed transactions in most OLTP
% workloads involve less than 10 participants~\cite{council2010tpc,
%   gupta2018easycommit} where the impact of message complexity is not quite high.

Secondly, {\flac} may compromise the liveness of transactions executed with
wrong operating condition assumptions.
The blocked transactions could exacerbate the contention between transactions
and reduce system throughput.
Such blocking has already been widely accepted in existing systems that use 2PC.
Meanwhile, we have proposed {\rlsm} to alleviate this problem, enabling {\flac}
to outperform all other protocols even in unstable environments.

Thirdly, we follow state-of-the-art ACPs \cite{2PC, 3PC, gupta2018easycommit,
  gupta2020efficient} to not consider system replication in the design of
{\flac}.
Nonetheless, {\flac} can be easily customized for replicated systems by three
steps: asynchronously replicating decisions for {\flacFF} and {\flacCF},
synchronizing decisions to replicas upon transition to the network-failure
level, and adopting replicated ACPs for {\flacNF}.
We detail these steps in Appendix~\ref{app:discussion}.
Note that {\flac} does not always ensure availability when failures
happen.
Unlike replicated ACPs ~\cite{PAX, maiyya2019unifying, zhang2018building},
{\flac} is not designed for systems that require extremely high availability.
Instead, it strikes a balance between system availability and efficiency.
We discuss this issue with more details in Appendix~\ref{app:discussion}.
% \trung{Which section in the TR?}

%%% Local Variables:
%%% mode: latex
%%% TeX-master: "../main"
%%% End:

\section{Conclusions}
\label{conclusion}

We have presented {\flac}, a robust failure-aware atomic commit protocol for
distributed transactions.
{\flac} includes a novel state transition model RLSM to infer the operating
conditions.
Reinforcement learning is also applied to boost its adaptivity and performance in
unstable environments.
We have implemented a distributed transactional key-value storage with \flac~
and evaluated its performance.
The results show that \flac~ achieves up to 2.22x improvement in throughput and
up to 2.82x speedup in latency, compared to existing protocols 2PC, 3PC, Easy
Commit, C-PAC.
We have also evaluated the effectiveness of reinforcement learning on tuning {\flac}'s
parameters.
The results indicate that the tuning can improve the performance of {\flac} in
unstable environments.

%%% Local Variables:
%%% mode: latex
%%% TeX-master: "../main"
%%% End:

\balance
\bibliographystyle{ACM-Reference-Format}
\normalem
\bibliography{ref}

%%% -*-BibTeX-*-
%%% Do NOT edit. File created by BibTeX with style
%%% ACM-Reference-Format-Journals [18-Jan-2012].

\begin{thebibliography}{64}

%%% ====================================================================
%%% NOTE TO THE USER: you can override these defaults by providing
%%% customized versions of any of these macros before the \bibliography
%%% command.  Each of them MUST provide its own final punctuation,
%%% except for \shownote{}, \showDOI{}, and \showURL{}.  The latter two
%%% do not use final punctuation, in order to avoid confusing it with
%%% the Web address.
%%%
%%% To suppress output of a particular field, define its macro to expand
%%% to an empty string, or better, \unskip, like this:
%%%
%%% \newcommand{\showDOI}[1]{\unskip}   % LaTeX syntax
%%%
%%% \def \showDOI #1{\unskip}           % plain TeX syntax
%%%
%%% ====================================================================

\ifx \showCODEN    \undefined \def \showCODEN     #1{\unskip}     \fi
\ifx \showDOI      \undefined \def \showDOI       #1{#1}\fi
\ifx \showISBNx    \undefined \def \showISBNx     #1{\unskip}     \fi
\ifx \showISBNxiii \undefined \def \showISBNxiii  #1{\unskip}     \fi
\ifx \showISSN     \undefined \def \showISSN      #1{\unskip}     \fi
\ifx \showLCCN     \undefined \def \showLCCN      #1{\unskip}     \fi
\ifx \shownote     \undefined \def \shownote      #1{#1}          \fi
\ifx \showarticletitle \undefined \def \showarticletitle #1{#1}   \fi
\ifx \showURL      \undefined \def \showURL       {\relax}        \fi
% The following commands are used for tagged output and should be
% invisible to TeX
\providecommand\bibfield[2]{#2}
\providecommand\bibinfo[2]{#2}
\providecommand\natexlab[1]{#1}
\providecommand\showeprint[2][]{arXiv:#2}

\bibitem[fla(2023)]%
        {flacCode}
 \bibinfo{year}{2023}\natexlab{}.
\newblock \bibinfo{title}{Source code of FLAC}.
\newblock \bibinfo{howpublished}{\url{https://github.com/nusdbsystem/FLAC}}.
\newblock
\newblock
\shownote{Online; accessed: 2023-02-28}.


\bibitem[Abdallah et~al\mbox{.}(1998)]%
        {1PCD}
\bibfield{author}{\bibinfo{person}{M. Abdallah}, \bibinfo{person}{R.
  Guerraoui}, {and} \bibinfo{person}{P. Pucheral}.}
  \bibinfo{year}{1998}\natexlab{}.
\newblock \showarticletitle{One-phase commit: does it make sense?}. In
  \bibinfo{booktitle}{\emph{Proceedings 1998 International Conference on
  Parallel and Distributed Systems (Cat. No.98TB100250)}}.
  \bibinfo{pages}{182--192}.
\newblock


\bibitem[Al-houmaily and Chrysanthis(1996)]%
        {1PCY}
\bibfield{author}{\bibinfo{person}{Yousef Al-houmaily} {and}
  \bibinfo{person}{Panos Chrysanthis}.} \bibinfo{year}{1996}\natexlab{}.
\newblock \showarticletitle{The Implicit-Yes Vote Commit Protocol with
  Delegation of Commitment}.
\newblock


\bibitem[Al-houmaily and Chrysanthis(2004)]%
        {ML12PC}
\bibfield{author}{\bibinfo{person}{Yousef Al-houmaily} {and}
  \bibinfo{person}{Panos Chrysanthis}.} \bibinfo{year}{2004}\natexlab{}.
\newblock \showarticletitle{ML-1-2PC: An Adaptive Multi-level Atomic Commit
  Protocol}, Vol.~\bibinfo{volume}{3255}. \bibinfo{pages}{275--290}.
\newblock
\showISBNx{978-3-540-23243-8}


\bibitem[Al-Houmaily and Chrysanthis(2004)]%
        {12PC}
\bibfield{author}{\bibinfo{person}{Yousef~J. Al-Houmaily} {and}
  \bibinfo{person}{Panos~K. Chrysanthis}.} \bibinfo{year}{2004}\natexlab{}.
\newblock \showarticletitle{1-2PC: The One-Two Phase Atomic Commit Protocol}.
  In \bibinfo{booktitle}{\emph{Proceedings of the 2004 ACM Symposium on Applied
  Computing}} (Nicosia, Cyprus) \emph{(\bibinfo{series}{SAC '04})}.
  \bibinfo{publisher}{Association for Computing Machinery},
  \bibinfo{address}{New York, NY, USA}, \bibinfo{pages}{684–691}.
\newblock
\showISBNx{1581138121}


\bibitem[Al-Houmaily et~al\mbox{.}(1997)]%
        {al1997enhancing}
\bibfield{author}{\bibinfo{person}{Yousef~J Al-Houmaily},
  \bibinfo{person}{Panos~K Chrysanthis}, {and} \bibinfo{person}{Steven~P
  Levitan}.} \bibinfo{year}{1997}\natexlab{}.
\newblock \showarticletitle{Enhancing the performance of presumed commit
  protocol}. In \bibinfo{booktitle}{\emph{Proceedings of the 1997 ACM symposium
  on Applied computing}}. \bibinfo{pages}{131--133}.
\newblock


\bibitem[Attaluri and Salem(2002)]%
        {presumedeither2PC}
\bibfield{author}{\bibinfo{person}{G.K. Attaluri} {and} \bibinfo{person}{K.
  Salem}.} \bibinfo{year}{2002}\natexlab{}.
\newblock \showarticletitle{The presumed-either two-phase commit protocol}.
\newblock \bibinfo{journal}{\emph{IEEE Transactions on Knowledge and Data
  Engineering}} \bibinfo{volume}{14}, \bibinfo{number}{5}
  (\bibinfo{year}{2002}), \bibinfo{pages}{1190--1196}.
\newblock


\bibitem[Castro and Liskov(2002)]%
        {castro2002practical}
\bibfield{author}{\bibinfo{person}{Miguel Castro} {and}
  \bibinfo{person}{Barbara Liskov}.} \bibinfo{year}{2002}\natexlab{}.
\newblock \showarticletitle{Practical Byzantine fault tolerance and proactive
  recovery}.
\newblock \bibinfo{journal}{\emph{ACM Transactions on Computer Systems}}
  \bibinfo{volume}{20}, \bibinfo{number}{4} (\bibinfo{year}{2002}),
  \bibinfo{pages}{398--461}.
\newblock


\bibitem[Chandra(1993)]%
        {DBLP:phd/us/Chandra93}
\bibfield{author}{\bibinfo{person}{Tushar~Deepak Chandra}.}
  \bibinfo{year}{1993}\natexlab{}.
\newblock \emph{\bibinfo{title}{Unreliable Failure Detectors for Asynchronous
  Distributed Systems}}.
\newblock \bibinfo{thesistype}{Ph.\,D. Dissertation}. \bibinfo{school}{Cornell
  University, {USA}}.
\newblock


\bibitem[Chen and Zhang(2014)]%
        {DataI}
\bibfield{author}{\bibinfo{person}{CL~Philip Chen} {and}
  \bibinfo{person}{Chun-Yang Zhang}.} \bibinfo{year}{2014}\natexlab{}.
\newblock \showarticletitle{Data-intensive applications, challenges, techniques
  and technologies: A survey on Big Data}.
\newblock \bibinfo{journal}{\emph{Information sciences}}  \bibinfo{volume}{275}
  (\bibinfo{year}{2014}), \bibinfo{pages}{314--347}.
\newblock


\bibitem[Chen et~al\mbox{.}(2021)]%
        {chen2021achieving}
\bibfield{author}{\bibinfo{person}{Xusheng Chen}, \bibinfo{person}{Haoze Song},
  \bibinfo{person}{Jianyu Jiang}, \bibinfo{person}{Chaoyi Ruan},
  \bibinfo{person}{Cheng Li}, \bibinfo{person}{Sen Wang}, \bibinfo{person}{Gong
  Zhang}, \bibinfo{person}{Reynold Cheng}, {and} \bibinfo{person}{Heming Cui}.}
  \bibinfo{year}{2021}\natexlab{}.
\newblock \showarticletitle{Achieving low tail-latency and high scalability for
  serializable transactions in edge computing}. In
  \bibinfo{booktitle}{\emph{Proceedings of the Sixteenth European Conference on
  Computer Systems}}. \bibinfo{pages}{210--227}.
\newblock


\bibitem[Chrysanthis et~al\mbox{.}(1998)]%
        {chrysanthis1998recovery}
\bibfield{author}{\bibinfo{person}{PK Chrysanthis}, \bibinfo{person}{George
  Samaras}, {and} \bibinfo{person}{YJ Al-Houmaily}.}
  \bibinfo{year}{1998}\natexlab{}.
\newblock \showarticletitle{Recovery and performance of atomic commit
  processing in distributed database systems}.
\newblock \bibinfo{journal}{\emph{Recovery Mechanisms in Database Systems}}
  (\bibinfo{year}{1998}), \bibinfo{pages}{370--416}.
\newblock


\bibitem[Cooper et~al\mbox{.}(2010)]%
        {cooper2010benchmarking}
\bibfield{author}{\bibinfo{person}{Brian~F Cooper}, \bibinfo{person}{Adam
  Silberstein}, \bibinfo{person}{Erwin Tam}, \bibinfo{person}{Raghu
  Ramakrishnan}, {and} \bibinfo{person}{Russell Sears}.}
  \bibinfo{year}{2010}\natexlab{}.
\newblock \showarticletitle{Benchmarking cloud serving systems with YCSB}. In
  \bibinfo{booktitle}{\emph{Proceedings of the 1st ACM symposium on Cloud
  computing}}. \bibinfo{pages}{143--154}.
\newblock


\bibitem[Corbett et~al\mbox{.}(2013)]%
        {corbett2013spanner}
\bibfield{author}{\bibinfo{person}{James~C Corbett}, \bibinfo{person}{Jeffrey
  Dean}, \bibinfo{person}{Michael Epstein}, \bibinfo{person}{Andrew Fikes},
  \bibinfo{person}{Christopher Frost}, \bibinfo{person}{Jeffrey~John Furman},
  \bibinfo{person}{Sanjay Ghemawat}, \bibinfo{person}{Andrey Gubarev},
  \bibinfo{person}{Christopher Heiser}, \bibinfo{person}{Peter Hochschild},
  {et~al\mbox{.}}} \bibinfo{year}{2013}\natexlab{}.
\newblock \showarticletitle{Spanner: Google’s globally distributed database}.
\newblock \bibinfo{journal}{\emph{ACM Transactions on Computer Systems}}
  \bibinfo{volume}{31}, \bibinfo{number}{3} (\bibinfo{year}{2013}),
  \bibinfo{pages}{1--22}.
\newblock


\bibitem[Council(2010)]%
        {council2010tpc}
\bibfield{author}{\bibinfo{person}{Transaction Processing~Performance
  Council}.} \bibinfo{year}{2010}\natexlab{}.
\newblock \showarticletitle{TPC benchmark C (standard specification, revision
  5.11), 2010}.
\newblock \bibinfo{journal}{\emph{URL: http://www. tpc. org/tpcc}}
  (\bibinfo{year}{2010}), \bibinfo{pages}{100}.
\newblock


\bibitem[Cristian(1991)]%
        {cristian1991understanding}
\bibfield{author}{\bibinfo{person}{Flavin Cristian}.}
  \bibinfo{year}{1991}\natexlab{}.
\newblock \showarticletitle{Understanding fault-tolerant distributed systems}.
\newblock \bibinfo{journal}{\emph{Commun. ACM}} \bibinfo{volume}{34},
  \bibinfo{number}{2} (\bibinfo{year}{1991}), \bibinfo{pages}{56--78}.
\newblock


\bibitem[Debnath et~al\mbox{.}(2010)]%
        {debnath2010flashstore}
\bibfield{author}{\bibinfo{person}{Biplob Debnath}, \bibinfo{person}{Sudipta
  Sengupta}, {and} \bibinfo{person}{Jin Li}.} \bibinfo{year}{2010}\natexlab{}.
\newblock \showarticletitle{FlashStore: High throughput persistent key-value
  store}.
\newblock \bibinfo{journal}{\emph{Proceedings of the VLDB Endowment}}
  \bibinfo{volume}{3}, \bibinfo{number}{1-2} (\bibinfo{year}{2010}),
  \bibinfo{pages}{1414--1425}.
\newblock


\bibitem[DeCandia et~al\mbox{.}(2007)]%
        {decandia2007dynamo}
\bibfield{author}{\bibinfo{person}{Giuseppe DeCandia}, \bibinfo{person}{Deniz
  Hastorun}, \bibinfo{person}{Madan Jampani}, \bibinfo{person}{Gunavardhan
  Kakulapati}, \bibinfo{person}{Avinash Lakshman}, \bibinfo{person}{Alex
  Pilchin}, \bibinfo{person}{Swaminathan Sivasubramanian},
  \bibinfo{person}{Peter Vosshall}, {and} \bibinfo{person}{Werner Vogels}.}
  \bibinfo{year}{2007}\natexlab{}.
\newblock \showarticletitle{Dynamo: Amazon's highly available key-value store}.
\newblock \bibinfo{journal}{\emph{ACM SIGOPS operating systems review}}
  \bibinfo{volume}{41}, \bibinfo{number}{6} (\bibinfo{year}{2007}),
  \bibinfo{pages}{205--220}.
\newblock


\bibitem[DeMillo et~al\mbox{.}(1982)]%
        {demillo1982cryptographic}
\bibfield{author}{\bibinfo{person}{Richard~A DeMillo}, \bibinfo{person}{Nancy~A
  Lynch}, {and} \bibinfo{person}{Michael~J Merritt}.}
  \bibinfo{year}{1982}\natexlab{}.
\newblock \showarticletitle{Cryptographic protocols}. In
  \bibinfo{booktitle}{\emph{Proceedings of the fourteenth annual ACM symposium
  on Theory of computing}}. \bibinfo{pages}{383--400}.
\newblock


\bibitem[Fischer et~al\mbox{.}(1985)]%
        {IMPOS}
\bibfield{author}{\bibinfo{person}{Michael~J Fischer}, \bibinfo{person}{Nancy~A
  Lynch}, {and} \bibinfo{person}{Michael~S Paterson}.}
  \bibinfo{year}{1985}\natexlab{}.
\newblock \showarticletitle{Impossibility of distributed consensus with one
  faulty process}.
\newblock \bibinfo{journal}{\emph{Journal of the ACM (JACM)}}
  \bibinfo{volume}{32}, \bibinfo{number}{2} (\bibinfo{year}{1985}),
  \bibinfo{pages}{374--382}.
\newblock


\bibitem[Fr{\o}lund and Koistinen(1998)]%
        {frolund1998quality}
\bibfield{author}{\bibinfo{person}{Svend Fr{\o}lund} {and}
  \bibinfo{person}{Jari Koistinen}.} \bibinfo{year}{1998}\natexlab{}.
\newblock \showarticletitle{Quality-of-service specification in distributed
  object systems}.
\newblock \bibinfo{journal}{\emph{Distributed Systems Engineering}}
  \bibinfo{volume}{5}, \bibinfo{number}{4} (\bibinfo{year}{1998}),
  \bibinfo{pages}{179}.
\newblock


\bibitem[Gray and Lamport(2006)]%
        {PAX}
\bibfield{author}{\bibinfo{person}{Jim Gray} {and} \bibinfo{person}{Leslie
  Lamport}.} \bibinfo{year}{2006}\natexlab{}.
\newblock \showarticletitle{Consensus on transaction commit}.
\newblock \bibinfo{journal}{\emph{ACM Transactions on Database Systems (TODS)}}
  \bibinfo{volume}{31}, \bibinfo{number}{1} (\bibinfo{year}{2006}),
  \bibinfo{pages}{133--160}.
\newblock


\bibitem[Gray and Reuter(1992)]%
        {gray1992transaction}
\bibfield{author}{\bibinfo{person}{Jim Gray} {and} \bibinfo{person}{Andreas
  Reuter}.} \bibinfo{year}{1992}\natexlab{}.
\newblock \bibinfo{booktitle}{\emph{Transaction processing: concepts and
  techniques}}.
\newblock \bibinfo{publisher}{Elsevier}.
\newblock


\bibitem[Guerraoui(2002)]%
        {NBAF}
\bibfield{author}{\bibinfo{person}{Rachid Guerraoui}.}
  \bibinfo{year}{2002}\natexlab{}.
\newblock \showarticletitle{Non-blocking atomic commit in asynchronous
  distributed systems with failure detectors}.
\newblock \bibinfo{journal}{\emph{Distributed Computing}} \bibinfo{volume}{15},
  \bibinfo{number}{1} (\bibinfo{year}{2002}), \bibinfo{pages}{17--25}.
\newblock


\bibitem[Guerraoui et~al\mbox{.}(2012)]%
        {WFD}
\bibfield{author}{\bibinfo{person}{Rachid Guerraoui}, \bibinfo{person}{Vassos
  Hadzilacos}, \bibinfo{person}{Petr Kuznetsov}, {and} \bibinfo{person}{Sam
  Toueg}.} \bibinfo{year}{2012}\natexlab{}.
\newblock \showarticletitle{The Weakest Failure Detectors To Solve Quittable
  Consensus And Nonblocking Atomic Commit}.
\newblock \bibinfo{journal}{\emph{SIAM J. Comput.}}  \bibinfo{volume}{41}
  (\bibinfo{date}{01} \bibinfo{year}{2012}).
\newblock


\bibitem[Guerraoui et~al\mbox{.}(1995)]%
        {DBLP:conf/srds/GuerraouiLS95}
\bibfield{author}{\bibinfo{person}{Rachid Guerraoui}, \bibinfo{person}{Mikel
  Larrea}, {and} \bibinfo{person}{Andr{\'{e}} Schiper}.}
  \bibinfo{year}{1995}\natexlab{}.
\newblock \showarticletitle{Non-Blocking Atomic Commitment with an Unreliable
  Failure Detector}. In \bibinfo{booktitle}{\emph{14th Symposium on Reliable
  Distributed Systems, {SRDS} 1995, Bad Neuenahr, Germany, September 13-15,
  1995, Proceedings}}. \bibinfo{publisher}{{IEEE} Computer Society},
  \bibinfo{pages}{41--50}.
\newblock
\urldef\tempurl%
\url{https://doi.org/10.1109/RELDIS.1995.518722}
\showDOI{\tempurl}


\bibitem[Guerraoui and Wang(2017)]%
        {HFC}
\bibfield{author}{\bibinfo{person}{Rachid Guerraoui} {and}
  \bibinfo{person}{Jingjing Wang}.} \bibinfo{year}{2017}\natexlab{}.
\newblock \showarticletitle{How Fast Can a Distributed Transaction Commit?}. In
  \bibinfo{booktitle}{\emph{Proceedings of the 36th ACM SIGMOD-SIGACT-SIGAI
  Symposium on Principles of Database Systems}} (Chicago, Illinois, USA)
  \emph{(\bibinfo{series}{PODS '17})}. \bibinfo{publisher}{Association for
  Computing Machinery}, \bibinfo{address}{New York, NY, USA},
  \bibinfo{pages}{107–122}.
\newblock
\showISBNx{9781450341981}


\bibitem[Gupta and Sadoghi(2018)]%
        {gupta2018easycommit}
\bibfield{author}{\bibinfo{person}{Suyash Gupta} {and}
  \bibinfo{person}{Mohammad Sadoghi}.} \bibinfo{year}{2018}\natexlab{}.
\newblock \showarticletitle{EasyCommit: A Non-blocking Two-phase Commit
  Protocol.}. In \bibinfo{booktitle}{\emph{EDBT}}. \bibinfo{pages}{157--168}.
\newblock


\bibitem[Gupta and Sadoghi(2020)]%
        {gupta2020efficient}
\bibfield{author}{\bibinfo{person}{Suyash Gupta} {and}
  \bibinfo{person}{Mohammad Sadoghi}.} \bibinfo{year}{2020}\natexlab{}.
\newblock \showarticletitle{Efficient and non-blocking agreement protocols}.
\newblock \bibinfo{journal}{\emph{Distributed and Parallel Databases}}
  \bibinfo{volume}{38}, \bibinfo{number}{2} (\bibinfo{year}{2020}),
  \bibinfo{pages}{287--333}.
\newblock


\bibitem[Hammer and Shipman(1980)]%
        {hammer1980reliability}
\bibfield{author}{\bibinfo{person}{Micael Hammer} {and} \bibinfo{person}{David
  Shipman}.} \bibinfo{year}{1980}\natexlab{}.
\newblock \showarticletitle{Reliability mechanisms for SDD-1: A system for
  distributed databases}.
\newblock \bibinfo{journal}{\emph{ACM Transactions on Database Systems}}
  \bibinfo{volume}{5}, \bibinfo{number}{4} (\bibinfo{year}{1980}),
  \bibinfo{pages}{431--466}.
\newblock


\bibitem[Huang et~al\mbox{.}(2017)]%
        {huang2017gray}
\bibfield{author}{\bibinfo{person}{Peng Huang}, \bibinfo{person}{Chuanxiong
  Guo}, \bibinfo{person}{Lidong Zhou}, \bibinfo{person}{Jacob~R Lorch},
  \bibinfo{person}{Yingnong Dang}, \bibinfo{person}{Murali Chintalapati}, {and}
  \bibinfo{person}{Randolph Yao}.} \bibinfo{year}{2017}\natexlab{}.
\newblock \showarticletitle{Gray failure: The achilles' heel of cloud-scale
  systems}. In \bibinfo{booktitle}{\emph{Proceedings of the 16th Workshop on
  Hot Topics in Operating Systems}}. \bibinfo{pages}{150--155}.
\newblock


\bibitem[Jasny et~al\mbox{.}(2022)]%
        {DBLP:journals/corr/abs-2206-00623}
\bibfield{author}{\bibinfo{person}{Matthias Jasny}, \bibinfo{person}{Lasse
  Thostrup}, \bibinfo{person}{Tobias Ziegler}, {and} \bibinfo{person}{Carsten
  Binnig}.} \bibinfo{year}{2022}\natexlab{}.
\newblock \showarticletitle{{P4DB} - The Case for In-Network {OLTP} (Extended
  Technical Report)}.
\newblock \bibinfo{journal}{\emph{CoRR}}  \bibinfo{volume}{abs/2206.00623}
  (\bibinfo{year}{2022}).
\newblock
\urldef\tempurl%
\url{https://doi.org/10.48550/arXiv.2206.00623}
\showDOI{\tempurl}
\showeprint[arXiv]{2206.00623}


\bibitem[Kumar et~al\mbox{.}(2014)]%
        {3PC}
\bibfield{author}{\bibinfo{person}{Nitesh Kumar}, \bibinfo{person}{Laxman
  Sahoo}, {and} \bibinfo{person}{Ashish Kumar}.}
  \bibinfo{year}{2014}\natexlab{}.
\newblock \showarticletitle{Design and implementation of Three Phase Commit
  Protocol (3PC) algorithm}. In \bibinfo{booktitle}{\emph{2014 International
  Conference on Reliability Optimization and Information Technology}}.
  \bibinfo{pages}{116--120}.
\newblock


\bibitem[Lamport(2006)]%
        {lamport2006fast}
\bibfield{author}{\bibinfo{person}{Leslie Lamport}.}
  \bibinfo{year}{2006}\natexlab{}.
\newblock \showarticletitle{Fast paxos}.
\newblock \bibinfo{journal}{\emph{Distributed Computing}} \bibinfo{volume}{19},
  \bibinfo{number}{2} (\bibinfo{year}{2006}), \bibinfo{pages}{79--103}.
\newblock


\bibitem[Lampson and Lomet(1993)]%
        {lampson1993new}
\bibfield{author}{\bibinfo{person}{Butler Lampson} {and} \bibinfo{person}{David
  Lomet}.} \bibinfo{year}{1993}\natexlab{}.
\newblock \showarticletitle{A new presumed commit optimization for two phase
  commit}. In \bibinfo{booktitle}{\emph{19th International Conference on Very
  Large Data Bases (VLDB'93)}}. \bibinfo{pages}{630--640}.
\newblock


\bibitem[Levy et~al\mbox{.}(1991)]%
        {levy1991optimistic}
\bibfield{author}{\bibinfo{person}{Eliezer Levy}, \bibinfo{person}{Henry~F
  Korth}, {and} \bibinfo{person}{Abraham Silberschatz}.}
  \bibinfo{year}{1991}\natexlab{}.
\newblock \showarticletitle{An optimistic commit protocol for distributed
  transaction management}.
\newblock \bibinfo{journal}{\emph{ACM SIGMOD Record}} \bibinfo{volume}{20},
  \bibinfo{number}{2} (\bibinfo{year}{1991}), \bibinfo{pages}{88--97}.
\newblock


\bibitem[Li et~al\mbox{.}(2019)]%
        {li2019qtune}
\bibfield{author}{\bibinfo{person}{Guoliang Li}, \bibinfo{person}{Xuanhe Zhou},
  \bibinfo{person}{Shifu Li}, {and} \bibinfo{person}{Bo Gao}.}
  \bibinfo{year}{2019}\natexlab{}.
\newblock \showarticletitle{Qtune: A query-aware database tuning system with
  deep reinforcement learning}.
\newblock \bibinfo{journal}{\emph{Proceedings of the VLDB Endowment}}
  \bibinfo{volume}{12}, \bibinfo{number}{12} (\bibinfo{year}{2019}),
  \bibinfo{pages}{2118--2130}.
\newblock


\bibitem[Lu et~al\mbox{.}(2021)]%
        {lu2021epoch}
\bibfield{author}{\bibinfo{person}{Yi Lu}, \bibinfo{person}{Xiangyao Yu},
  \bibinfo{person}{Lei Cao}, {and} \bibinfo{person}{Samuel Madden}.}
  \bibinfo{year}{2021}\natexlab{}.
\newblock \showarticletitle{Epoch-based commit and replication in distributed
  OLTP databases}.
\newblock  (\bibinfo{year}{2021}).
\newblock


\bibitem[Maiyya et~al\mbox{.}(2020)]%
        {maiyya2020fides}
\bibfield{author}{\bibinfo{person}{Sujaya Maiyya}, \bibinfo{person}{Danny
  Hyun~Bum Cho}, \bibinfo{person}{Divyakant Agrawal}, {and}
  \bibinfo{person}{Amr El~Abbadi}.} \bibinfo{year}{2020}\natexlab{}.
\newblock \showarticletitle{Fides: Managing data on untrusted infrastructure}.
  In \bibinfo{booktitle}{\emph{2020 IEEE 40th International Conference on
  Distributed Computing Systems (ICDCS)}}. IEEE, \bibinfo{pages}{344--354}.
\newblock


\bibitem[Maiyya et~al\mbox{.}(2019)]%
        {maiyya2019unifying}
\bibfield{author}{\bibinfo{person}{Sujaya Maiyya}, \bibinfo{person}{Faisal
  Nawab}, \bibinfo{person}{Divyakant Agrawal}, {and} \bibinfo{person}{Amr~El
  Abbadi}.} \bibinfo{year}{2019}\natexlab{}.
\newblock \showarticletitle{Unifying consensus and atomic commitment for
  effective cloud data management}.
\newblock \bibinfo{journal}{\emph{Proceedings of the VLDB Endowment}}
  \bibinfo{volume}{12}, \bibinfo{number}{5} (\bibinfo{year}{2019}),
  \bibinfo{pages}{611--623}.
\newblock


\bibitem[Mohan et~al\mbox{.}(1986)]%
        {mohan1986transactionprcpra}
\bibfield{author}{\bibinfo{person}{C Mohan}, \bibinfo{person}{Bruce Lindsay},
  {and} \bibinfo{person}{Ron Obermarck}.} \bibinfo{year}{1986}\natexlab{}.
\newblock \showarticletitle{Transaction management in the R* distributed
  database management system}.
\newblock \bibinfo{journal}{\emph{ACM Transactions on Database Systems (TODS)}}
  \bibinfo{volume}{11}, \bibinfo{number}{4} (\bibinfo{year}{1986}),
  \bibinfo{pages}{378--396}.
\newblock


\bibitem[Nouali et~al\mbox{.}(2005)]%
        {2PC}
\bibfield{author}{\bibinfo{person}{Nadia Nouali}, \bibinfo{person}{Anne
  Doucet}, {and} \bibinfo{person}{Habiba Drias}.}
  \bibinfo{year}{2005}\natexlab{}.
\newblock \showarticletitle{A two-phase commit protocol for mobile wireless
  environment}. In \bibinfo{booktitle}{\emph{Proceedings of the 16th
  Australasian database conference-Volume 39}}. \bibinfo{pages}{135--143}.
\newblock


\bibitem[Ouyang et~al\mbox{.}(2016)]%
        {ouyang2016reducing}
\bibfield{author}{\bibinfo{person}{Xue Ouyang}, \bibinfo{person}{Peter
  Garraghan}, \bibinfo{person}{Renyu Yang}, \bibinfo{person}{Paul Townend},
  {and} \bibinfo{person}{Jie Xu}.} \bibinfo{year}{2016}\natexlab{}.
\newblock \showarticletitle{Reducing late-timing failure at scale: Straggler
  root-cause analysis in cloud datacenters}. In \bibinfo{booktitle}{\emph{Fast
  Abstracts in the 46th Annual IEEE/IFIP International Conference on Dependable
  Systems and Networks}}. DSN.
\newblock


\bibitem[Peng and Dabek(2010)]%
        {GOO}
\bibfield{author}{\bibinfo{person}{Daniel Peng} {and} \bibinfo{person}{Frank
  Dabek}.} \bibinfo{year}{2010}\natexlab{}.
\newblock \showarticletitle{Large-scale incremental processing using
  distributed transactions and notifications}.
\newblock  (\bibinfo{year}{2010}).
\newblock


\bibitem[Prekas et~al\mbox{.}(2017)]%
        {prekas2017zygos}
\bibfield{author}{\bibinfo{person}{George Prekas}, \bibinfo{person}{Marios
  Kogias}, {and} \bibinfo{person}{Edouard Bugnion}.}
  \bibinfo{year}{2017}\natexlab{}.
\newblock \showarticletitle{Zygos: Achieving low tail latency for
  microsecond-scale networked tasks}. In \bibinfo{booktitle}{\emph{Proceedings
  of the 26th Symposium on Operating Systems Principles}}.
  \bibinfo{pages}{325--341}.
\newblock


\bibitem[Qadah and Sadoghi(2021)]%
        {qadah2021highly}
\bibfield{author}{\bibinfo{person}{Thamir~M Qadah} {and}
  \bibinfo{person}{Mohammad Sadoghi}.} \bibinfo{year}{2021}\natexlab{}.
\newblock \showarticletitle{Highly Available Queue-oriented Speculative
  Transaction Processing}.
\newblock \bibinfo{journal}{\emph{arXiv:2107.11378}} (\bibinfo{year}{2021}).
\newblock


\bibitem[Ruan et~al\mbox{.}(2019)]%
        {ruan2019fine}
\bibfield{author}{\bibinfo{person}{Pingcheng Ruan}, \bibinfo{person}{Gang
  Chen}, \bibinfo{person}{Tien Tuan~Anh Dinh}, \bibinfo{person}{Qian Lin},
  \bibinfo{person}{Beng~Chin Ooi}, {and} \bibinfo{person}{Meihui Zhang}.}
  \bibinfo{year}{2019}\natexlab{}.
\newblock \showarticletitle{Fine-grained, secure and efficient data provenance
  on blockchain systems}.
\newblock \bibinfo{journal}{\emph{Proceedings of the VLDB Endowment}}
  \bibinfo{volume}{12}, \bibinfo{number}{9} (\bibinfo{year}{2019}),
  \bibinfo{pages}{975--988}.
\newblock


\bibitem[Serrano-Alvarado et~al\mbox{.}(2005)]%
        {ACOM}
\bibfield{author}{\bibinfo{person}{Patricia Serrano-Alvarado},
  \bibinfo{person}{Romain Rouvoy}, {and} \bibinfo{person}{Philippe Merle}.}
  \bibinfo{year}{2005}\natexlab{}.
\newblock \showarticletitle{Self-Adaptive Component-Based Transaction Commit
  Management}. In \bibinfo{booktitle}{\emph{Proceedings of the 4th Workshop on
  Reflective and Adaptive Middleware Systems}} (Grenoble, France)
  \emph{(\bibinfo{series}{ARM '05})}. \bibinfo{publisher}{Association for
  Computing Machinery}, \bibinfo{address}{New York, NY, USA}.
\newblock
\showISBNx{1595932704}


\bibitem[Skeen(1981)]%
        {Skeen}
\bibfield{author}{\bibinfo{person}{Dale Skeen}.}
  \bibinfo{year}{1981}\natexlab{}.
\newblock \showarticletitle{Nonblocking Commit Protocols}
  \emph{(\bibinfo{series}{SIGMOD '81})}. \bibinfo{publisher}{Association for
  Computing Machinery}, \bibinfo{address}{New York, NY, USA},
  \bibinfo{pages}{133–142}.
\newblock
\showISBNx{0897910400}


\bibitem[Skeen(1982)]%
        {skeen1982quorum}
\bibfield{author}{\bibinfo{person}{Dale Skeen}.}
  \bibinfo{year}{1982}\natexlab{}.
\newblock \bibinfo{booktitle}{\emph{A quorum-based commit protocol}}.
\newblock \bibinfo{type}{{T}echnical {R}eport}. \bibinfo{institution}{Cornell
  University}.
\newblock


\bibitem[Skeen and Stonebraker(1983)]%
        {skeen1983formal}
\bibfield{author}{\bibinfo{person}{Dale Skeen} {and} \bibinfo{person}{Michael
  Stonebraker}.} \bibinfo{year}{1983}\natexlab{}.
\newblock \showarticletitle{A formal model of crash recovery in a distributed
  system}.
\newblock \bibinfo{journal}{\emph{IEEE Transactions on Software Engineering}}
  \bibinfo{number}{3} (\bibinfo{year}{1983}), \bibinfo{pages}{219--228}.
\newblock


\bibitem[Stamos and Cristian(1990)]%
        {stamos1990low}
\bibfield{author}{\bibinfo{person}{James~W Stamos} {and}
  \bibinfo{person}{Flaviu Cristian}.} \bibinfo{year}{1990}\natexlab{}.
\newblock \showarticletitle{A low-cost atomic commit protocol}. In
  \bibinfo{booktitle}{\emph{Proceedings Ninth Symposium on Reliable Distributed
  Systems}}. IEEE, \bibinfo{pages}{66--75}.
\newblock


\bibitem[Stamos and Cristian(1993)]%
        {stamos1993EP}
\bibfield{author}{\bibinfo{person}{James~W Stamos} {and}
  \bibinfo{person}{Flaviu Cristian}.} \bibinfo{year}{1993}\natexlab{}.
\newblock \showarticletitle{Coordinator log transaction execution protocol}.
\newblock \bibinfo{journal}{\emph{Distributed and Parallel Databases}}
  \bibinfo{volume}{1}, \bibinfo{number}{4} (\bibinfo{year}{1993}),
  \bibinfo{pages}{383--408}.
\newblock


\bibitem[Stonebraker(1979)]%
        {stonebraker1979UV}
\bibfield{author}{\bibinfo{person}{Michael Stonebraker}.}
  \bibinfo{year}{1979}\natexlab{}.
\newblock \showarticletitle{Concurrency control and consistency of multiple
  copies of data in distributed INGRES}.
\newblock \bibinfo{journal}{\emph{IEEE Transactions on software Engineering}}
  \bibinfo{number}{3} (\bibinfo{year}{1979}), \bibinfo{pages}{188--194}.
\newblock


\bibitem[Suresh et~al\mbox{.}(2015)]%
        {suresh2015c3}
\bibfield{author}{\bibinfo{person}{Lalith Suresh}, \bibinfo{person}{Marco
  Canini}, \bibinfo{person}{Stefan Schmid}, {and} \bibinfo{person}{Anja
  Feldmann}.} \bibinfo{year}{2015}\natexlab{}.
\newblock \showarticletitle{C3: Cutting tail latency in cloud data stores via
  adaptive replica selection}. In \bibinfo{booktitle}{\emph{12th USENIX
  Symposium on Networked Systems Design and Implementation}}.
  \bibinfo{pages}{513--527}.
\newblock


\bibitem[Watkins and Dayan(1992)]%
        {watkins1992q}
\bibfield{author}{\bibinfo{person}{Christopher~JCH Watkins} {and}
  \bibinfo{person}{Peter Dayan}.} \bibinfo{year}{1992}\natexlab{}.
\newblock \showarticletitle{Q-learning}.
\newblock \bibinfo{journal}{\emph{Machine learning}} \bibinfo{volume}{8},
  \bibinfo{number}{3-4} (\bibinfo{year}{1992}), \bibinfo{pages}{279--292}.
\newblock


\bibitem[Yan et~al\mbox{.}(2020)]%
        {yan2020domino}
\bibfield{author}{\bibinfo{person}{Xinan Yan}, \bibinfo{person}{Linguan Yang},
  {and} \bibinfo{person}{Bernard Wong}.} \bibinfo{year}{2020}\natexlab{}.
\newblock \showarticletitle{Domino: using network measurements to reduce state
  machine replication latency in wans}. In
  \bibinfo{booktitle}{\emph{Proceedings of the 16th International Conference on
  emerging Networking EXperiments and Technologies}}.
  \bibinfo{pages}{351--363}.
\newblock


\bibitem[Yang et~al\mbox{.}(2022)]%
        {DBLP:journals/pvldb/YangYHZYYCZSXYL22}
\bibfield{author}{\bibinfo{person}{Zhenkun Yang}, \bibinfo{person}{Chuanhui
  Yang}, \bibinfo{person}{Fusheng Han}, \bibinfo{person}{Mingqiang Zhuang},
  \bibinfo{person}{Bing Yang}, \bibinfo{person}{Zhifeng Yang},
  \bibinfo{person}{Xiaojun Cheng}, \bibinfo{person}{Yuzhong Zhao},
  \bibinfo{person}{Wenhui Shi}, \bibinfo{person}{Huafeng Xi},
  \bibinfo{person}{Huang Yu}, \bibinfo{person}{Bin Liu}, \bibinfo{person}{Yi
  Pan}, \bibinfo{person}{Boxue Yin}, \bibinfo{person}{Junquan Chen}, {and}
  \bibinfo{person}{Quanqing Xu}.} \bibinfo{year}{2022}\natexlab{}.
\newblock \showarticletitle{OceanBase: {A} 707 Million tpmC Distributed
  Relational Database System}.
\newblock \bibinfo{journal}{\emph{Proc. {VLDB} Endow.}} \bibinfo{volume}{15},
  \bibinfo{number}{12} (\bibinfo{year}{2022}), \bibinfo{pages}{3385--3397}.
\newblock
\urldef\tempurl%
\url{https://www.vldb.org/pvldb/vol15/p3385-xu.pdf}
\showURL{%
\tempurl}


\bibitem[Yoon et~al\mbox{.}(2018)]%
        {yoon2018mutant}
\bibfield{author}{\bibinfo{person}{Hobin Yoon}, \bibinfo{person}{Juncheng
  Yang}, \bibinfo{person}{Sveinn~Fannar Kristjansson},
  \bibinfo{person}{Steinn~E Sigurdarson}, \bibinfo{person}{Ymir Vigfusson},
  {and} \bibinfo{person}{Ada Gavrilovska}.} \bibinfo{year}{2018}\natexlab{}.
\newblock \showarticletitle{Mutant: Balancing storage cost and latency in
  lsm-tree data stores}. In \bibinfo{booktitle}{\emph{Proceedings of the ACM
  Symposium on Cloud Computing}}. \bibinfo{pages}{162--173}.
\newblock


\bibitem[Yu et~al\mbox{.}(2011)]%
        {yu2011practical}
\bibfield{author}{\bibinfo{person}{Li Yu}, \bibinfo{person}{Ziming Zheng},
  \bibinfo{person}{Zhiling Lan}, {and} \bibinfo{person}{Susan Coghlan}.}
  \bibinfo{year}{2011}\natexlab{}.
\newblock \showarticletitle{Practical online failure prediction for blue
  gene/p: Period-based vs event-driven}. In \bibinfo{booktitle}{\emph{2011
  IEEE/IFIP 41st International Conference on Dependable Systems and Networks
  Workshops (DSN-W)}}. IEEE, \bibinfo{pages}{259--264}.
\newblock


\bibitem[Yu and Pu(2008)]%
        {ADTP}
\bibfield{author}{\bibinfo{person}{Weihai Yu} {and} \bibinfo{person}{Calton
  Pu}.} \bibinfo{year}{2008}\natexlab{}.
\newblock \bibinfo{booktitle}{\emph{A Dynamic Two-Phase Commit Protocol for
  Adaptive Composite Services}}.
\newblock


\bibitem[Yu et~al\mbox{.}(2018)]%
        {DBLP:journals/pvldb/YuXPSRD18}
\bibfield{author}{\bibinfo{person}{Xiangyao Yu}, \bibinfo{person}{Yu Xia},
  \bibinfo{person}{Andrew Pavlo}, \bibinfo{person}{Daniel S{\'{a}}nchez},
  \bibinfo{person}{Larry Rudolph}, {and} \bibinfo{person}{Srinivas Devadas}.}
  \bibinfo{year}{2018}\natexlab{}.
\newblock \showarticletitle{Sundial: Harmonizing Concurrency Control and
  Caching in a Distributed {OLTP} Database Management System}.
\newblock \bibinfo{journal}{\emph{Proc. {VLDB} Endow.}} \bibinfo{volume}{11},
  \bibinfo{number}{10} (\bibinfo{year}{2018}), \bibinfo{pages}{1289--1302}.
\newblock
\urldef\tempurl%
\url{https://doi.org/10.14778/3231751.3231763}
\showDOI{\tempurl}


\bibitem[Zhang et~al\mbox{.}(2018)]%
        {zhang2018building}
\bibfield{author}{\bibinfo{person}{Irene Zhang}, \bibinfo{person}{Naveen~Kr
  Sharma}, \bibinfo{person}{Adriana Szekeres}, \bibinfo{person}{Arvind
  Krishnamurthy}, {and} \bibinfo{person}{Dan~RK Ports}.}
  \bibinfo{year}{2018}\natexlab{}.
\newblock \showarticletitle{Building consistent transactions with inconsistent
  replication}.
\newblock \bibinfo{journal}{\emph{ACM Transactions on Computer Systems}}
  \bibinfo{volume}{35}, \bibinfo{number}{4} (\bibinfo{year}{2018}),
  \bibinfo{pages}{1--37}.
\newblock


\bibitem[Zhang and Sivasubramaniam(2008)]%
        {DBLP:conf/ipps/ZhangS08}
\bibfield{author}{\bibinfo{person}{Yanyong Zhang} {and} \bibinfo{person}{Anand
  Sivasubramaniam}.} \bibinfo{year}{2008}\natexlab{}.
\newblock \showarticletitle{Failure prediction in {IBM} BlueGene/L event logs}.
  In \bibinfo{booktitle}{\emph{22nd {IEEE} International Symposium on Parallel
  and Distributed Processing, {IPDPS} 2008, Miami, Florida USA, April 14-18,
  2008}}. \bibinfo{publisher}{{IEEE}}, \bibinfo{pages}{1--5}.
\newblock
\urldef\tempurl%
\url{https://doi.org/10.1109/IPDPS.2008.4536397}
\showDOI{\tempurl}


\end{thebibliography}

\appendix

\newpage

\appendix

\section{Failure Handling, Correctness Proofs, and Replication Support}

\subsection{Failure Handling of {\flac}}
\label{sec:failureHandleApp}
This section elaborates on how {\flac} handles crash failures during transaction processing.
We follow existing works~\cite{skeen1983formal, gupta2018easycommit, skeen1982quorum} to discuss {\flac}'s failure handling in two aspects:
how {\flac} safely recovers a node from crash failure by continuing half-executed transactions (crash recovery),
and how {\flac} terminates the transaction execution when it gets blocked by crash failure of some other nodes (protocol termination).

\subsubsection{Crash Recovery}
\label{recovery}

% \phxedit{
%
{\flac} tolerates crash failures by adding transaction logs to non-volatile storages.
In recovery, we provide a set of rules that help crashed nodes resolve half-executed transactions without violating safety properties.
We design {\flacFF} to take a similar logging strategy as 2PC~\cite{2PC}.
It logs the votes and decisions of participants as $ready-yes/no$ or $commit/abort$ before sending votes to others or making local decisions, respectively.
{\flacCF} adds $ready-yes/no$ and $commit/abort$ logs alike {\flacFF}, and follow ~\cite{gupta2018easycommit} to add $transit-commit/abort$ logs before broadcasting the decision.
We exclude the recovery process of {\flacNF} since it reuses EC.
The following paragraphs show how {\flac} performs the crash recovery of the coordinator and participants.

When a coordinator node recovers from crash failure, it first finds all half-executed transactions recorded in $propose$ logs, temporarily blocks accesses to data records updated by them, and resolves these transactions following three steps:
\begin{itemize}
    \item For each half-executed transaction $T$, the coordinator first checks if $\Pi_{T} = ${\flacCF} and no $transit-commit$ log is detected locally. If this check succeeds, by Algorithm~\ref{alg:CO2}, no participant node could decide {\commit}. Thus, the coordinator can directly abort the transaction.
    \item Otherwise, the coordinator queries $T's$ logs on participants in {\nodesT}. If a decision log is found, the coordinator decides accordingly to ensure the agreement property. If there exist alive participant nodes in {\nodesT} that do not include $ready-no$ logs, the coordinator directly decides {\abort} since the transaction is not committable without all {\yes} votes (validity property).
    \item Finally, if $T$'s state is still unresolved, the coordinator waits until it gets the $ready-yes$ logs from all {\nodesT} and continues {\flac}'s execution in Algorithm~\ref{alg:CO1} line~\ref{line:AlgCO1AllUndecided} or Algorithm~\ref{alg:CO2} line ~\ref{line:AlgCO2Commit}  to commit the transaction $T$.
\end{itemize}
% %

In recovery, the participant first takes the same process as the coordinator to detect half-executed transactions and blocks concurrent access to involved data records.
It then directly aborts half-executed transactions with $ready-no$ logs by validity property.
If the transaction is still unresolved, the participant attempts to get the final state by communicating with other nodes.
The participant decides the final state of an unresolved transaction $T$ by the following three cases:
\begin{itemize}
    \item If the participant obtains a decision for $T$ from other nodes, it decides accordingly by the agreement property.
    \item Suppose the participant detects an alive participant in {\nodesT} without $ready-yes$ logs for $T$, it directly aborts the transaction by the validity property.
    \item If neither case occurred, the participant waits until it receives $ready-yes$ logs from all {\nodesT} and reaches an agreement with them using a new coordinator.
\end{itemize}

\subsubsection{Protocol Termination}
\label{termination}

%
% \phxedit{
In {\flac} executions, a node $C_i$ may wait for the messages from others to continue its local execution.
This node $C_i$ may hold back its execution to accommodate all messages from alive nodes before the crash timeout.
If no message arrives at $C_i$ before the crash timeout due to crash failures, $C_i$ will block its local execution and switches to termination protocol, trying to avoid transaction blocking.
Our protocol {\flac} guarantees the non-blocking termination of transactions executed with a correct dedicated protocol that matches the actual operating condition.
Otherwise, {\flac} may block the current transaction execution and wait for some crashed nodes' recovery.
In the following paragraph, we present our termination protocol by listing each coordinator's or participant's behaviors on the crash timeout.
As shown in Algorithms ~\ref{alg:CO1} \ref{alg:LP1} \ref{alg:CO2} and \ref{alg:LP2}, the coordinator can only timeout in {\flacFF} (Algorithm~\ref{alg:CO1} line~\ref{line:AlgCO1Termination}), while the participant can timeout in both {\flacFF} and {\flacCF} (Algorithm~\ref{alg:LP1} line~\ref{line:AlgLP1TerminationProtocol} and Algorithm~\ref{alg:LP2} line~\ref{line:AlgLP2Termination}).
If a coordinator timeouts, it blocks the execution of transaction $T$ locally, requests the transaction state on {\nodesT}, and waits until it (i) receives the decision for $T$ and decides accordingly (by the agreement property), (ii) finds the absence of $ready-yes$ log on one participant and then decides {\abort} (by the validity property), or (iii) receives \result{\yes, \undecided} from all participants and then decides {\commit} by continuing on  Algorithm~\ref{alg:CO1} line~\ref{line:AlgCO1AllUndecided} or Algorithm~\ref{alg:CO2} line ~\ref{line:AlgCO2Commit}.
A participant timeout in {\flacFF} first tries to find the decision logs or the absence of $ready-yes$ logs from other nodes and decides with the same process as the coordinator.
If neither is found on alive nodes, the participant waits until all $\nodesT$ become alive and then reaches an agreement among them by a new coordinator.
Otherwise, if a participant timeouts in {\flacCF}, the transmission-before-decide technique~\cite{gupta2018easycommit} used by {\flacCF} ensures that no node has made a decision for $T$.
Thus, the participant can safely decide {\abort} and terminate the current transaction execution.
%
% s

\subsection{Failure Detection Rules of RLSM Manager}
\label{sec:failureDetectionProof}

We list detailed proofs and explanations for all failure detections in Algorithm~\ref{alg:rlsm}.

\subsubsection{Failure detection for {\flacFF} or {\flacCF} by missing results.}
A crash or a network failure can be detected by checking if the coordinator does
not receive all participants' execution results, i.e., some results are missing,
as formalized in Proposition \ref{prop:CrashFailureFlacFF}.

\begin{proposition}[Failure detection by missing results]
  \label{prop:CrashFailureFlacFF}
  In an execution of {\flacFF} or {\flacCF} for a transaction $T$, if the coordinator
  $C^*$ is alive and fails to collect the transaction result from a participant
  $C_i$, that is $|\resultsT| < |\nodesT|$, then there exist a failure in
  $C_i$'s execution.
  %

%  The failure type is the network failure if the result eventually arrives, and is the crash failure otherwise.
\end{proposition}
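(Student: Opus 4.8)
The plan is to argue by contrapositive: I would assume that there is \emph{no} failure in $C_i$'s execution and show that its result must then reach the coordinator within the network timeout $W_{C^*}$, so that $|\resultsT| = |\nodesT|$, contradicting the hypothesis. The key observation is that ``no failure in $C_i$'s execution'' means $C_i$ does not crash and every point-to-point link touching $C_i$ delivers messages within the delay upper bound $U$, since by our failure model links never fail but only delay (Section~\ref{failures}), and a network failure is precisely a delay exceeding the relevant timeout. Under this assumption I would trace the timeline of $C_i$ in Algorithm~\ref{alg:LP1} (resp.\ Algorithm~\ref{alg:LP2}): $C_i$ receives the \textit{Propose} message from $C^*$ within $U(C^*, C_i)$; it determines and broadcasts its vote; if it votes \no{} it replies $\result{\no,\abort}$ immediately, which arrives at $C^*$ well within $W_{C^*}$; otherwise it waits only the precomputed window $W_{C_i} = \max\{t_{\textrm{sent}} + U(C^*,C_j) + U(C_j,C_i) - t_{\textrm{recv}}\}$ and then \emph{always} sends back some result in $\{\result{\yes,\commit}, \result{\yes,\abort}, \result{\yes,\undecided}\}$ to $C^*$ (lines~\ref{line:AlgLP1Commit}, \ref{line:AlgLP1Abort}, \ref{line:AlgLP1Sendback}).

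The arithmetic core of the argument is then to verify that this reply arrives at $C^*$ no later than $W_{C^*}$. The reply is sent at local time $t_{\textrm{recv}} + W_{C_i}$ on $C_i$'s clock, travels to $C^*$ in at most $U(C_i, C^*)$, so it arrives at wall-clock time at most $t_{\textrm{sent}} + \max_j\{U(C^*,C_j) + U(C_j,C_i)\} + U(C_i, C^*)$, which is bounded by $t_{\textrm{sent}} + \max_{j,k}\{U(C^*,C_j) + U(C_j,C_k) + U(C_k, C^*)\} = t_{\textrm{sent}} + W_{C^*}$; since $C^*$ starts waiting at $t_{\textrm{sent}}$ and waits for $W_{C^*}$, the result is collected and $C_i$'s contribution appears in $\resultsT$. (One must be a touch careful that clock skew between $C_i$ and $C^*$ does not break this; as the paper already notes for correctness of {\flacFF}, the $W_{C_i}$ definition subtracts $t_{\textrm{recv}}$ precisely so that the bound is stated in $C^*$'s clock frame, so the estimate goes through.) The {\flacCF} case (Algorithm~\ref{alg:LP2}) is analogous: a non-crashed, non-delayed $C_i$ either replies $\result{\no,\abort}$, or $\result{\yes,\undecided}$ after receiving all \yes{} votes within $W_{C_i}$, or $\result{\yes,\abort}$ if it times out waiting for votes — in every branch it sends \emph{some} element of $\resultsT$ back to $C^*$, and the same delay bound applies.

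Putting the pieces together: if $C^*$ is alive and $|\resultsT| < |\nodesT|$, then some $C_i$'s result did not arrive in time; by the contrapositive just established, this can only happen if $C_i$ crashed or some link incident to $C_i$ delayed a message beyond its timeout, i.e., a failure occurred in $C_i$'s execution — which is exactly the claim. I expect the main obstacle to be not the logical structure but the bookkeeping of the timing bounds across three clocks ($C^*$, $C_i$, $C_j$) and the need to argue that $C_i$ \emph{unconditionally} emits a result on every control-flow path of Algorithms~\ref{alg:LP1} and~\ref{alg:LP2}; the second point is really a small case analysis over the \texttt{if} branches, and the first is the same delay estimate already used implicitly when $W_{C^*}$ and $W_{C_i}$ were defined, so neither should be deep, just careful.
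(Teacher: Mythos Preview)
Your proposal is correct and follows the same contrapositive structure as the paper's proof: assume no failure in $C_i$'s execution, then $C_i$ sends a result and that result arrives within $W_{C^*}$, contradicting $|\resultsT| < |\nodesT|$. The paper's own proof is considerably terser---it simply asserts that with no crash the propose phase runs non-blockingly so $C_i$ sends back a result, and with no network failure that result arrives within the network timeout---whereas you additionally work out the explicit timing arithmetic and the per-branch case analysis of Algorithms~\ref{alg:LP1} and~\ref{alg:LP2}; this extra detail is sound but not required for the level of rigor the paper adopts.
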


\begin{proof}
  By contradiction, assume that there is no failure happened in the execution of
  $C_i$.
  Since there is no crash failure and the protocol is executed non-blockingly in \textit{propose} phase,
  $C_i$ will be able to vote and make a decision, and send back its result to
  the coordinator.
  Since there is no network failure, $C_i$'s result will reach the coordinator
  within the network timeout.
  This result contradicts the hypothesis that $C^*$ fails to collect the
  result of $C_i$.
\end{proof}

Note that if the current protocol is {\flacFF}, the failure type detected by
Proposition~\ref{prop:CrashFailureFlacFF} is agnostic before the crash timeout.
In this case, the RLSM manager will first input CF events to the RLSMs of
non-responsive participants, who are likely to crash (line
\ref{line:AlgRlsmCFDetect}).
Then, it will input NF events to the RLSMs of participants who reply late (line
\ref{line:AlgRlsmAsyNF4FF}), since these late replies are likely caused by
network failures.
When the current protocol is {\flacCF}, which tolerates crash failures, then
failures detected by Proposition \ref{prop:CrashFailureFlacFF} can only be
network failures.
Like the previous case, the RLSM manager can just input NF events to the
RLSMs of participants who reply late (line \ref{line:AlgRlsmAsyNF4CF}).

\subsubsection{Failure detection for {\flacFF} by {\undecided} decision.}
Network failures between participants in an execution {\flacFF} can be detected by
examining if the result $\result{\yes, \undecided}$ is returned to the
coordinator (lines
\ref{line:AlgRlsmCheckNF4flacFF}--\ref{line:AlgRlsmNF4flacFF}), as formalized as
in Proposition \ref{prop:NetworkFailureFlacFF} below.

%\vspace{-0.3em}
\begin{proposition}[Failure detection for {\flacFF} by {\undecided} decision]
  \label{prop:NetworkFailureFlacFF}
  In an execution of {\flacFF} for a transaction $T$, if all nodes can return
  their results ($|\resultsT| = |\nodesT|$), and a node $C_i$ returns
  $\result{\yes, \undecided}$, then a network failure occurs during the
  execution.
%  \end{itemize}
\end{proposition}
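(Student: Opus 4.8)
The plan is to argue by contradiction: I would assume that {\flacFF} runs $T$ to completion with $|\resultsT| = |\nodesT|$, that some node $C_i$ reports $\result{\yes, \undecided}$, and yet that no network failure occurs during the execution, and then derive a contradiction purely from the control flow of Algorithm~\ref{alg:LP1}. The reply $\result{\yes, \undecided}$ can only be produced at line~\ref{line:AlgLP1Sendback}, which $C_i$ reaches exactly when it voted {\yes} and, within its timeout $W_{C_i}$, it neither received a {\no} vote (otherwise it would stop at lines~\ref{line:AlgLP1ReceiveNo}--\ref{line:AlgLP1Abort}) nor received {\yes} from every participant (otherwise it would stop at lines~\ref{line:AlgLP1ReceiveAllYes}--\ref{line:AlgLP1Commit}). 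So the proof reduces to showing that, absent a network failure, $C_i$ \emph{must} have received every participant's vote within $W_{C_i}$.

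First I would extract what the hypothesis $|\resultsT| = |\nodesT|$ buys us about the senders. A participant $C_j$ only reaches a point where it replies its result to $C^*$ (lines~\ref{line:AlgLP1VoteNoAndAbort}, \ref{line:AlgLP1Abort}, \ref{line:AlgLP1Commit}, or~\ref{line:AlgLP1Sendback}) after having executed the vote broadcast on line~\ref{line:AlgLP1Broadcast}. Hence, since every $C_j \in \nodesT$ contributes a result to $\resultsT$, every $C_j$ has finished broadcasting its vote to all other participants --- in particular to $C_i$ --- before it could possibly crash; this is the same use of the ``all results collected'' hypothesis as in Proposition~\ref{prop:CrashFailureFlacFF}. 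Thus every vote message destined for $C_i$ was in fact sent on a connection that, by the model, delivers reliably (only with possible delay).

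Next I would invoke the design of $W_{C_i}$. By its construction in Algorithm~\ref{alg:LP1}, $t_{\textrm{recv}} + W_{C_i} \ge t_{\textrm{sent}} + U(C^*, C_j) + U(C_j, C_i)$ for every $C_j \in \nodesT$. In a run with no network failure, $C_j$ receives the \textit{Propose} message by clock time $t_{\textrm{sent}} + U(C^*, C_j)$ and immediately broadcasts, so its vote reaches $C_i$ by $t_{\textrm{sent}} + U(C^*, C_j) + U(C_j, C_i)$, i.e.\ no later than $C_i$'s deadline. Therefore $C_i$ receives the votes of all participants within $W_{C_i}$, and one of the two early-exit branches above fires: if some participant voted {\no}, $C_i$ replies $\result{\yes, \abort}$; if all voted {\yes}, $C_i$ replies $\result{\yes, \commit}$. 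In neither case does $C_i$ reach line~\ref{line:AlgLP1Sendback}, contradicting that it reported $\result{\yes, \undecided}$. Hence a network failure must have occurred.

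The main obstacle is the fidelity of that inequality for $W_{C_i}$: it mixes the local clocks of $C^*$ and $C_i$, so clock skew could in principle shrink the effective deadline and make a merely delayed (not failed) vote look like a timeout. I would handle this exactly as the paper already does for {\flacFF}'s correctness (the footnote to Algorithm~\ref{alg:LP1}): either assume synchronized clocks, or absorb a bound on the skew into the delay upper bound $U$, so that $W_{C_i}$ genuinely dominates the arrival time of any vote in a failure-free run. Everything else is a routine case split on the control flow of Algorithm~\ref{alg:LP1}.
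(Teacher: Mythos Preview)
Your proposal is correct and follows essentially the same contradiction argument as the paper: assume no network failure, use $|\resultsT| = |\nodesT|$ to conclude every participant broadcast its vote before any possible crash, and then argue those votes must reach $C_i$ within $W_{C_i}$, contradicting the $\result{\yes,\undecided}$ reply. You make the timing step (the inequality on $W_{C_i}$) and the clock-skew caveat explicit, whereas the paper leaves both implicit in the phrase ``votes \ldots will manage to reach $C_i$,'' but the structure is the same.
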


%\vspace{-0.8em}
\begin{proof}
  Assume that there is no network failure in the transaction execution.
  Firstly, according to Algorithm~\ref{alg:LP1}, when $C_i$ returns
  {\result{\yes, \undecided}}, it did not receive all the votes of other
  participants.
  Secondly, since $|\resultsT|$ = $|\nodesT|$, then no participant crashed in
  the \textit{propose} phase.
  Since there is no network failure, votes from all other participants after
  being broadcasted will manage to reach $C_i$ (Algorithm~\ref{alg:LP1}).
  This result contradicts the earlier result that $C_i$ did not receive all
  the votes of other participants.
\end{proof}

Note that during the failure detection for {\flacFF}, the RLSM manager can
pinpoint crash failures to specific participants.
However, if a network failure occurs, it cannot identify precisely the involving
participants, since network failure can happen among a set of nodes.
In this case, the RLSM manager will shift all the participants' states to the
network-failure level (line \ref{line:AlgRlsmNF4flacFF}).

\subsubsection{Network failure detection for {\flacCF} by {\abort} decision.}
Network failures in the execution of {\flacCF} can be detected by checking if
there exist aborted transactions that should be committed (lines
\ref{line:AlgRlsmCheckNF4flacCF}--\ref{line:AlgRlsmNF4flacCF}).
We formalize this detection in Proposition
\ref{prop:NetworkFailureFlacCFFromAborts}.
%

%\vspace{-0.5em}
\begin{proposition}[Network Failure Detection for {\flacCF}]
  \label{prop:NetworkFailureFlacCFFromAborts}
  In the execution of {\flacCF} for a transaction, if the coordinator can collect
  the results from all participants ($|\resultsT| = |\nodesT|$), and there
  exists a participant returns $\result{\yes, \abort}$, while no participant
  votes $\no$, then there exist network failures during the execution.
\end{proposition}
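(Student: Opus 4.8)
The plan is to argue by contradiction, mirroring the structure of Propositions~\ref{prop:CrashFailureFlacFF} and~\ref{prop:NetworkFailureFlacFF}. Assume that no network failure occurs during the execution of {\flacCF} for the transaction $T$, while the hypotheses hold: the coordinator collects all results ($|\resultsT| = |\nodesT|$), some participant $C_i$ returns $\result{\yes, \abort}$, and no participant votes {\no}. The goal is to derive a contradiction by showing that under these conditions every participant must in fact decide {\commit}, so no $\result{\yes, \abort}$ result could have been produced.

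First I would locate where an {\abort} decision can originate in {\flacCF}. Inspecting Algorithm~\ref{alg:LP2}, a participant returns $\result{\yes, \abort}$ only through the \texttt{Else} branch at lines~\ref{line:AlgLP2NotReceiveAllYes}--\ref{line:AlgLP2VoteYesAndAbort}, i.e.\ precisely when it has voted {\yes} but did \emph{not} receive {\yes} votes from all nodes in $\nodesT$ within its network timeout $W_{C_i}$. So from the hypothesis that $C_i$ returned $\result{\yes, \abort}$, I conclude $C_i$ failed to collect all {\yes} votes in time. Next I would rule out each possible reason for this failure. Since no participant votes {\no} (hypothesis) and, by the validity property (Definition~\ref{def1}, footnoted earlier), a {\no}-voting participant immediately decides {\abort} and never sends {\yes}, every participant that votes does vote {\yes} and broadcasts that vote (Algorithm~\ref{alg:LP2}, line~\ref{line:AlgLP2Broadcast}). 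Since $|\resultsT| = |\nodesT|$, no participant crashed during the \textit{propose} phase (appealing to Proposition~\ref{prop:CrashFailureFlacFF} or directly: a crashed participant cannot send its result to a live coordinator absent network failure), so every participant in $\nodesT$ actually reaches the broadcast step and emits its {\yes} vote. Finally, because we assumed no network failure, every such broadcast {\yes} vote reaches $C_i$, and it reaches $C_i$ within the network timeout $W_{C_i}$ — here I would invoke the definition of $W_{C_i}$ from Section~\ref{LP2} (lines around \ref{line:AlgLP2VoteYes}) as the maximum message delay $t_{\textrm{sent}} + U(C^*,C_j) + U(C_j,C_i) - t_{\textrm{recv}}$ over $C_j \in \nodesT$, which by the message-delay-upper-bound property bounds the arrival time of any vote from a non-failed participant. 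Hence $C_i$ receives all {\yes} votes within $W_{C_i}$, takes the \texttt{If} branch at line~\ref{line:AlgLP2AllYes} rather than the \texttt{Else} branch, and therefore does not return $\result{\yes, \abort}$ — contradicting the hypothesis. This forces the conclusion that a network failure occurred.

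The main obstacle I anticipate is the timing argument in the last step: it is not enough that the {\yes} votes \emph{eventually} reach $C_i$; they must reach it before $W_{C_i}$ expires. This rests on the safe-assumption that in a network-failure-free run all point-to-point messages respect the delay upper bound $U$, and on the fact that $W_{C_i}$ is computed to accommodate exactly the worst-case path ``coordinator $\to$ $C_j$ $\to$ $C_i$'' for every $C_j$. I would need to state this dependence explicitly and note (as the paper already does for {\flacFF}) that clock skew between $C^*$ and $C_i$ may perturb the numeric value of $W_{C_i}$ but does not affect the logical correctness of the detection rule — the proposition only claims that \emph{if} such an abort is observed \emph{then} a network failure happened, which is the sound (one-directional) guarantee the {\rlsm} manager needs. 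A secondary subtlety worth a sentence is justifying ``no participant crashed'': I would either cite Proposition~\ref{prop:CrashFailureFlacFF} directly or reprove the one line, since a crashed participant would make $|\resultsT| < |\nodesT|$ absent network failure, contradicting the hypothesis.
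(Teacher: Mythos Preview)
Your proposal is correct and follows essentially the same contradiction argument as the paper: assume no network failure, deduce from $|\resultsT| = |\nodesT|$ that no participant crashed in the \textit{propose} phase, from ``no {\no} vote'' that every participant broadcasts {\yes}, and from the absence of network failure that all {\yes} votes arrive within $W_{C_i}$, forcing every participant into the branch that returns $\result{\yes, \undecided}$ rather than $\result{\yes, \abort}$. One small slip to fix: in your goal statement you say participants ``must in fact decide {\commit}'', but in {\flacCF} receiving all {\yes} votes yields $\result{\yes, \undecided}$ (Algorithm~\ref{alg:LP2}, line~\ref{line:AlgLP2TentativeCommit}), not a {\commit} decision --- your subsequent argument handles this correctly.
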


%\vspace{-0.8em}
\begin{proof}
  By contradiction, suppose that there is no network failure.
  % while some participants return $\result{\yes, \abort}$ and all others return
  % $\result{\yes, \undecided}$.
  %
  Since the coordinator can collect all the results ($|\resultsT| = |\nodesT|$), no
  participant crashes in the \textit{propose} phase.
  Since no participant votes $\no$, all of them must vote $\yes$.
  Moreover, there is no network failure, thus all participants in $\nodesT$
  should receive {\yes} votes from others and then return $\result{\yes,
    \undecided}$ (Algorithm \ref{alg:LP2}).
  It contradicts the hypothesis that there exists participant return
  $\result{\yes, \abort}$.
 \end{proof}

\subsection{Correctness Proof of RLSM Manager}
\label{sec:rlsmCorrectness}

% \phxedit{
This section shows the correctness of {\rlsm} in two aspects: soundness (Proposition~\ref{thmSoundness}) and completeness (Proposition~\ref{thmCompleteness}), i.e.
%
%{\rlsm} does not report incorrect failures for soundness, and it detects all transaction commit-blocking failures for completeness.
%
{\rlsm} does not report false failures, and it detects all failures that cause the abort of committable transactions.

\begin{proposition}
[Soundness of RLSM]
\label{thmSoundness}
In execution of {\flac} for a transition, if the RLSM manager triggers a $CF$
or an $NF$ event, then there is indeed an unexpected failure that occurs during the transaction execution.
\end{proposition}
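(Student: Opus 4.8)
The plan is to argue by a finite case analysis over the lines of Algorithm~\ref{alg:rlsm} at which the RLSM manager can emit a $CF$ or an $NF$ event, discharging each case with the matching failure-detection result proved earlier in this section. Before starting I would fix the reading of ``unexpected failure'': a failure is \emph{unexpected} for a transaction $T$ if the robustness level used to select $\Pi_T$ does not tolerate it --- so at level $\levelFF$ (protocol $\flacFF$) no failure is expected, and at level $\levelCF$ (protocol $\flacCF$) crash failures are expected but network failures are not. I would also record two facts that trim the analysis: (a) if the manager runs at all then the coordinator $C^*$ is alive, which is exactly the standing hypothesis of Propositions~\ref{prop:CrashFailureFlacFF}, \ref{prop:NetworkFailureFlacFF} and \ref{prop:NetworkFailureFlacCFFromAborts}; and (b) in the $L = \levelNF$ branch of Algorithm~\ref{alg:rlsm} only $FF(\cdot)$ events are issued, so it suffices to inspect the branches $L = \levelFF$ and $L = \levelCF$.

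Then I would walk the five emission points. Lines~\ref{line:AlgRlsmCFDetect} and~\ref{line:AlgRlsmAsyNF4FF} fire under $L = \levelFF$ (hence $\Pi_T = \flacFF$) with $|\resultsT| < |\nodesT|$; Proposition~\ref{prop:CrashFailureFlacFF} already turns this guard into the existence of a failure in some participant's execution, and since $\flacFF$ tolerates nothing that failure is unexpected, so both the $CF$ events to the non-responsive participants and the $NF$ events to the late repliers are sound (the split is heuristic; soundness only needs ``some failure occurred''). Line~\ref{line:AlgRlsmAsyNF4CF} fires under $L = \levelCF$ with $|\resultsT| < |\nodesT|$; Proposition~\ref{prop:CrashFailureFlacFF} again yields a failure, and a \emph{late} (as opposed to missing) reply is precisely a delayed message --- a network failure --- which is unexpected for $\levelCF$. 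Line~\ref{line:AlgRlsmNF4flacFF} fires under $L = \levelFF$ when $\result{\yes,\undecided}\in\resultsT$ with $|\resultsT| = |\nodesT|$; Proposition~\ref{prop:NetworkFailureFlacFF} shows this forces a network failure during the execution, unexpected for $\levelFF$. Line~\ref{line:AlgRlsmNF4flacCF} fires under $L = \levelCF$ when $\result{\yes,\abort}\in\resultsT$ and $\result{\no,\_}\notin\resultsT$ with $|\resultsT| = |\nodesT|$; Proposition~\ref{prop:NetworkFailureFlacCFFromAborts} shows this forces a network failure, unexpected for $\levelCF$. Since every $CF$/$NF$ emission has been placed in one of these cases and each case exhibited an unexpected failure, the proposition follows.

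The part I expect to need the most care is the bookkeeping around ``unexpected'' at level $\levelCF$: I must make explicit that the crash a $\flacCF$ run may legitimately encounter is \emph{not} reported (Algorithm~\ref{alg:rlsm} issues no $CF$ event in the $L = \levelCF$ branch), so the only events that can arise there are $NF$ events, and both of those cases were shown above to witness a genuine network failure. The remaining loose end --- the implicit ``coordinator alive'' premise of the cited propositions --- I would simply dispatch in the opening line via fact~(a). Beyond these two points the argument is a direct appeal to Propositions~\ref{prop:CrashFailureFlacFF}, \ref{prop:NetworkFailureFlacFF} and \ref{prop:NetworkFailureFlacCFFromAborts}, so no further computation is involved.
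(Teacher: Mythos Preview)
Your proposal is correct and follows essentially the same approach as the paper: a case analysis over the emission points of Algorithm~\ref{alg:rlsm}, each discharged by the corresponding detection lemma (Propositions~\ref{prop:CrashFailureFlacFF}, \ref{prop:NetworkFailureFlacFF}, \ref{prop:NetworkFailureFlacCFFromAborts}). Your write-up is in fact more careful than the paper's own rather terse proof --- in particular your explicit handling of why no $CF$ event is emitted at level $\levelCF$, and your remark that a \emph{late} reply at $\levelCF$ witnesses a network delay specifically, fill small gaps the paper leaves implicit.
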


\begin{proof}
Algorithm~\ref{alg:rlsm} has listed all possible ways to detect failures (for example, $CF$ in line~\ref{line:AlgRlsmCFDetect}).
The following Proposition~\ref{prop:CrashFailureFlacFF}, \ref{prop:NetworkFailureFlacFF}, and ~\ref{prop:NetworkFailureFlacCFFromAborts} have proven the correctness of each detected failure, so that $CF$ can only be triggered by the execution of {\flacFF} when crash or network failures happen, while $NF$ is triggered  by the execution of {\flacCF} only when network failures occur.
\end{proof}

\begin{proposition}[Completeness of RLSM]
  \label{thmCompleteness}
  In an execution of {\flac} for transaction $T$, if there is a crash or
  a network failure occurring that aborts committable transactions, then
  the RLSM manager will always be able to detect it and trigger the event $CF$
  or $NF$.
\end{proposition}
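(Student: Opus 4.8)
The plan is to prove completeness by a case analysis on the dedicated protocol $\Pi_T$ that is running when the failure occurs, mirroring the structure of Algorithm~\ref{alg:rlsm}. The key observation is that ``aborting a committable transaction'' only has effect, and only needs detecting, at the two lenient levels $\levelFF$ and $\levelCF$: at $\levelNF$ the protocol is EC, which is already non-blocking against both failure types, so the failure cannot cause an abort of a committable transaction that the RLSM would need to react to (and indeed the RLSM at $\levelNF$ only listens for $FF$ events by Definition~\ref{def-rlsm}). So I would first dispose of the $\levelNF$ case by this remark, then handle $\levelFF$ and $\levelCF$.

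For the $\levelFF$ case, suppose a crash or network failure occurs during a {\flacFF} execution of a committable transaction $T$ (all participants vote {\yes}). I would split on whether the coordinator receives all results. If $|\resultsT| < |\nodesT|$, then line~\ref{line:AlgRlsmMissingFF} fires: the contrapositive of Proposition~\ref{prop:CrashFailureFlacFF} guarantees this branch is taken exactly when a failure occurred, and the RLSM emits $CF$ to the non-responsive participants and $NF$ to the late repliers (lines~\ref{line:AlgRlsmCFDetect}--\ref{line:AlgRlsmAsyNF4FF}), so an event is triggered. If instead $|\resultsT| = |\nodesT|$, then since the only failure that can keep all participants responsive yet still matter is a network delay between participants, some participant failed to receive all {\yes} votes in time and, by Algorithm~\ref{alg:LP1}, returns $\result{\yes,\undecided}$; hence $\result{\yes,\undecided}\in\resultsT$, line~\ref{line:AlgRlsmCheckNF4flacFF} fires, and $NF$ is emitted (line~\ref{line:AlgRlsmNF4flacFF}). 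The logical backbone here is just the contrapositives of Propositions~\ref{prop:CrashFailureFlacFF} and~\ref{prop:NetworkFailureFlacFF}.

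For the $\levelCF$ case, the protocol is {\flacCF}, which tolerates crash failures, so the only failure that can abort a committable transaction is a network failure. Again split on $|\resultsT|$. If $|\resultsT| < |\nodesT|$ (a late reply), line~\ref{line:AlgRlsmMissingCF} fires and $NF$ is emitted (line~\ref{line:AlgRlsmAsyNF4CF}). If $|\resultsT| = |\nodesT|$, then since $T$ is committable no participant voted {\no}, yet the network failure forced some participant to miss a {\yes} vote and therefore, by Algorithm~\ref{alg:LP2}, decide {\abort} and return $\result{\yes,\abort}$; so $\result{\yes,\abort}\in\resultsT$ and $\result{\no,\_}\notin\resultsT$, line~\ref{line:AlgRlsmCheckNF4flacCF} fires, and $NF$ is emitted (line~\ref{line:AlgRlsmNF4flacCF}) — this is precisely the contrapositive of Proposition~\ref{prop:NetworkFailureFlacCFFromAborts}. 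Combining the three levels, every failure that aborts a committable transaction triggers $CF$ or $NF$, which is the claim.

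The main obstacle I anticipate is not any individual case but making the reduction in the $\levelNF$ case airtight: one has to argue carefully that ``a failure that aborts a committable transaction'' is genuinely vacuous (or harmless) at $\levelNF$, given that EC is non-blocking — i.e., that the completeness statement should be read relative to the levels where detection is actually needed, rather than requiring $\levelNF$ to detect anything. A secondary subtlety is the phrase ``committable transaction'': I would want to pin down that it means all participants in $\nodesT$ vote {\yes}, so that the ``no participant votes {\no}'' hypotheses of Propositions~\ref{prop:NetworkFailureFlacFF} and~\ref{prop:NetworkFailureFlacCFFromAborts} are automatically met; everything else is a mechanical application of their contrapositives against the branch conditions of Algorithm~\ref{alg:rlsm}.
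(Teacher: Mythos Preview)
Your proposal is correct and follows essentially the same approach as the paper: a case analysis on the current protocol level, with the $\levelFF$ and $\levelCF$ cases handled via the contrapositives of Propositions~\ref{prop:CrashFailureFlacFF}, \ref{prop:NetworkFailureFlacFF}, and~\ref{prop:NetworkFailureFlacCFFromAborts} against the branch conditions of Algorithm~\ref{alg:rlsm}. The paper frames the whole argument as a single contraposition (``assume no event is triggered, show no committable transaction is aborted by a failure'') rather than your direct case split, but the logical content is identical; notably, the paper simply omits the $\levelNF$ case, so your explicit discussion of why detection is vacuous there is more careful than the original.
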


\begin{proof}
  We prove this by contra position.
  Firstly, for {\flacFF}, if it does
  not trigger any $CF$ or $NF$ event (Algorithm \ref{alg:rlsm}), then (i) $\resultsT$ must contain only the
  $\commit$ or only the $\abort$ decision, and (ii) the coordinator can collect
  all the participants' results ($|\resultsT| = |\nodesT|$).
  It follows that no
  crash failure occurs.
  From (i), all participants have decided {\commit} or {\abort}.
  Then, according to Algorithm~\ref{alg:LP1}, each node has received $|\nodesT|$ votes, indicating no network failure affecting the executions of committable transactions.

  Secondly, for {\flacCF}, if it does not
  trigger any $NF$ event (Algorithm \ref{alg:rlsm}),
  then $\resultsT$ must either contain all $\result{\yes, \undecided }$ or contains some $\result{\no, \abort}$. In particular:
  When all nodes replied $\result{\yes, \undecided }$, the coordinator will decide to commit the transaction (Algorithm \ref{alg:LP2}, line
  \ref{line:AlgCO2Commit}), implying that the committable transactions are not affected.
  When some nodes replied $\result{\no, \abort}$, the transaction is not committable.
  Thus, the committable transactions in {\flacCF} are not affected by non-detected failures either.
\qedhere
\end{proof}

\subsection{Correctness Proof of \flac}
\label{sec:flacCorrectness}

We conclude our {\flac} design with a proof of correctness for the overall system formed by all components introduced in this paper, including the two-phase transaction processing system (Section~\ref{sec:systemModel}), dedicated protocols for each operating environment  (Section~\ref{sec:subProtocols}), and the {\rlsm} manager (Section~\ref{rlsm}).

\begin{proposition}[Correctness]
{\flac} ensures overall safety, liveness (Definition~\ref{def1}), and durability.
\end{proposition}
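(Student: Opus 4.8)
The plan is to split the proposition into its three assertions — safety, liveness, and durability — and to reduce each to facts already established for the dedicated protocols $\flacFF$, $\flacCF$, $\flacNF$, for the {\rlsm} manager, and for the crash-recovery and termination procedures, so that the only genuinely new work is accounting for the interaction between protocol switching and each property.

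For \textbf{safety}, I would first invoke the structural fact that every transaction $T$ is handled end-to-end by a single dedicated protocol $\Pi_T$ selected in the propose phase, and that the {\rlsm} transitions performed in the validate phase only change the protocol chosen for \emph{subsequent} transactions. Consequently, validity and agreement for $T$ follow immediately from Proposition~\ref{prop:Safety} applied to $\Pi_T \in \{\flacFF, \flacCF, \flacNF\}$. The delicate case is the slow path together with crash recovery: when a timed-out coordinator or participant resolves an $\undecided$ transaction, I must verify that the rules in Appendix~\ref{sec:failureHandleApp} cannot yield conflicting decisions — a node decides $\commit$ only after observing a $\commit$ log or collecting \emph{ready-yes} logs from all of $\nodesT$, and decides $\abort$ only after seeing a missing \emph{ready-yes} log or an existing $\abort$ — and that these predicates are stable and mutually exclusive across nodes. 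Here the argument leans on the 2PC-style logging of $\flacFF$ and the transmission-before-decide logging of $\flacCF$ to guarantee that whatever decision a recovering node reconstructs is the unique globally consistent one.

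For \textbf{liveness}, I would argue in two steps. First, whenever every participant in $\nodesT$ carries a robustness level at least as stringent as the actual operating condition, $\Pi_T$ is an appropriate protocol and Proposition~\ref{prop:Liveness} yields termination. Second, I must show the {\rlsm} eventually enters and, in a stabilized environment, remains in such a matching configuration: by soundness and completeness of the {\rlsm} manager (Propositions~\ref{thmSoundness} and~\ref{thmCompleteness}), any crash or network failure that would block a committable transaction is detected and triggers an upgrade event, so after finitely many transactions no participant sits at too lenient a level; and a downgrade fires only after $\numRunCF$ (resp.\ $\numRunNF$) consecutive failure-free executions, so once the environment is quiescent the level settles. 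The main obstacle is ruling out endless oscillation — a failure schedule that perpetually forces upgrades and then downgrades — and I would dispose of it by observing that a node at $\levelNF$ runs the non-blocking EC protocol regardless, so even adversarial oscillation never blocks forever; it degrades throughput, not correctness, and the RL-tuned $\numRunCF, \numRunNF$ affect only performance, not this worst-case guarantee. I would also remark that the {\rlsm} state need not be durable: after a crash a node may resume at an arbitrary (possibly too lenient) level, but the upgrade mechanism restores a matching level within finitely many transactions, so liveness is not compromised.

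For \textbf{durability}, the argument is short: aborted transactions have no effects to retain, while $\flacFF$ and $\flacCF$ persist \emph{ready-yes/no} and \emph{commit/abort} records (and $\flacCF$ additionally \emph{transit-commit/abort}) to non-volatile storage before sending votes or enacting decisions, and $\flacNF$ inherits EC's logging; hence every committed transaction's decision is persisted on every participant that applied it, and the recovery procedure of Appendix~\ref{recovery} reconstructs exactly that decision. I expect this part to be routine once the logging discipline is quoted; the genuinely hard step is the agreement-under-recovery argument inside the safety part, because that is where protocol switching, timeouts, and crash recovery all interact at once.
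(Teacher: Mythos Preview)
Your proposal is correct and follows essentially the same decomposition as the paper: safety reduced to Proposition~\ref{prop:Safety} together with the observation that each transaction is handled by a single fixed $\Pi_T$; liveness reduced to the soundness and completeness of {\rlsm} (Propositions~\ref{thmSoundness}, \ref{thmCompleteness}) driving the system to a matching level where Proposition~\ref{prop:Liveness} applies; and durability reduced to the logging discipline and recovery rules of Appendix~\ref{sec:failureHandleApp}. The paper's own proof is a terse three-sentence appeal to exactly these ingredients, whereas you go further by explicitly treating the agreement-under-recovery case, the oscillation concern, and the non-durability of {\rlsm} state --- all of which are reasonable elaborations but not additional ideas beyond what the paper sketches.
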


\begin{proof}
{\flac} ensures overall safety by executing transactions with commit protocols that have been proven to be safe (Proposition~\ref{prop:Safety}).
It sticks to {\flacFF}, {\flacCF}, or {\flacNF} in each execution and thus avoids the problems caused by interleaving executions.
For liveness, {\flac} continuously detects unexpected failures that could cause transactions to be aborted or blocked (Proposition~\ref{thmSoundness} and ~\ref{thmCompleteness}) and adjusts to a more suitable protocol for eventual liveness (Proposition~\ref{prop:Liveness} and Algorithm~\ref{alg:rlsm}).
Finally, by persisting the transaction logs to non-volatile storages and designing ways to recover nodes from failures with these logs (Section~\ref{recovery}), {\flac} ensures durability.
\end{proof}
% }

% \input{chapters/appendix-proofs}

\subsection{Extending \flac~ to Support Replicated Systems}
\label{app:discussion}

% \phxedit{
This section discusses {\flac}'s position regarding replicated systems.
% \subsection{Replication}
Classical database systems depend on highly available hardware, like high-end servers, to ensure system availability.
However, these systems cannot guarantee both availability and data consistency following node or network failures.
Existing database systems~\cite{corbett2013spanner, DBLP:journals/pvldb/YangYHZYYCZSXYL22} avoid this problem by introducing cross-region replicas synchronized by consensus protocols.
The atomic commit protocols in these systems are either separated ~\cite{corbett2013spanner} or combined~\cite{zhang2018building, maiyya2019unifying, DBLP:journals/pvldb/YangYHZYYCZSXYL22} from underlying consensus protocols.
Both frameworks achieve high availability at the cost of increased transaction normal processing overhead due to the use of extra nodes and the extra message round trips brought by replications.
Compared to these replicated commit protocols, {\flac} is not designed for systems that require extremely high availability.
Instead, it strikes a balance between system availability and efficiency.
For availability, it immediately adjusts the protocol to a higher robustness level upon detecting unexpected failures, minimizing the protocol's blocking.
For efficiency, {\flac} only executes transactions with costly but highly available protocols when necessary.
We compare it with G-PAC, a state-of-the-art replicated commit protocol, to show their difference in Appendix~\ref{sec:experimentApp}.
In all, {\flac} can be customized into replicated systems with the following changes:
% }
%
% \phxedit{
\begin{itemize}
    \item In {\flacFF} and {\flacCF}, the transaction commit only involves the leader node of each replica group. The leader node that serves as a participant broadcasts its decision to all replicas asynchronously before finishing its local transaction branch.
    \item {\flacNF} is changed to a replicated commit protocol that provides high availability at the cost of a longer normal processing time (as shown by our experiments in Appendix~\ref{sec:experimentApp}).
    \item When upward transitions towards {\flacNF}, each leader first synchronizes its local decisions to its replicas group. The parameter {\numRunNF} is increased to leverage the extra cost brought by such synchronizations in upward transitions.
\end{itemize}
% In the propose phase, {\flac} sends transaction operations and final decisions across replicas asynchronously.
% %
% Meanwhile, the execution of ACPs (from votes to decisions) is limited to leader nodes (like $ready$ logs).
% %
% For higher availability, the user can replace {\flacNF} with replicated protocols and manually increase robustness downgrade transition parameters.
%
% Note that a practical and efficient combination of {\flac} and replications is not relevant to the main contributions of this paper and needs further careful design.
% %
% Thus, we leave a well-designed failure-aware replicated commit protocol for our future work.
%
% }

%%% Local Variables:
%%% mode: latex
%%% TeX-master: "../main"
%%% End:

\newpage
\section{Extra Experiments for FLAC}
\label{sec:experimentApp}

\subsection{Effect of Network Delays on FLAC}
\label{sec:YCSBVaryNetworkDelay}
Figure \ref{fig:delay} presents the throughput and latency of {\flac}, 2PC, 3PC,
C-PAC, EC under the varying settings of network delay.
{\flac} performs better than all other protocols when the network delay is
higher than 5ms, and its advantage over the others increases as the network delay
increases.
As the network delay increases from 5ms to 40ms, {\flac} achieves a 31.0\% to
49.6\% increase in throughput and 34.4\% to 59.0\% decrease in latency compared
to EC and 2PC.
There are two reasons for this achievement of {\flac}.
Firstly, {\flac} mainly run the protocol {\flacFF} in this failure-free
environment.
Secondly, {\flacFF} can commit transactions more efficiently with lesser message
delays on the participant side than all other protocols (Table~\ref{tab:cp}).
In this experiment, all the protocols exhibit performance degradation as the
network delay increases.
This phenomenon is due to the fact that when the network delay is higher, the
transaction blocking and execution time will be longer.

\begin{figure}[h]
  \centering
  \subfigure[Throughput.]{%
    \includegraphics[width=0.48\linewidth]{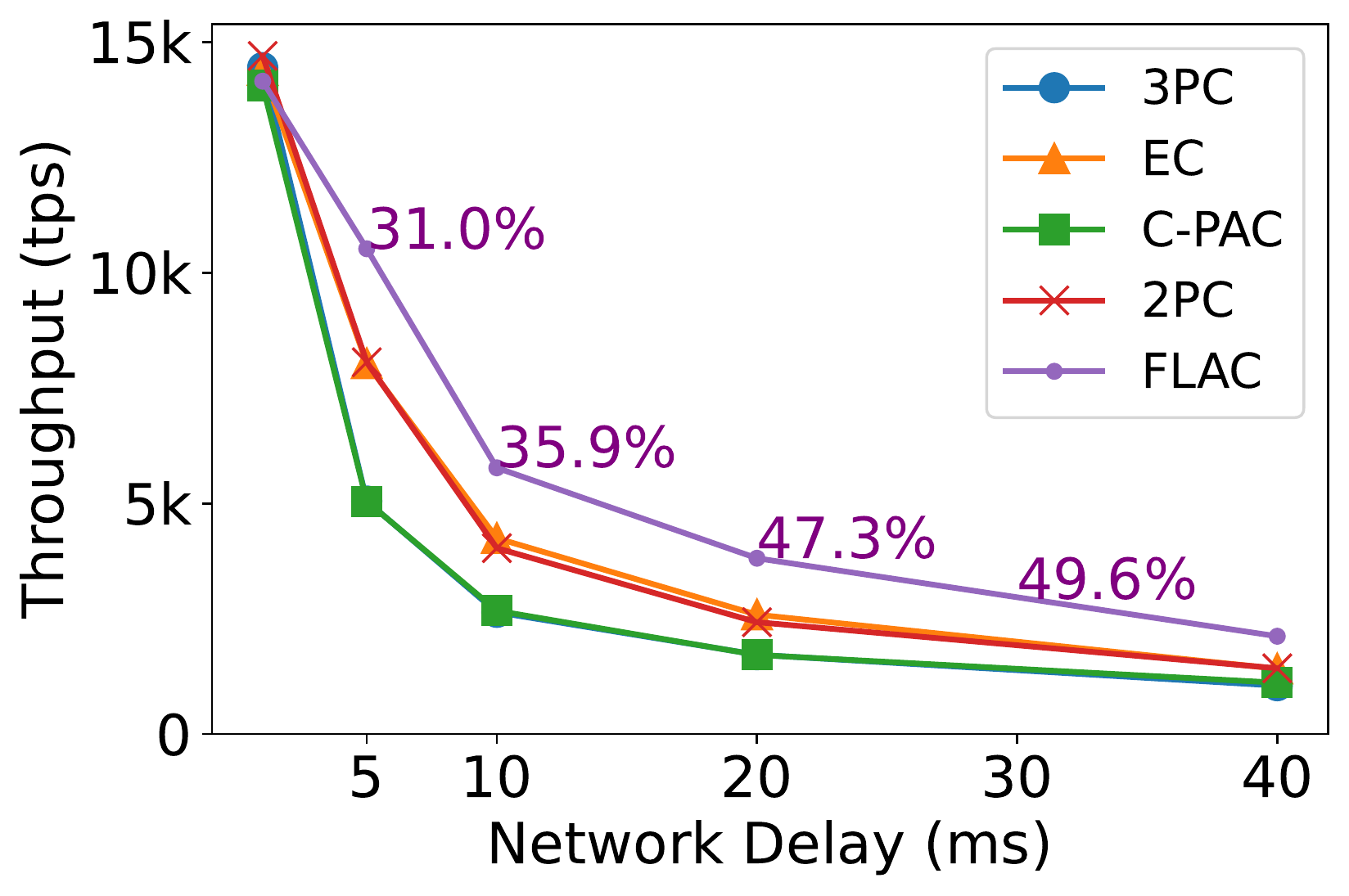}
  \label{fig:delay-th}}
  \subfigure[Tail latency.]{%
    \includegraphics[width=0.48\linewidth]{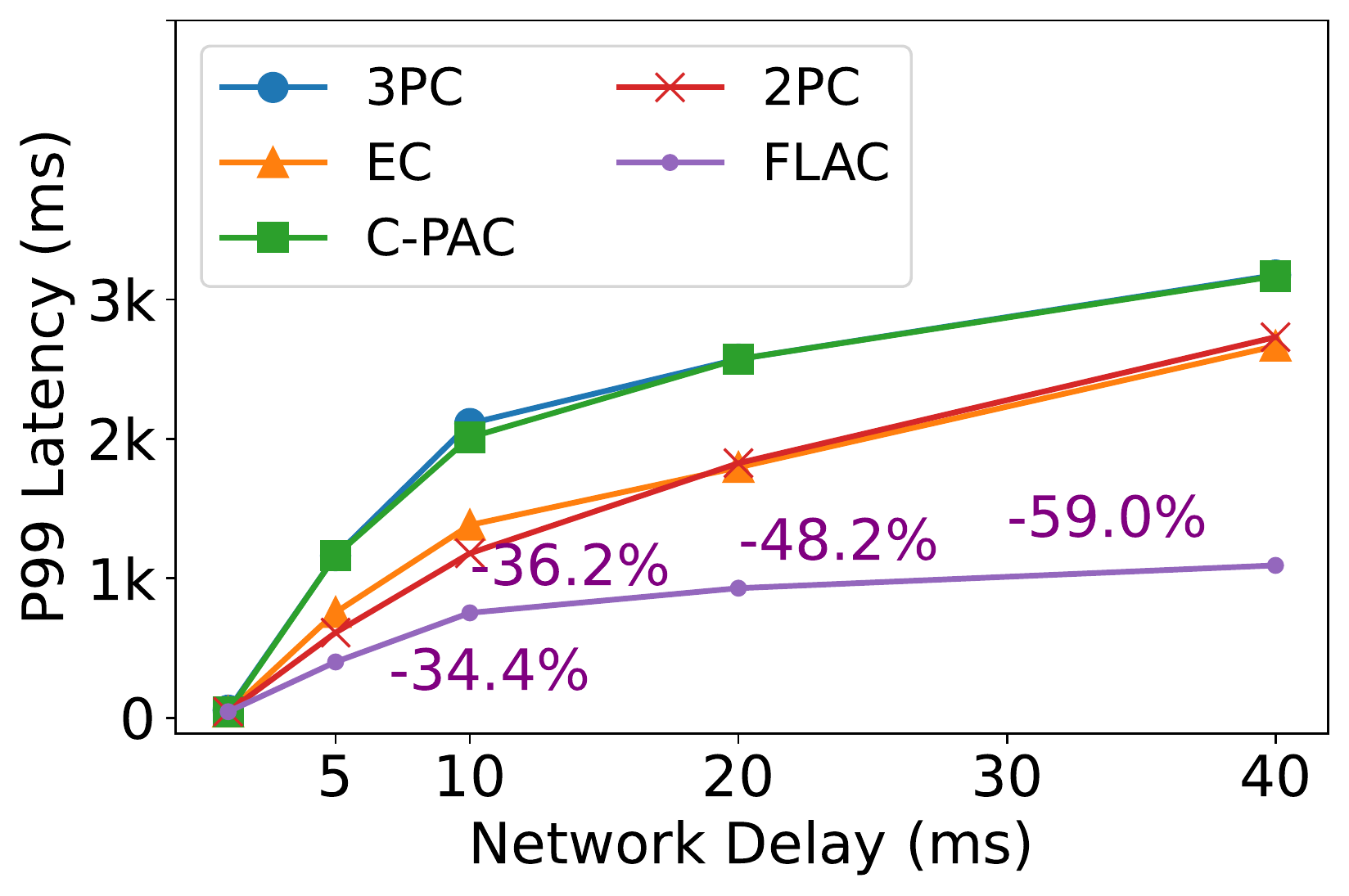}
  \label{fig:delay-la}}
  \vspace{-1em}
  \caption{Performance of all protocols under the YCSB-like
    micro-benchmark with varying network delay.}
%  \trung{Need to specify other settings: client number, skew factor?}
  \label{fig:delay}
\end{figure}

\subsection{Sensitivity Analysis of All Protocols}

In the previous experiments, all protocols were evaluated with only cross-shard
transactions, which is the default setting of the YCSB-like micro-benchmark
(Section \ref{setup}).
However, it is well-known that cross-shard transactions can affect the system
throughput due to its blocking effect on single-shard transactions
\cite{gupta2018easycommit, maiyya2019unifying, chrysanthis1998recovery}.
Therefore, we conduct this experiment to assess the protocols' performance in a
mixed setting where both single-shard and cross-shard transactions can occur,
with the cross-shard transaction percentage ranging from 0\% (all transactions
are single-shard) to 100\% (all transactions are cross-shard).

As can be seen in Figure~\ref{fig:cross},
when all transactions are single-shard, all protocols achieve the same high
throughput and low latency since single-shard transactions are committed directly~\cite{gupta2018easycommit, lu2021epoch}.
When the percentage of cross-shard transactions increases, the throughput of all
the protocols decrease due to the long occupation of resources by such
transactions.
In this experiment, {\flac} can still perform better than all other protocols in
all settings when there are cross-shard transactions.
Again, this result is due to the low latency of the protocol {\flacFF} on
the participant side: it requires only 1 message delay on the participant, while
2PC, 3PC, EC, C-PAC require at least 2 message delays (Table~\ref{tab:cp}).

\subsection{Message Complexity Side-Effects of FLAC}

% \phxedit{
As mentioned in Section~\ref{sec:comparison}, the network bandwidth could become the bottleneck as the number of involved participants increases.
Our experiment in Section~\ref{exp} has shown that {\flac}'s throughput gets close to others as 10 participants are involved in transactions.
To further demonstrate the side effect of increased message complexity,  we let transactions only access one record on each participant and added the number of participants to 15, to amplify the impact of message complexity.
It can be observed that {\flac} and EC both perform better than others before 10 participants per transaction but act slightly worse than 2PC in throughput afterwise (0.97x and 0.90x for {\flac} and EC, respectively).
It presents the side-effect of higher message complexity of two protocols.
Such a problem does not affect {\flac} under most OLTP workloads since the number of participants per transaction is less than 10 in practice~\cite{council2010tpc, gupta2018easycommit}.

\begin{figure}[ht]
  \centering
    \subfigure[All transactions.]{%
    \includegraphics[width=0.48\linewidth]{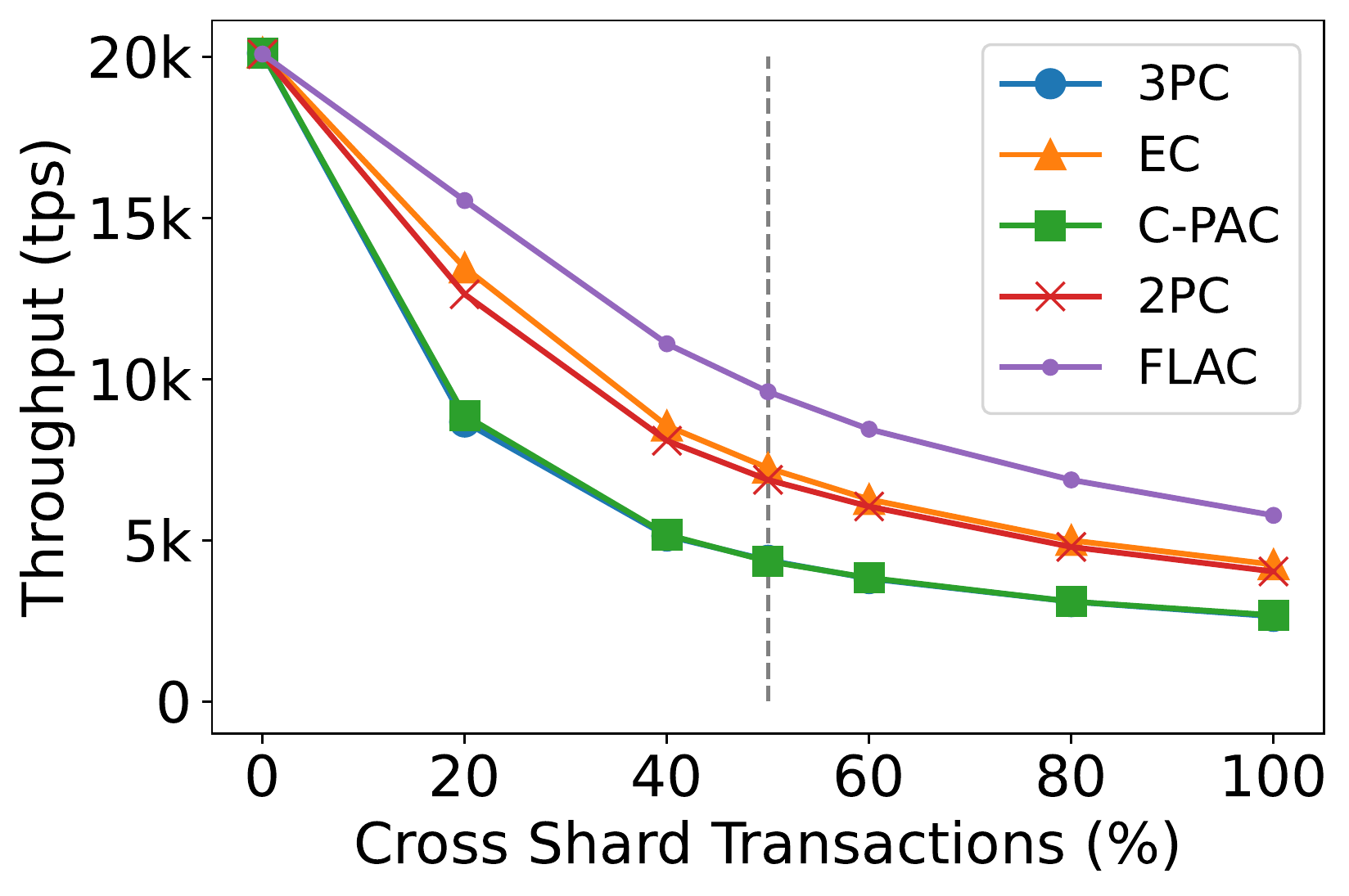}
    \label{fig:cross-th}}
  \subfigure[Only single-shard transactions.]{%
    \includegraphics[width=0.48\linewidth]{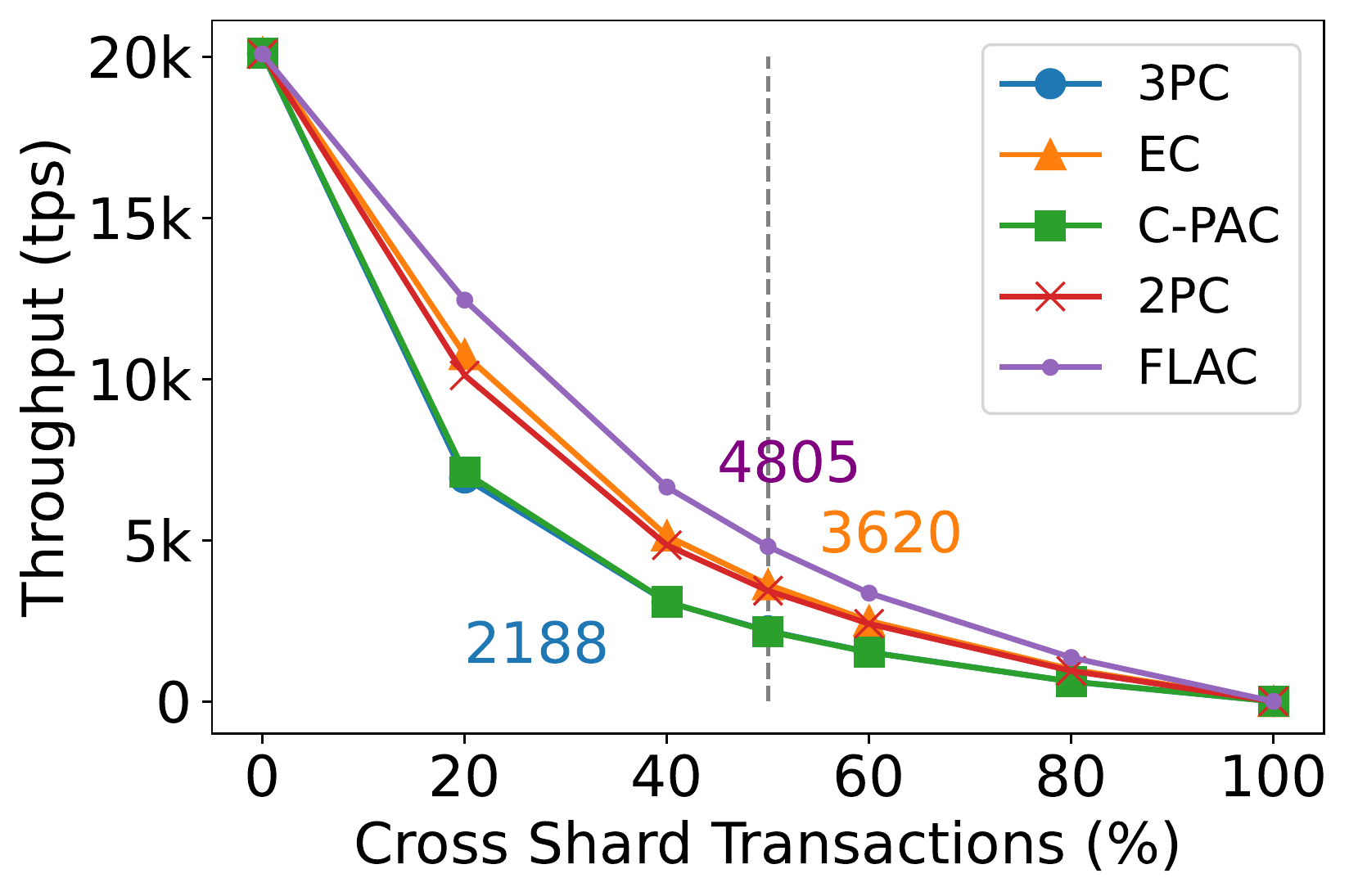}
    \label{fig:cross-only}}
  \caption{Performance under the YCSB-like micro-benchmark, with
    varying percentage of cross-shard transactions.}
  \label{fig:cross}
\end{figure}

\begin{figure}[htbp]
  \centering
    \subfigure[Throughput.]{\includegraphics[width=0.48\linewidth]{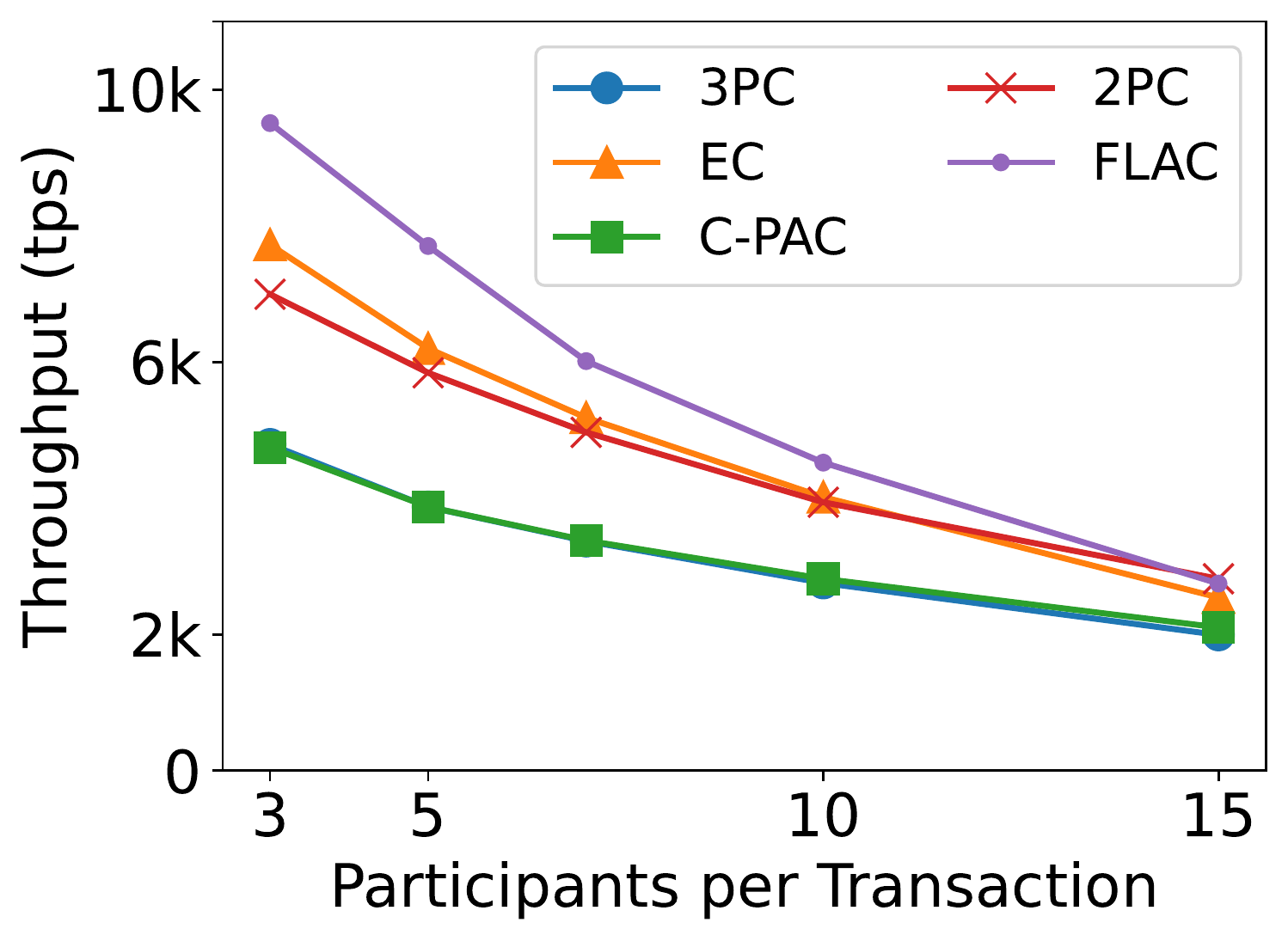}
    \label{fig:scale-more-th}}
  \subfigure[Tail latency.]{%
    \includegraphics[width=0.48\linewidth]{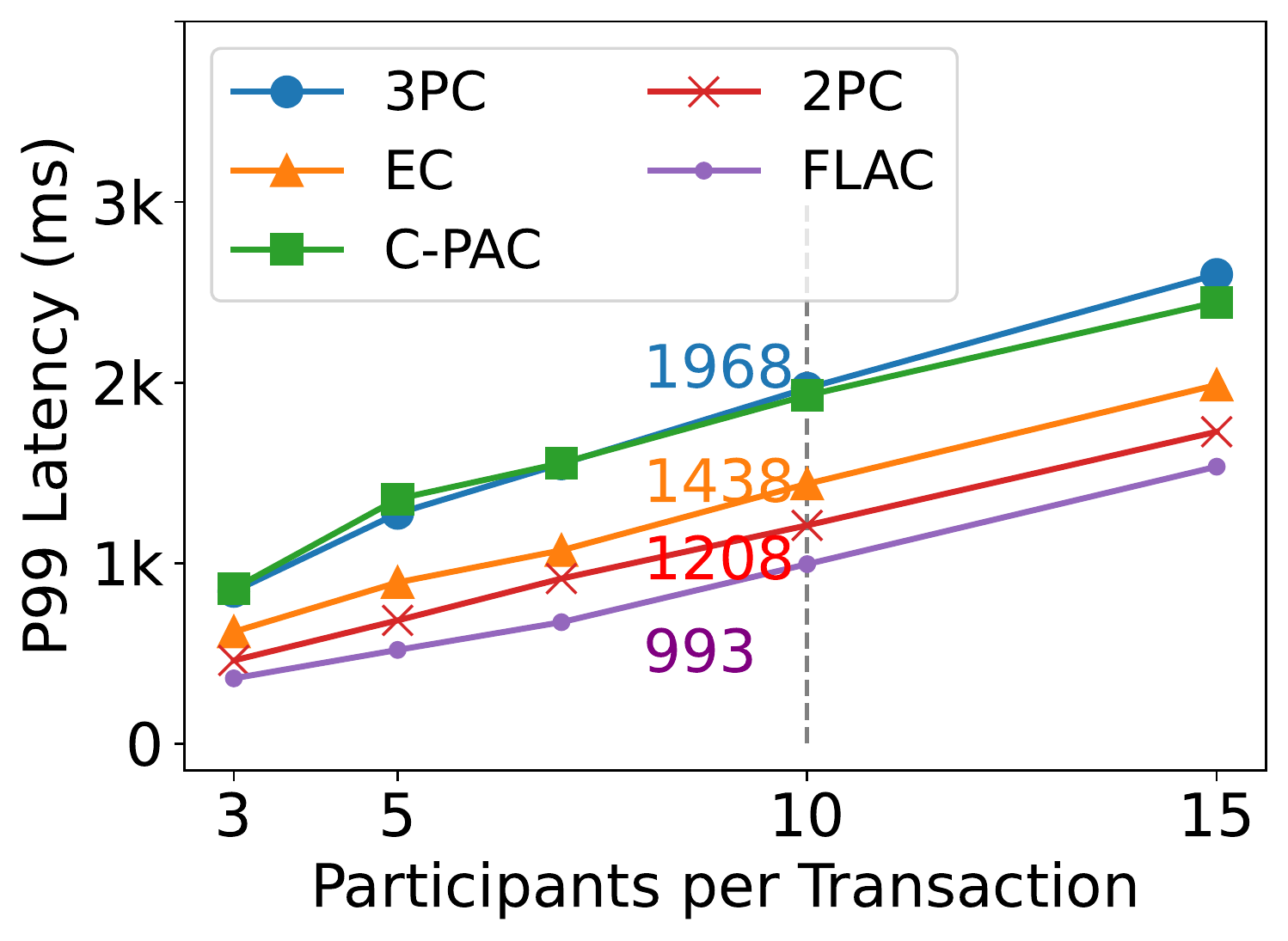}
    \label{fig:scale-more-la}}
  \caption{Performance under the YCSB-like micro-benchmark with
    varying numbers of participants.}
  \label{fig:scale-more}
\end{figure}

\subsection{Comparison between Non-Replicated and Replicated Protocols}
% \phxedit{
This experiment shows a comparison between {\flac} and a state-of-the-art replicated commit protocol G-PAC~\cite{maiyya2019unifying}.
We reuse the replicated setting in G-PAC's original paper~\cite{maiyya2019unifying} to replicate all three shards across three participant nodes.
As Figure~\ref{fig:ycsb-more} shows, G-PAC has 65.3\% lower throughput with 2.68x latency compared to {\flac}.
It reveals that replicated commit protocols like G-PAC achieves higher availability at the cost of normal transaction processing efficiency.
Thus, we design {\flacFF} and {\flacCF} for non-replicated settings and let it only involve one node per replica group when customized for replicated systems (Section~\ref{app:discussion}).

\begin{figure}[htbp]
  \centering
    \subfigure[Throughput.]{\includegraphics[width=0.48\linewidth]{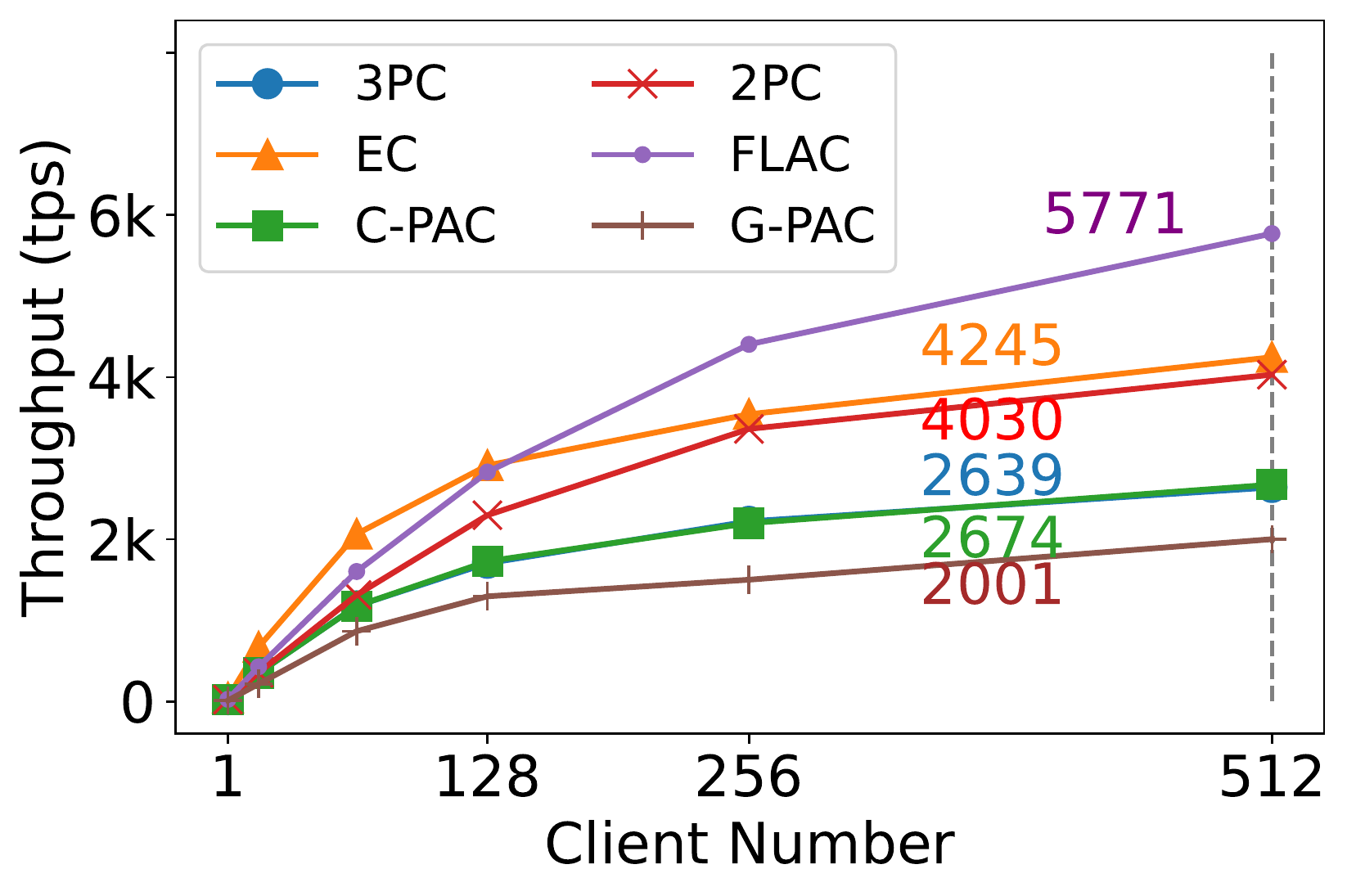}
    \label{fig:ycsb-more-th}}
  \subfigure[Tail latency.]{%
    \includegraphics[width=0.48\linewidth]{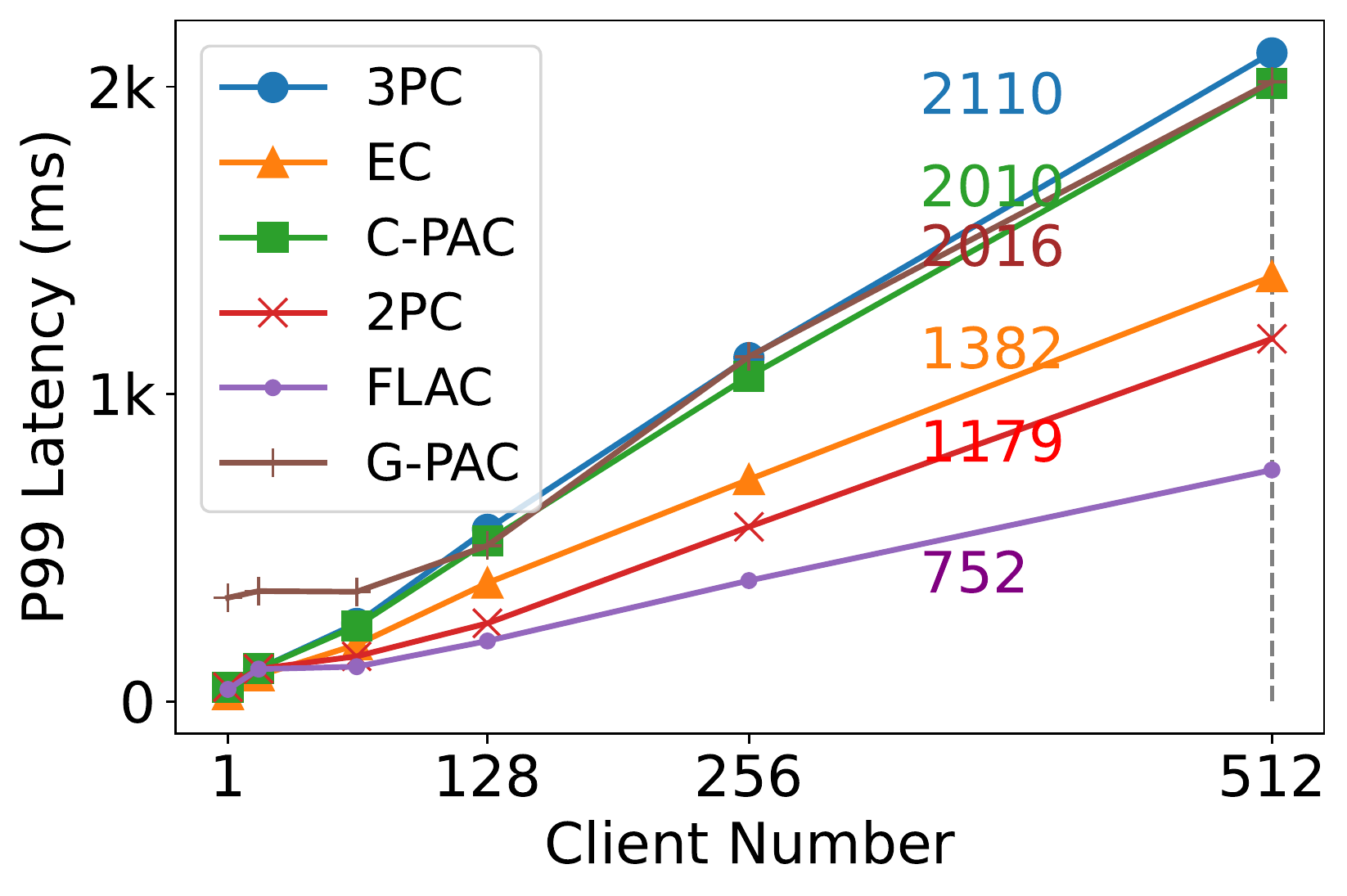}
    \label{fig:ycsb-more-la}}
  \caption{Performance under the YCSB-like micro-benchmark with
    varying numbers of clients.}
  \label{fig:ycsb-more}
\end{figure}

\subsection{Performance of All Protocols on TPC-C Benchmark of Real Workloads}

Figure~\ref{fig:tpc} presents the performance of all protocols on the TPC-C
benchmark of real workload.
In this experiment, {\flac} still achieves the best performance.
In particular, it achieves 1.35x to 2.22x throughput speedup compared with the
other protocols, while its tail latency is only 35.4\% to 65.0\% of them.
Compared to the previous experimental results on the YCSB-like micro-benchmark
(Figure~\ref{fig:ycsb}), all the protocols achieve higher throughput and lower
latency for the TPC-C benchmark.
An explanation for this result is that the TPC-C benchmark contains many
single-shard transactions, which can be quickly processed by all protocols,
unlike the YCSB-like micro-benchmark which contains only cross-shard
transactions, as described earlier in Section \ref{setup}.

\begin{figure}[ht]
  \centering
    \subfigure[Throughput.]{%
    \includegraphics[width=0.48\linewidth]{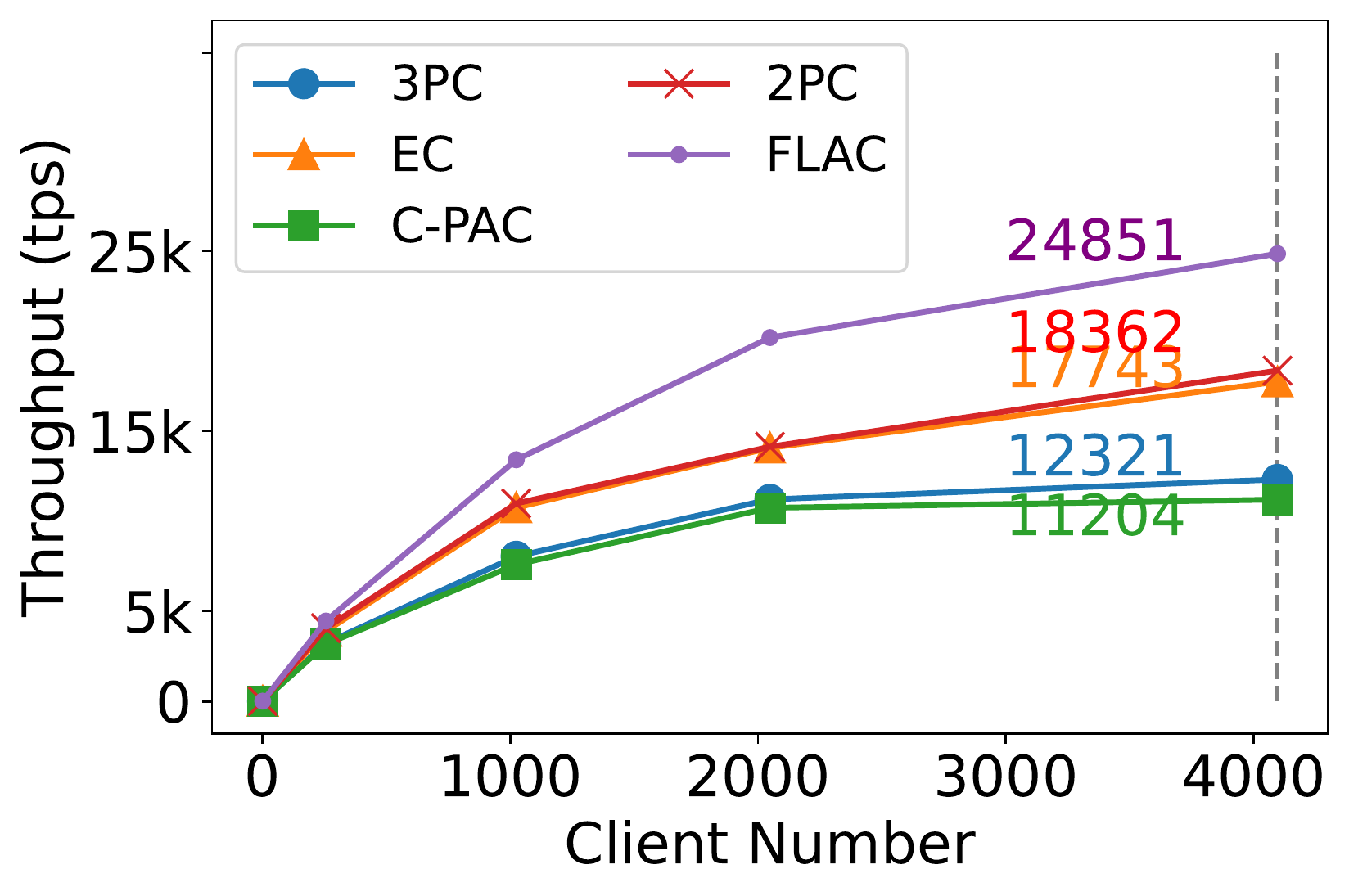}
    \label{fig:tpc-th}}
  \subfigure[Tail latency.]{%
    \includegraphics[width=0.48\linewidth]{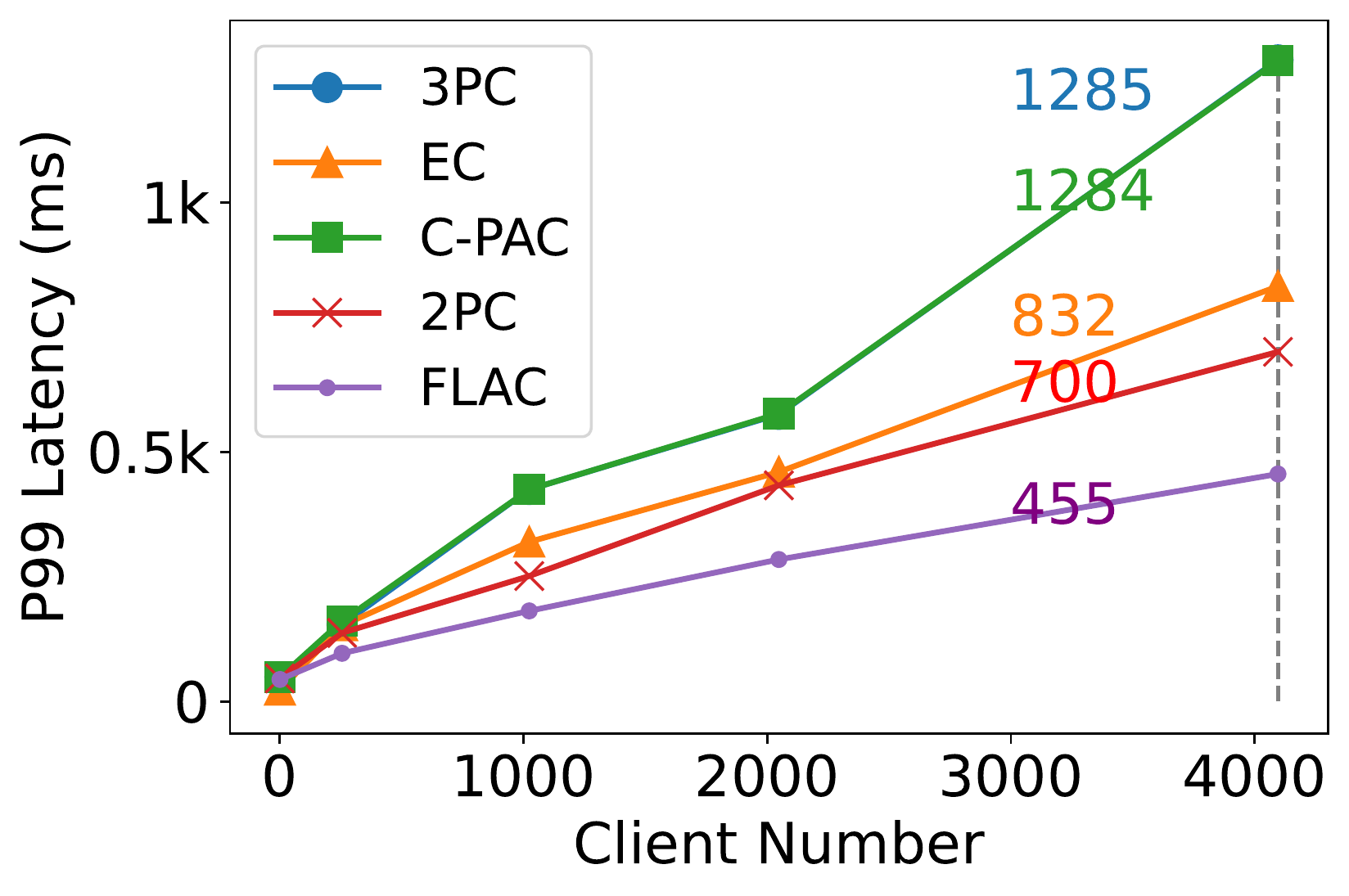}
    \label{fig:tpc-la}}
  \vspace{-1em}
  \caption{Performance of all protocols under TPC-C macro-benchmark,
    with varying numbers of clients.}
  \label{fig:tpc}
\end{figure}

\end{document}